\renewcommand\footnotetextcopyrightpermission[1]{} 
  \providecommand\BibTeX{{%
    \normalfont B\kern-0.5em{\scshape i\kern-0.25em b}\kern-0.8em\TeX}}}
\newcommand{\eg}{\emph{e.g.}}
\newcommand{\ie}{\emph{i.e.}}
\newcommand{\vect}[1]{\mathbf{#1}}
\newcommand{\E}{\mathbb E}
\newcommand{\N}{\mathbb N}
\newcommand{\Prob}{\mathbb P}
\newcommand{\ind}{\mathds{1}}
\newcommand{\abs}{\hbox{abs}}
\newtheorem{remark}{Remark}[section]
\begin{document}

\title{On the Analysis of a Multipartite Entanglement Distribution Switch}

\author{Philippe Nain}
\affiliation{%
  \institution{Inria}
  \country{France}}
  \email{philippe.nain@inria.fr}

\author{Gayane Vardoyan}
\affiliation{%
  \institution{University of Massachusetts}
\city{Amherst}
  \country{USA}
}
\email{gvardoyan@cs.umass.edu}

\author{Saikat Guha}
\affiliation{%
  \institution{University of Arizona}
  \city{Tucson}
  \country{USA}}
  \email{saikat@optics.arizona.edu}

\author{Don Towsley}
\affiliation{%
 \institution{University of Massachusetts}
 \city{Amherst}
 \country{USA}}
 \email{towsley@cs.umass.edu}

\renewcommand{\shortauthors}{PHILIPPE NAIN et al.}

\begin{abstract}
 We study a quantum switch that distributes maximally entangled multipartite states to sets of users. The entanglement switching process requires two steps: first, each user attempts to generate bipartite entanglement between itself and the switch; and second, the switch performs local operations and a measurement to create multipartite entanglement for a set of users. In this work, we study a simple variant of this system, wherein the switch has infinite memory and the links that connect the users to the switch are identical. Further, we assume that all quantum states, if generated successfully, have perfect fidelity and that decoherence is negligible. This problem formulation is of interest to several distributed quantum applications, while the technical aspects of this work result in new contributions within queueing theory. 
Via extensive use of Lyapunov functions, we derive necessary and sufficient conditions for the stability of the system and closed-form expressions for the switch capacity and the expected number of qubits in memory.
\end{abstract}

\begin{CCSXML}
<ccs2012>
<concept>
<concept_id>10003033.10003079</concept_id>
<concept_desc>Networks~Network performance evaluation</concept_desc>
<concept_significance>500</concept_significance>
</concept>
<concept>
<concept_id>10010583.10010786.10010813.10011726.10011727</concept_id>
<concept_desc>Hardware~Quantum communication and cryptography</concept_desc>
<concept_significance>500</concept_significance>
</concept>
</ccs2012>
\end{CCSXML}

\ccsdesc[500]{Networks~Network performance evaluation}
\ccsdesc[500]{Hardware~Quantum communication and cryptography}

\keywords{quantum switch, entanglement distribution, Markov chain.}

\maketitle

\section{Introduction}
\label{sec:introduction}
The advent of quantum communications brings forth a number of advances that are not accessible via classical communication technology. A striking example is quantum key distribution (QKD) to achieve information-theoretic security, \eg, using well-known quantum-cryptographic protocols such as BB84 \cite{bennett2014quantum} or E91 \cite{ekert1991quantum}. These protocols allow two parties to generate a ``raw'' key, which, after classical error correction on a public authenticated channel (\ie, information reconciliation) and privacy amplification, is distilled into a secret key that can be used as a one-time pad. Some companies, including MagiQ Technologies, Inc. and ID Quantique, offer QKD for commercial and government use; the latter deployed such a system to protect the Geneva state elections in Switzerland in 2007 \cite{marks2007quantum}. 
The original QKD schemes were designed for two trusted parties to be able to share a secret key, but generalizations of these protocols -- which allow $n\geq 3$ trusted parties to securely generate a shared key -- have been proposed, \eg, \cite{grasselli2018finite}, and are known as multiparty QKD.

Point-to-point QKD using prepare and measure schemes, such as single-photon and decoy-state BB84 and CV-QKD protocols suffer from an exponential rate-versus-distance decay \cite{takeoka2014fundamental, pirandola2017fundamental}. The only way to get around this direct-transmission rate-distance tradeoff is to use entanglement-based QKD protocols, where the network delivers entanglement as a pre-shared resource between communicating parties, who then convert it to a quantum-secure shared secret (key) using protocols like BBM92 \cite{bennett1992quantum} and E91. Further, entanglement generation -- if supported by quantum repeaters along the length of the network path connecting the communicating parties -- is not limited by the exponential rate-vs.-distance tradeoff \cite{guha2015rate, pant2017rate, muralidharan2016optimal}. Besides QKD, there are added benefits to the network delivering entanglement as the base resource since entanglement has other uses such as distributed quantum computing \cite{jiang2007distributed,broadbent2009universal} and sensing \cite{komar2014quantum,eldredge2018optimal}, for example. Some of these applications require multiparty entanglement as the enabling resource  -- the main focus of our paper.
It is prudent therefore to model and analyze quantum networks for entanglement distribution, with the goal of deriving guiding principles that may help practitioners make informed decisions toward efficient network and device operation and design. As a step toward this objective, we model and analyze an entanglement distribution switch -- a quantum device that will serve as a fundamental component of quantum networks.

\begin{wrapfigure}{r}{0.5\textwidth}
\centering
\begin{subfigure}[]{0.24\linewidth}
\centering
\includegraphics[width=\linewidth]{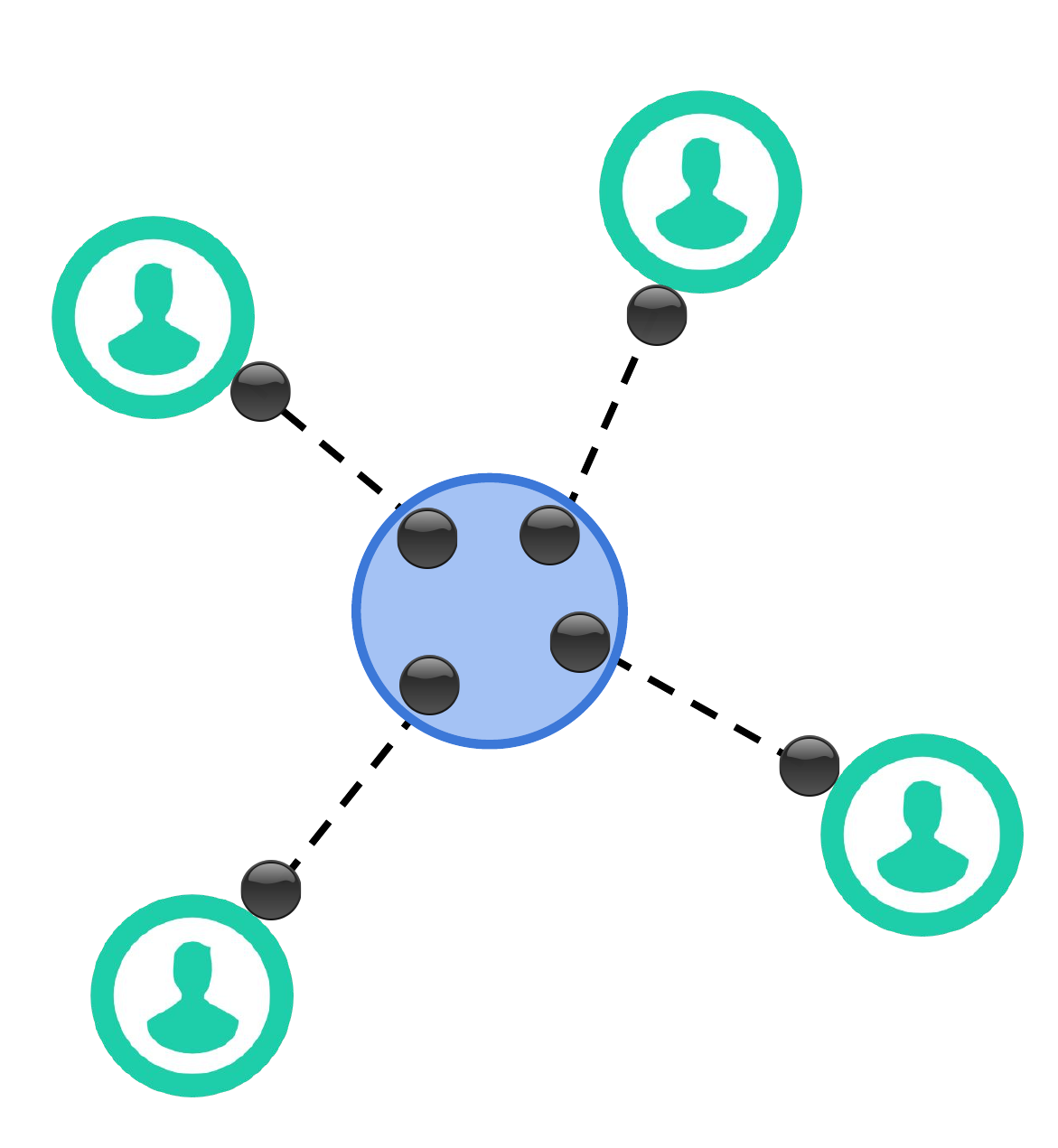}
\caption{no parts}
\label{fig:sc0}
\end{subfigure}
\hfill
\begin{subfigure}[]{0.24\linewidth}
\centering
\includegraphics[width=\linewidth]{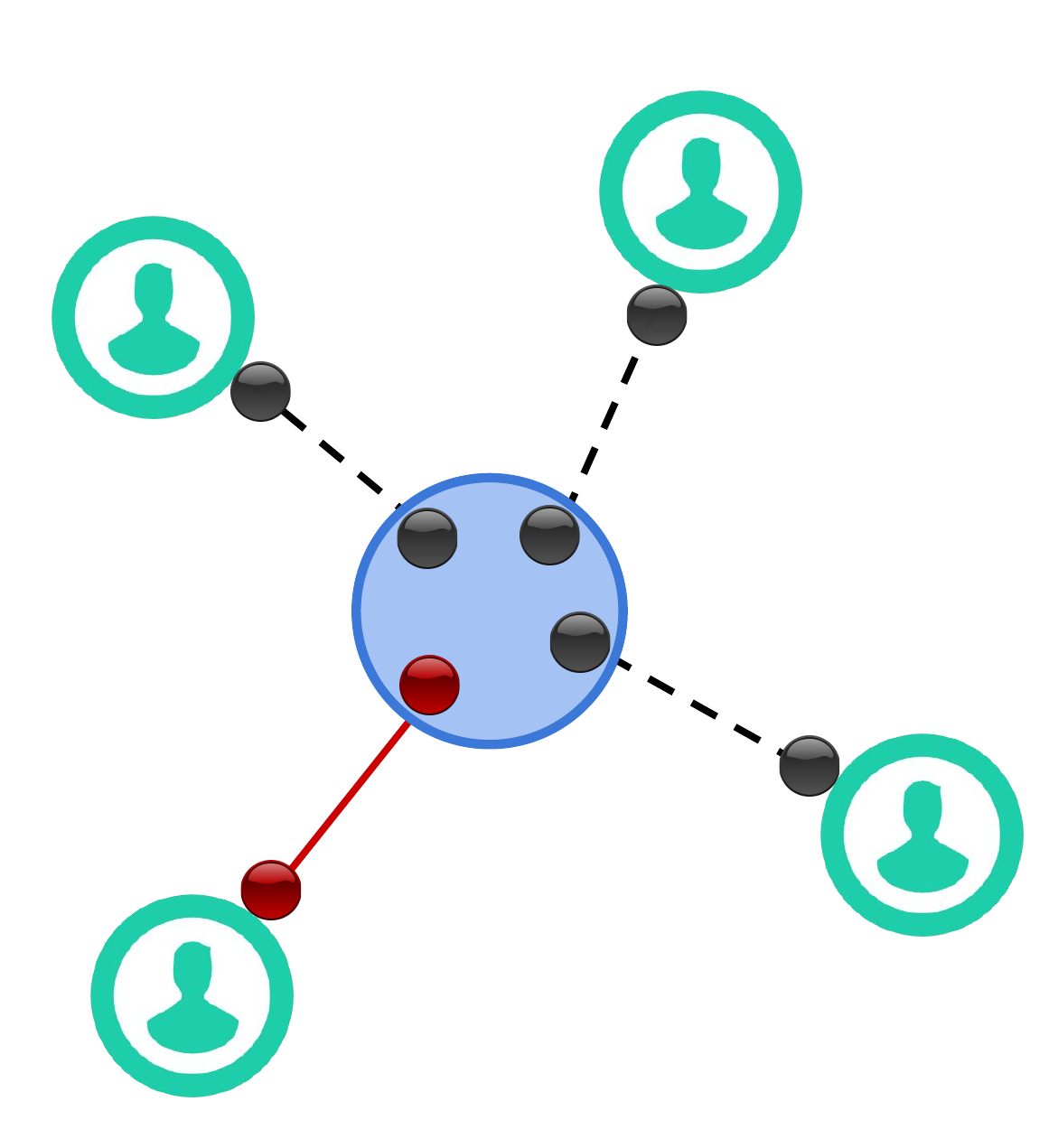}
\caption{one part}
\label{fig:sc1}
\end{subfigure}
\hfill
\begin{subfigure}[]{0.24\linewidth}
\centering
\includegraphics[width=\linewidth]{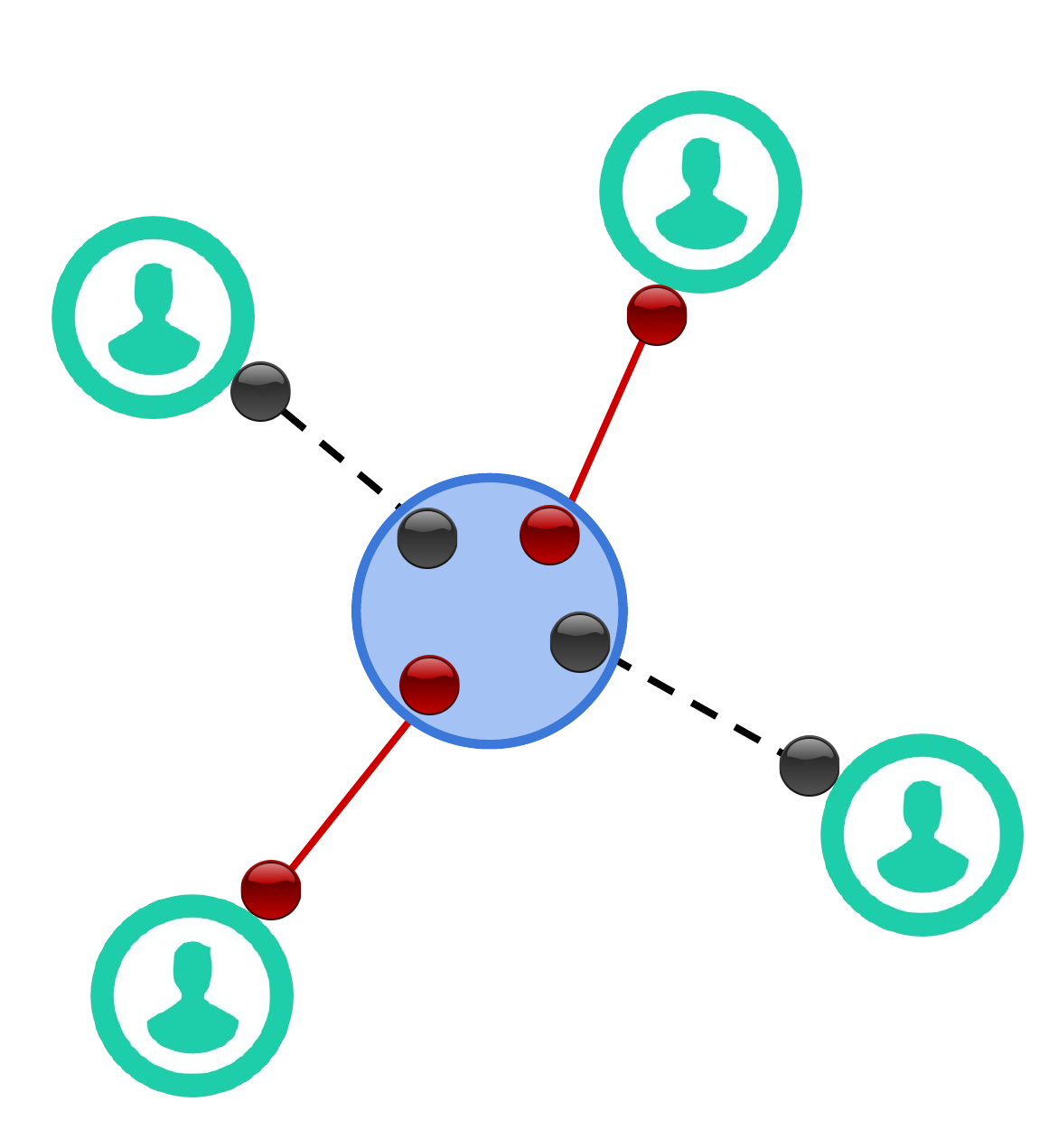}
\caption{two parts}
\label{fig:sc2}
\end{subfigure}
\hfill
\begin{subfigure}[]{0.24\linewidth}
\centering
\includegraphics[width=\linewidth]{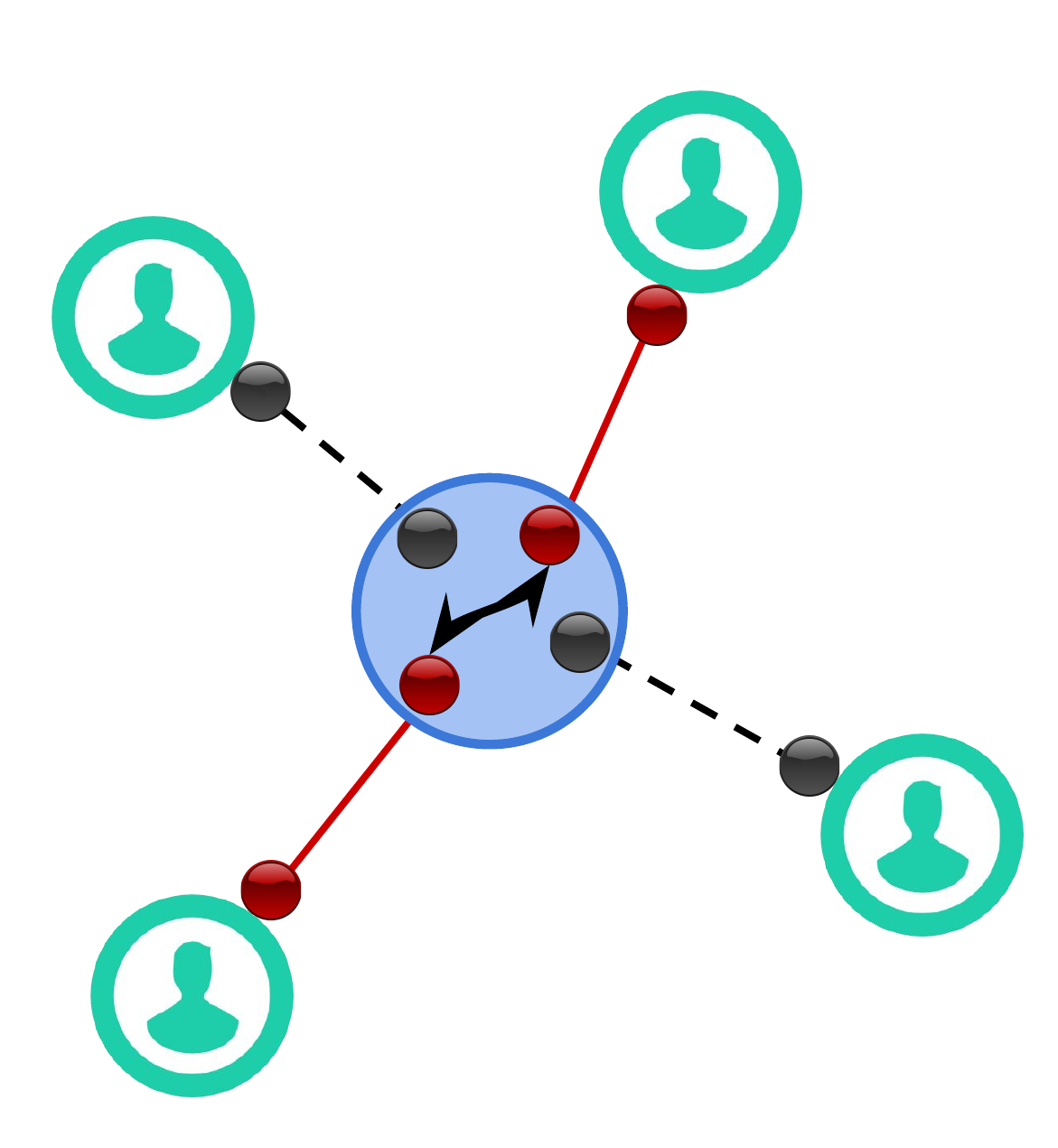}
\caption{assembly}
\label{fig:sc3}
\end{subfigure}
\caption{Each client has a dedicated link to the server. When enough parts arrive, as long as they belong to distinct clients, the server assembles them and they leave the system.}
\label{fig:sc}
\end{wrapfigure}
Before we delve into specifics of quantum switch operation, let us first introduce the problem in abstract, mathematical terms. As we shall see, while the problem considered in this work was initially motivated by its application to entanglement switching, its relevance reaches far beyond this example, and it is of interest to queueing theory in general. Consider a server and $k$ clients in a star topology, as shown in Figure \ref{fig:sc}. Each client has a dedicated link to the server; one may think of these links either as communication channels or physical paths that may be used for object delivery. \emph{E.g.}, in the former scenario, the clients may send data packets to the server for processing, and in the latter, the central node may be an assembly plant that receives components of a product from geographically distant manufacturing plants. Assume that all links are identical, and the server has infinite storage. As soon as $n$ components (or parts) from $n$ distinct clients arrive at the central node, they are processed (or assembled) and immediately leave the system. An example, with $k=4$ and $n=2$, is shown in Panels \ref{fig:sc0}-\ref{fig:sc3}. Assume that the processing/assembly step is instantaneous, but succeeds with probability $q$ (the fact that the assembly step can fail is important in the context of a quantum switch -- see Section \ref{sec:background}). Finally, assume that the arrival process on each link is a Poisson process with a constant rate $\mu$.

Given the last assumption above, we may model the system using a continuous-time Markov chain (CTMC). We will obtain, in closed form, $(i)$ stability conditions, $(ii)$ the maximum achievable assembly rate, $(iii)$ the expected number of items in storage in steady state, when the stationary distribution exists.

When ${n=2}$, the chain is a standard, one-dimensional birth-death process, and one may easily obtain its stationary distribution, provided it is stable. 
The problem becomes significantly more difficult to analyze when $n\geq 3$; 
 in particular, the stability analysis requires the introduction of a Markov chain embedded in the original Markov chain. Via a careful choice of a Lyapunov function and application of the Foster Theorem \cite{foster1953stochastic}, we derive the ergodicity condition of the embedded Markov chain, yielding the ergodicity condition of the original one.
Following is a summary of the results:
\begin{itemize}[leftmargin=*]
\item[-] when the system is stable, the capacity, defined as the maximum achievable number of successful assemblies per time unit, is given by (Section \ref{subsec:capacity})
\begin{align*} 
\frac{q\mu k}{n};
\end{align*}
\item[-] when the system is stable, the expected number of  
items stored at the central node
is given by (Section \ref{subsec:eq})
\begin{align*} 
\frac{k(n-1)}{2(k-n)};
\end{align*}
\item[-] the system is stable if and only if $k>n$ (Section \ref{subsec:stability}).
\end{itemize}

The rest of the paper is organized as follows: in Section \ref{sec:background}, we provide necessary background for quantum communication, its applications, and main challenges. We also describe the operation of  quantum repeaters and switches, and discuss the significance of these devices, as well as of the protocol introduced in this section, to quantum communication. In Section \ref{sec:relwork}, we discuss related work: both from the quantum and the queueing theory perspectives. In Section \ref{sec:analysis}, we introduce the system and construct a CTMC. In Sections \ref{subsec:capacity}, \ref{subsec:eq}, and \ref{subsec:stability}, we analyze the system capacity, expected memory occupancy, and stability, respectively. In Section \ref{subsec:finEQ}, we prove an important property: that when the system is stable, the expected memory occupancy is finite. 
Section \ref{sec:conclusion} concludes the paper.

\section{Background}
\label{sec:background}
A qubit is the quantum analogue of a bit and is represented either by a two-level quantum-mechanical system (\eg, electron spin or photon polarization) or a pseudo two-level system (\eg, the ground and first excited states of an atom). Two or more qubits are said to be entangled when the state of one cannot be described independently from the state of the other(s). In this work, all quantum states, if generated successfully, are assumed to have unit fidelity\footnote{Fidelity $F(\rho,\sigma)\in[0,1]$ is a measure of ``closeness'' between two quantum systems with density operators $\rho$ and $\sigma$. Unit fidelity is achieved when $\rho=\sigma$.} to the corresponding ideal pure state.

Distributed quantum tasks, including those that require two or more parties to share an entangled state, typically require qubit transmission between the participants. 
For optical fiber, channel transmissivity is given by $\eta =e^{-\alpha L}$, where $L$ is the length of the link and $\alpha$ the fiber attenuation coefficient. Transmissivity can be interpreted as the fraction of transmitted power that reaches the end of the fiber; for a single photon, this corresponds to the probability that the photon successfully reaches the receiver\footnote{Transmission through free space poses its own challenges, such a photon loss and phase changes due to scattering \cite{Van_Meter2014-vz}.}. In other words, the likelihood of losing a quantum state in transit grows exponentially with distance. At the same time, the no-cloning theorem \cite{wootters1982single} prevents one from making an independent copy of an unknown state, thereby rendering losses irrecoverable. Hence, safe transmission of a quantum state across a large distance is a major challenge in quantum communication. 

One method of overcoming the limited-distance problem  is through the use of quantum repeaters \cite{briegel1998quantum} coupled with the process of teleportation \cite{bennett1993teleporting}.
For example, consider a quantum protocol that requires two parties, $A$ and $B$, to share a two-qubit entangled state. A quantum repeater, $R$, is positioned between the two users, and as a first step the device attempts to generate entangled links between itself and the users; we call these ``link-level entanglements''. Assume that there is at least one entangled photon pair source (which may be colocated with $R$) that assists with this task, by generating Bell pairs\footnote{Bell pairs are maximally-entangled two-qubit states, \eg, $(\ket{00}+\ket{11})/\sqrt{2}$.} and distributing one half of a pair to a user and the other half to $R$. Once both entangled pairs are successfully distributed, the repeater performs an entanglement swapping operation \cite{zukowski1993event}, as a result of which $A$'s and $B$'s qubits become entangled (despite not having directly interacted anytime in the past); we call this new entanglement an ``end-to-end entanglement'' between $A$ and $B$. The process of extending the range of entanglement using quantum repeaters is what makes quantum distributed applications, such as QKD, feasible over large distances.

Entanglement swapping can be generalized to create multipartite end-to-end entangled states between three or more users \cite{zukowski1995entangling,bruss1997quantum,bose1998multiparticle}, \eg, using a collection of Bell pairs.
In this work, we consider the generation of maximally-entangled Greenberger-Horne-Zeilinger (GHZ) states of $n$ qubits \cite{greenberger1989going}, \eg, states of the form
${(\ket{0}^{\otimes n}+\ket{1}^{\otimes n})/\sqrt{2}}$, where ${n\geq 3}$. These states have a wide variety of uses in distributed quantum computation, information, and communication; examples include quantum secret sharing \cite{hillery1999quantum,xiao2004efficient,tittel2001experimental}, multiparty QKD (also known as NQKD or cryptographic conferencing) \cite{chen2004multi,epping2017multi,grasselli2018finite}, quantum sensing \cite{eldredge2018optimal,schaffry2010quantum}, and multipartite generalization of superdense coding \cite{bose1998multiparticle,hao2001controlled}. Entanglement swapping protocols typically involve two-qubit gates and projective measurements (\eg, Bell-state measurements or GHZ state projections), and certain operations succeed only probabilistically, which is especially true in linear optics \cite{ewert20143,grice2011arbitrarily,helmer2009measurement}. To account for this phenomenon in our analysis, we introduce a parameter $q$ which corresponds to the successful entanglement swapping probability (analogous to the ``assembly'' success probability discussed in Section \ref{sec:introduction}).

Throughout this work, we use the term ``quantum switch'' as opposed to ``quantum repeater'' because the former will be equipped with entanglement switching logic. We also assume that our quantum switch will be equipped with an infinite number of quantum memories. All quantum memories are assumed to have infinite coherence times. Suppose that the switch serves $n$-partite entangled states to $k\geq n$ users in a star topology, with each user having a dedicated link connecting it to the switch. Each link continuously attempts to generate maximally entangled bipartite (Bell) states between the corresponding user and the switch. Qubits from newly-generated Bell pairs are stored: one qubit at the user and the other at the switch. When $n$ Bell pairs are successfully generated across $n$ distinct links, the switch performs entanglement swapping to create an $n$-partite GHZ state. We would like to obtain closed-form expressions for the expected number of stored qubits at the switch as well as the capacity of the switch, expressed in the maximum possible number of swapping operations performed per unit time, or equivalently, the maximum rate of entanglement switching. 
To determine the latter, we must assume that any combination of $n$ users wish to share an entangled state, as to allow the switch to combine any $n$ Bell pairs, as long as they all belong to distinct users, without any restrictions. 
In fact, removing this assumption would necessarily lower the rate at which the switch serves end-to-end entangled states, and the final expression would no longer correspond to its \emph{capacity}.
Further, since the switch has infinite memory, conditions for stability of this system are also of interest. Finally, we assume that entanglement generation at the link level is a Poisson process, with $\mu$ being the successful entanglement generation rate. Under these assumptions, it becomes clear that the entanglement switching policy described here is mathematically equivalent to the assembly-like process described in Section \ref{sec:introduction}.
\section{Related Work}
\label{sec:relwork}
The problem introduced in this paper can be viewed as a type of stochastic ``assembly-like queue'' \cite{harrison1973assembly,hopp1989bounds,bonomi1987approximate,lipper1986assembly,bhat1986finite} or a ``kitting process'' \cite{som1994kitting,de2011performance,ramachandran2005performance}. The aforementioned references focus on problems that are somewhat similar to our problem, yet differ significantly enough -- in formulation, analysis, or both -- that the latter warrants an independent analysis. Specifically, in some of these works there are assumptions (\eg, finite buffer capacity, limited and fixed number of input streams, restrictions on the assembly process or service time distribution, etc.) that prohibit us from extending their analyses to our problem, especially when it comes to the question of stability for our system, which has unlimited buffer. Meanwhile, other studies present only approximate analyses or bounds. In contrast, our goal is to obtain closed-form expressions for throughput and expected number of items in storage at the central node.

Recall that we choose a CTMC to model our system. In the next section, we will see that this Markov chain 
has a countably infinite multidimensional state space due to the limitless buffer criterion.
The ergodicity of multidimensional Markov chains, even those with simple structure and transition probabilities, is often difficult to prove due to a lack of analytical tools in this domain. 
An exception concerns a family of two-dimensional random walks, for which necessary and sufficient stability conditions exist \cite[Theorem 1.2.1]{fayolle1999random}, \cite{FaMaMe-1995}.
One of the most well-known and powerful tools to determine whether an irreducible Markov chain is positive recurrent is the Foster-Lyapunov Criterion. Unfortunately, this criterion cannot directly be applied to our Markov chain due to the requirement that the drift must be negative on all but a finite subset of the state space. 
We encounter similar problems when attempting to apply extensions of Foster's Theorem to higher-dimensional chains, \eg, as in \cite{szpankowski1988stability,rosberg1980positive,kompalli2009generalized}.
For many multidimensional Markov chains, including ours, there are an infinite number of ``boundary'' states for which the drift is positive, regardless of the choice of Lyapunov function. Because of this, we 
consolidate a number of different
techniques to obtain necessary and sufficient conditions for our chain's ergodicity.

In \cite{vardoyan2019capacityarXiv}, we analyze a quantum switch that serves \emph{either} bipartite \emph{or} tripartite entangled states to sets of users. The analyses are restricted to the identical-link case and buffer sizes of one and two per link. A set of randomized switching policies are explored, together comprising an achievable capacity region. In \cite{vardoyan2019stochasticarXiv}, we use Markov chains to analyze an entanglement switch that serves only bipartite end-to-end entangled states to users by performing Bell-state measurements. In addition to swapping failures, the model also accounts for finite buffer sizes, potentially non-identical links (implying different entanglement generation rates for different links), and incorporates the effects of state decoherence. While these extensions to the identical-link, infinite-buffer, no-decoherence variant of the problem introduce complexity into the models, they do not preclude us from deriving closed-form expressions for the switch capacity, the expected number of stored qubits, and even the stationary distribution, when it exists. In general, the bipartite switching case is significantly easier to analyze using CTMCs than even the simplest variant of the $n$-partite switching system, for $n\geq 3$. Nevertheless, the results obtained in this work serve a useful purpose: the capacity of the $n$-partite quantum switch derived here is an upper-bound to that of a more realistic system, where the links are non-identical, quantum memory is finite, and where quantum states are susceptible to decoherence. It is easy to see why the latter two properties would decrease the capacity of the system, \ie, restricted storage space and expiration of resources  cannot increase capacity. For the case of non-identical links, one may derive an upper bound on the capacity by transforming the system into an identical-link one, where all links generate entanglement at rate $\mu_{\max}$ -- the rate at which the fastest link in the original system successfully generates entanglement. This yields an upper bound on the capacity given by $qk\mu_{\max}/n$.
\section{The Model \& Preliminaries}
\label{sec:analysis}
\subsection{The Model}
In this section, we formally introduce the model and construct a CTMC that serves as the basis of all our analysis. First,
some notes on notation: $\N$ is the set of nonnegative integers and $\mathbb{R}$ is the set of reals.
In this paper, boldface is reserved for vectors (and sometimes, matrices).
$\vect{0}$ is the row vector of dimension $n-1$ with all-zero entries, $\vect{1}$ is the row vector of dimension $n-1$ that has all entries equal to $1$, and $\vect{e}_i$ is the row vector of dimension $n-1$ that has all entries equal to $0$ except the $i$th entry that is equal to $1$. $x_i$ refers to the $i$th entry of vector $\vect{x}$.  
The expression $\vect{x}>\vect{y}$ refers to an element-wise comparison, \ie, $x_i>y_i$, $\forall ~i$. In this paper, all vectors are nonnegative so that for a $J$-dimensional vector $\vect{x}$, $|\vect{x}|\equiv x_1+\dots+x_{J}$.
$\ind_A$ denotes the indicator function of the event $A$, with $\ind_A=1$ if $A$ is true and $\ind_A=0$ otherwise. Throughout the analysis, $3\leq n\leq k$.

Recall our assumption that any combination of $n$ users wish to communicate, \ie, the switch performs entanglement swapping as soon as there are $n$ entangled Bell pairs belonging to $n$ distinct links. This, coupled with the observation that all links are identical in that they all generate Bell pairs at the same rate of $\mu$, allows us to represent the state of the system with an $(n-1)$-dimensional vector $\vect{x}=(x_1~ x_2~ \dots~x_{n-1})$ indicating the number of available entangled pairs (or equivalently, the number of stored qubits at the switch) belonging to different links. Note that it is possible for some entries in $\vect{x}$ to equal 0, in cases where fewer than $n-1$ links have Bell pairs. Thus, we may construct a CTMC with state space $R\coloneqq \{(x_1~\dots ~x_{n-1}): x_1\geq 0,\dots,x_{n-1}\geq 0 \}$, which can be partitioned as follows:
\begin{align}
\label{eq:partition}
R = R_0 \cup R_1\cup \cdots \cup R_{n-1},
\end{align}
where $R_j$ contains the set of states $\vect{x}$ such that $j$ entries are $0$ and ${n-1-j}$ entries are $\geq 1$. Note that $R_0 = \{(x_1~\dots~x_{n-1}): x_1\geq 1,\dots,x_{n-1}\geq1\}$ and $R_{n-1}=\vect{0}$.

Following, we describe the non-zero rates for this chain. First, note that for any state $\vect{x}$ in $R_0$, there are $n-1$ links with available Bell pairs. Once the $n$th link generates a Bell pair, the switch performs a swap and the chain transitions to state $\vect{x}-\vect{1}$, regardless of whether the swap was successful or not. This transition occurs with rate $(k-(n-1))\mu$. Next, for any state $\vect{x}\in R\setminus R_{n-1}$, a link $l$ with already-stored Bell pairs may generate another one. Then, the chain transitions to state $\vect{x}+\vect{e}_l$, with rate $\mu$.
On the other hand, a state $\vect{x}\in R_j$, for $j\in\{1,\dots,n-1\}$ may generate a Bell pair on a link without any available pairs, and since there are $k-(n-1-j)$ links with no stored Bell pairs, this event occurs with rate $(k-(n-1-j))\mu$. However, we must divide this aggregate transition rate by an appropriate quantity to account for symmetries within the Markov chain. To illustrate this point, consider state $\vect{0}$: when a link without entangled pairs generates a Bell pair -- an event that occurs with rate $k\mu$ in this state -- the chain transitions to a state with all-zero entries, except for one entry that equals one. When the chain is stable, all of these states have equal steady-state probabilities. There are $n-1$ such states $\vect{e}_l$, $l\in\{1,\dots,n-1\}$, so the individual transition rate from $\vect{0}$ to one of these states is $k\mu/(n-1)$. In general, the chain transitions from $\vect{x}\in R_j$ to state $\vect{x}+\vect{e}_l$, where $l$ is such that $x_l=0$, with rate $(k-(n-1-j))\mu/j$.

\subsection{Preliminaries}
Our objective is to derive closed-form expressions for the switch capacity as defined in Section \ref{sec:background} and the expected number of stored qubits at the switch in steady state. To derive these performance metrics, we will first uniformize the CTMC introduced above, by sampling it according to a Poisson process with constant rate $k\mu$. This yields a discrete-time Markov Chain (DTMC) to which we will apply Foster's Theorem. Before we introduce the DTMC, we present Foster's Theorem \cite{Bremaud99} below for ease of reference.
\begin{theorem}[Foster's Theorem]
Let the transition matrix $\vect{P}$ on the countable state space $E$ be irreducible and suppose that there exists a function $h: E \to \mathbb{R}$ such that $\inf_i h(i)> -\infty$ and
\begin{align*}
&\sum\limits_{k\in E} p_{ik}h(k) < \infty \text{ for all }i\in F,\\
&\sum\limits_{k\in E}p_{ik}h(k) \leq h(i)-\epsilon \text{ for all }i\notin F,
\end{align*}
for some finite set $F$ and some $\epsilon >0$. Then the corresponding homogeneous Markov chain is positive recurrent.
\end{theorem}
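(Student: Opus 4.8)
The plan is the classical Lyapunov/supermartingale route: reduce positive recurrence to a mean‑return‑time estimate, and obtain that estimate from $h$. First I would normalize: since $m:=\inf_i h(i)>-\infty$, replace $h$ by $h-m$, which leaves both hypotheses intact (adding a constant changes neither the finiteness of $\sum_k p_{ik}h(k)$ nor the drift inequality) and makes $h\ge 0$ throughout. Let $(X_n)_{n\ge 0}$ be the chain with matrix $\vect P$, let $(\mathcal F_n)$ be its natural filtration, and define the hitting time $\tau:=\inf\{n\ge 0: X_n\in F\}$ and the return time $\tau^{+}:=\inf\{n\ge 1: X_n\in F\}$. The goal is to exhibit a single state with finite mean return time under the corresponding $\Prob_i$: positive recurrence of that state, together with irreducibility, then gives positive recurrence of the whole chain. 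I would reach this in three moves: (i) bound $\E_i[\tau]$ for $i\notin F$ via a supermartingale built from $h$; (ii) bound $\E_i[\tau^{+}]$ for $i\in F$ using the first hypothesis; (iii) glue these together through the finite‑state chain embedded at the successive visits to $F$.

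For (i), fix $i\notin F$ and set $M_n:=h(X_{n\wedge\tau})+\epsilon\,(n\wedge\tau)$. On $\{\tau>n\}$ we have $X_n\notin F$, so the drift hypothesis gives $\E[h(X_{n+1})\mid\mathcal F_n]\le h(X_n)-\epsilon$, and $(n+1)\wedge\tau=(n\wedge\tau)+1$; on $\{\tau\le n\}$ the process is frozen. In both cases $\E[M_{n+1}\mid\mathcal F_n]\le M_n$, so $(M_n)$ is a nonnegative supermartingale under $\Prob_i$ (here $h\ge 0$ is used). Hence $\epsilon\,\E_i[n\wedge\tau]\le\E_i[M_n]\le M_0=h(i)$, and monotone convergence as $n\to\infty$ yields $\E_i[\tau]\le h(i)/\epsilon<\infty$. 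For (ii), fix $i\in F$ and condition on the first step, using the first hypothesis: $\E_i[\tau^{+}]=\sum_k p_{ik}\bigl(1+\ind_{\{k\notin F\}}\E_k[\tau]\bigr)\le 1+\tfrac1\epsilon\sum_k p_{ik}h(k)<\infty$. Since $F$ is finite, $C:=\max_{i\in F}\E_i[\tau^{+}]<\infty$.

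For (iii), sample the chain at the successive times $0=T_0<T_1<\cdots$ at which it visits $F$; the embedded process $\tilde X_m:=X_{T_m}$ is an irreducible Markov chain on the \emph{finite} set $F$, hence positive recurrent, so for any fixed $i_0\in F$ its return time $\sigma:=\inf\{m\ge 1:\tilde X_m=i_0\}$ satisfies $\E_{i_0}[\sigma]<\infty$. Writing the original return time as $\tau^{+}_{i_0}=T_\sigma=\sum_{m\ge 1}(T_m-T_{m-1})\,\ind_{\{\sigma\ge m\}}$, and noting that $\{\sigma\ge m\}\in\mathcal F_{T_{m-1}}$ while the strong Markov property gives $\E[\,T_m-T_{m-1}\mid\mathcal F_{T_{m-1}}]=\E_{\tilde X_{m-1}}[\tau^{+}]\le C$, a Wald‑type summation yields $\E_{i_0}[\tau^{+}_{i_0}]\le C\,\E_{i_0}[\sigma]<\infty$; thus $i_0$, and by irreducibility every state, is positive recurrent. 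I expect this last step to be the only delicate point: the number of sub‑excursions $\sigma$ and their lengths $T_m-T_{m-1}$ are dependent, so the Wald step must be justified by the $\mathcal F_{T_{m-1}}$‑measurability of $\{\sigma\ge m\}$ (or, equivalently, by summing $\E_{i_0}\!\bigl[(T_m-T_{m-1})\ind_{\{\sigma\ge m\}}\bigr]$ term by term via the strong Markov property) rather than by a naive independence argument; everything else is the routine supermartingale computation in (i)–(ii).
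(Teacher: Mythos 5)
The paper does not prove this statement: Foster's Theorem is quoted verbatim from \cite{Bremaud99} purely for ease of reference, so there is no in-paper proof to compare against. Your argument is correct and is essentially the classical one found in that reference: the supermartingale $h(X_{n\wedge\tau})+\epsilon(n\wedge\tau)$ gives $\E_i[\tau]\leq h(i)/\epsilon$ off $F$, a one-step conditioning gives finite mean return time to $F$ from each of the finitely many states of $F$, and the chain embedded at visits to $F$ (irreducible on the finite set $F$, hence positive recurrent) converts this into a finite mean return time for a single state, whence positive recurrence by irreducibility. The one delicate step you flag -- justifying $\E_{i_0}[T_\sigma]\leq C\,\E_{i_0}[\sigma]$ via the $\mathcal{F}_{T_{m-1}}$-measurability of $\{\sigma\geq m\}$ and termwise application of the strong Markov property rather than independence -- is exactly the right fix, and your normalization $h\mapsto h-\inf h$ correctly makes the supermartingale nonnegative; the proof is complete.
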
 

Let us define the DTMC that results from the uniformization of the CTMC by $X:=\{\vect{X}_{t}:=(X_{t}^1,\ldots,X_{t}^{n-1}), t=1,2,\ldots\}$, where $X_t^l$ is the number of stored qubits at the switch for link $l$ at time $t$ (or equivalently, the number of available Bell pairs for link $l$ at time $t$). The non-zero transition probabilities for the DTMC are
\begin{align}
\vect{x} \in R_0 &\to \begin{cases}
\vect{x}-\vect{1}, &\text{with prob.} ~\frac{k-(n-1)}{k},\\
\vect{x}+\vect{e}_l, &\text{with prob.} ~\frac{1}{k},~l=1,\dots,n-1,
\end{cases}\label{eq:trans1}\\
\vect{x}\in R_j &\to \begin{cases}
\vect{x}+\vect{e}_l, &\text{with prob.}~\frac{k-(n-1-j)}{kj}~ \text{if }x_l=0,\\
\vect{x}+\vect{e}_l,&\text{with prob.}~\frac{1}{k}~ \text{if }x_l\geq 1,
\end{cases}\label{eq:trans2}\\
&\qquad\text{for}~ j=1,\dots,n-2,~l=1,\dots,n-1,\text{ and}\nonumber\\
\vect{0} &\to \vect{e}_l ~\text{with prob.}~\frac{1}{n-1},~l=1,\dots,n-1\label{eq:trans3}.
\end{align}
Note that $X$ is irreducible and aperiodic. When in addition $X$ is positive recurrent, \ie, stable, we define $\vect{Q}$ as the stationary instance of the vector $\vect{X}_t$.
The steady state of the initial CTMC is the same as that of the sampled CTMC (\ie, the DTMC) thanks to the PASTA property.

Let $C$ be the switch capacity 
and 
$\E[|\vect{Q}|]$ the expected number of stored qubits across all $k$ links in steady state.
Recall from Section \ref{sec:background} that entanglement swapping may fail and that we use $q$ to denote the success probability. Note that this phenomenon affects $C$ but not $\E[|\vect{Q}|]$ or the chain's stability. In  particular, if the CTMC has a stationary distribution given by $\pi$, then
\begin{align*}
C &= q\mu (k-(n-1))\sum\limits_{\vect{x}\in R_0}\pi(\vect{x}),\qquad\text{and}\qquad
\E[|\vect{Q}|] = \sum\limits_{\vect{x}\in R}|\vect{x}|\pi(\vect{x}).
\end{align*}
Note that all entry permutations of $\vect{x}$ are equiprobable. This fact will be useful in Section \ref{subsec:stability}, when we derive the stability condition for the system.
Throughout the derivations of $C$ and $\E[|\vect{Q}|]$, we will often use the following lemma:
\begin{lemma}
\label{lemma:eq40}
If $X$ is stable and $\pi$ is its stationary distribution, then for every mapping $V:\{0,1,\ldots\}^{n-1}\to [0,\infty)$ such that $\E[V(\vect{X}_1)]<\infty$,
\begin{align}
0&=\sum_{\vect{x}\in R_0}  \pi(\vect{x})\Biggl[ \left(\frac{k-(n-1)}{k}\right) (V(\vect{x}-\vect{1})-V(\vect{x}))
+ \sum_{l=1}^{n-1}\frac{1}{k} (V(\vect{x}+\vect{e}_l)-V(\vect{x}))\Biggr]\nonumber\\
&\quad+\sum_{j=1}^{n-2} \sum_{\vect{x}\in R_j} \pi(\vect{x})  \Biggl[ \sum_{l:x_l=0}\left(\frac{k-(n-1-j)}{kj}\right)  (V(\vect{x}+\vect{e}_l)-V(\vect{x}))
+ \sum_{l: x_l\geq 1} \frac{1}{k}  (V(\vect{x}+\vect{e}_l)-V(\vect{x})) \Biggr]\nonumber\\
&\quad+ \pi(\vect{0})\sum_{l=1}^{n-1} \frac{1}{n-1}  (V(\vect{e}_l)-V(\vect{0})).
\label{eq:40}
\end{align}
\end{lemma}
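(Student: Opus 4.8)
The plan is to read \eqref{eq:40} as the statement that the expected one–step drift of $V$ vanishes in steady state, and to prove it by the standard ``$\pi\vect{P}=\pi$'' argument; the only delicate point is justifying the rearrangements of infinite sums, which is licensed throughout by the nonnegativity of $V$. Concretely, set $E:=\{0,1,\ldots\}^{n-1}$, let $\vect{P}=(p(\vect{x},\vect{y}))$ be the transition matrix of $X$ read off from \eqref{eq:trans1}--\eqref{eq:trans3}, and put $(\vect{P}V)(\vect{x}):=\sum_{\vect{y}\in E}p(\vect{x},\vect{y})\,V(\vect{y})\in[0,\infty]$, which is well defined for every $\vect{x}$ since $V\ge 0$. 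Writing $\pi$ for the stationary law (the law of the stationary instance, so that $\vect{X}_1\sim\pi$), the hypothesis $\E[V(\vect{X}_1)]<\infty$ says exactly that $\sum_{\vect{y}\in E}\pi(\vect{y})V(\vect{y})<\infty$, i.e.\ $V$ is $\pi$–integrable.

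Next I would use $\pi\vect{P}=\pi$ together with Tonelli's theorem (all summands nonnegative) to compute
\begin{align*}
\sum_{\vect{x}\in E}\pi(\vect{x})(\vect{P}V)(\vect{x})
&=\sum_{\vect{x}\in E}\pi(\vect{x})\sum_{\vect{y}\in E}p(\vect{x},\vect{y})V(\vect{y})\\
&=\sum_{\vect{y}\in E}V(\vect{y})\sum_{\vect{x}\in E}\pi(\vect{x})p(\vect{x},\vect{y})
=\sum_{\vect{y}\in E}\pi(\vect{y})V(\vect{y})<\infty .
\end{align*}
In particular $(\vect{P}V)(\vect{x})<\infty$ for every $\vect{x}$ (recall $\pi(\vect{x})>0$ since $X$ is irreducible and positive recurrent), so the drift $D(\vect{x}):=(\vect{P}V)(\vect{x})-V(\vect{x})=\sum_{\vect{y}\in E}p(\vect{x},\vect{y})\bigl(V(\vect{y})-V(\vect{x})\bigr)$ is a genuine real number for each $\vect{x}$, and $\sum_{\vect{x}\in E}\pi(\vect{x})\lvert D(\vect{x})\rvert\le\sum_{\vect{x}}\pi(\vect{x})(\vect{P}V)(\vect{x})+\sum_{\vect{x}}\pi(\vect{x})V(\vect{x})<\infty$. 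Hence $\sum_{\vect{x}\in E}\pi(\vect{x})D(\vect{x})$ converges absolutely and equals $\sum_{\vect{x}}\pi(\vect{x})(\vect{P}V)(\vect{x})-\sum_{\vect{x}}\pi(\vect{x})V(\vect{x})=0$.

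Finally I would identify $\sum_{\vect{x}\in E}\pi(\vect{x})D(\vect{x})$ with the right-hand side of \eqref{eq:40}. Splitting this absolutely convergent sum along the partition $R=R_0\cup R_1\cup\cdots\cup R_{n-1}$ of \eqref{eq:partition}, and substituting into $D(\vect{x})=\sum_{\vect{y}}p(\vect{x},\vect{y})\bigl(V(\vect{y})-V(\vect{x})\bigr)$ the explicit probabilities — \eqref{eq:trans1} when $\vect{x}\in R_0$, \eqref{eq:trans2} when $\vect{x}\in R_j$ with $1\le j\le n-2$, and \eqref{eq:trans3} when $\vect{x}=\vect{0}$, the unique state of $R_{n-1}$ — produces precisely the three blocks of terms displayed in \eqref{eq:40}.

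I do not expect a real obstacle here; the one point that must not be glossed over is that $V$ need not be bounded, so one cannot manipulate the drift sum termwise before establishing its absolute convergence — which is exactly why $\pi$–integrability of $V$ is assumed and why every interchange above is routed through Tonelli's theorem for nonnegative series.
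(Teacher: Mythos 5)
Your proof is correct and follows essentially the same route as the paper's: the paper simply observes that in steady state $\E[V(\vect{X}_{t+1})-V(\vect{X}_t)]=0$ (both expectations being finite and equal by stationarity) and then conditions on $\vect{X}_t=\vect{x}$, which is exactly your $\pi\vect{P}=\pi$ computation. Your version merely makes explicit the Tonelli/absolute-convergence justifications that the paper leaves implicit, which is a reasonable amount of added care but not a different argument.
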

\begin{proof}
Assume that the DTMC $X$ is stable, with $\pi$ its stationary distribution. Further assume that it is in steady state at time $t=1$ (which implies that it is in steady state at any time $t>1$). 
For every mapping $V:\{0,1,\ldots\}^{n-1}\to [0,\infty)$ such that $\E[V(\vect{X}_1)]<\infty$, we have
\begin{align*}
0&=\E[V(\vect{X}_{t+1})-V(\vect{X}_t)]
=\sum_{\vect{x}\in R} \pi(\vect{x}) \E[V(\vect{X}_{t+1})-V(\vect{x})\,|\, \vect{X}_{t}=\vect{x}],
\end{align*}
which immediately yields Eq. (\ref{eq:40}), by using transition probabilities in  (\ref{eq:partition})-(\ref{eq:trans3}).
\end{proof}
\section{Capacity}
\label{subsec:capacity}
Here, we derive the switch capacity $C$, defined as the maximum number of $n$-partite entangled states generated per time unit, or equivalently, the number of successful entanglement swapping operations performed by the switch per time unit.
\begin{proposition}[Capacity]
\label{capacity-n}
If $X$ is stable\footnote{The system is stable if and only if $k>n$, see Section \ref{subsec:stability}.\label{foot:stab}} 
then 
\[
C=\frac{q\mu k}{n}.
\]
\end{proposition}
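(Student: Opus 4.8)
The plan is to apply Lemma~\ref{lemma:eq40} with the linear test function $V(\vect{x}) = |\vect{x}| = x_1 + \cdots + x_{n-1}$, i.e.\ the total number of stored qubits, and to read off from the resulting identity the value of $\sum_{\vect{x}\in R_0}\pi(\vect{x})$; the capacity then follows at once from the expression $C = q\mu(k-(n-1))\sum_{\vect{x}\in R_0}\pi(\vect{x})$ recorded in Section~\ref{sec:analysis}. Intuitively this is just a flow-balance argument: qubits enter the switch at total rate $k\mu$ and leave only in batches of $n$, one per swap attempt, so in steady state the swap-attempt rate must be $k\mu/n$, of which a fraction $q$ succeeds.

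To carry this out I would first check the hypothesis $\E[V(\vect{X}_1)] = \E[|\vect{Q}|] < \infty$, which holds whenever $X$ is stable by the result of Section~\ref{subsec:finEQ}. Then I would evaluate the three groups of terms in \eqref{eq:40} separately, using $V(\vect{x}+\vect{e}_l)-V(\vect{x}) = 1$ everywhere and $V(\vect{x}-\vect{1})-V(\vect{x}) = -(n-1)$ on $R_0$. On $R_0$ the bracket equals $\frac{k-(n-1)}{k}\cdot(-(n-1)) + (n-1)\cdot\frac{1}{k} = -\frac{(n-1)(k-n)}{k}$; on each $R_j$ with $1\le j\le n-2$ the bracket equals $j\cdot\frac{k-(n-1-j)}{kj} + (n-1-j)\cdot\frac{1}{k} = 1$; and the $\vect{0}$ term equals $1$ as well. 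Since $\sum_{j=1}^{n-2}\sum_{\vect{x}\in R_j}\pi(\vect{x}) + \pi(\vect{0}) = 1 - \sum_{\vect{x}\in R_0}\pi(\vect{x})$ (because $R_{n-1} = \{\vect{0}\}$), identity \eqref{eq:40} collapses to
\[
0 \;=\; -\frac{(n-1)(k-n)}{k}\sum_{\vect{x}\in R_0}\pi(\vect{x}) \;+\; \Bigl(1 - \sum_{\vect{x}\in R_0}\pi(\vect{x})\Bigr).
\]

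Solving for $\sum_{\vect{x}\in R_0}\pi(\vect{x})$ and using the simplification $1 + \frac{(n-1)(k-n)}{k} = \frac{n(k-n+1)}{k}$ gives $\sum_{\vect{x}\in R_0}\pi(\vect{x}) = \frac{k}{n(k-n+1)}$; substituting into $C = q\mu(k-(n-1))\sum_{\vect{x}\in R_0}\pi(\vect{x})$ yields $C = q\mu k/n$. The only genuine subtlety is the integrability condition required to invoke Lemma~\ref{lemma:eq40} with an unbounded $V$: the clean route is to cite the finiteness of $\E[|\vect{Q}|]$ established in Section~\ref{subsec:finEQ}; alternatively one could run the same computation with the truncated functions $V_M(\vect{x}) = \min(|\vect{x}|,M)$ and pass to the limit, at the cost of tracking the extra boundary terms created by truncation. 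Everything else is routine bookkeeping over the partition \eqref{eq:partition}.
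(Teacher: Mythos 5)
Your proposal is correct and follows essentially the same route as the paper's proof: both apply Lemma~\ref{lemma:eq40} with the linear function $V(\vect{x})=x_1+\cdots+x_{n-1}$, justify its integrability via the finiteness of $\E[|\vect{Q}|]$ from Proposition~\ref{prop:expectationX}, derive $\sum_{\vect{x}\in R_0}\pi(\vect{x})=\frac{k}{n(k-(n-1))}$, and conclude from $C=q\mu(k-(n-1))\sum_{\vect{x}\in R_0}\pi(\vect{x})$. Your bracket computations and the normalization step match the paper's Eqs.~(\ref{eq:41})--(\ref{eq:43}) exactly.
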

\begin{proof}
By Lemma \ref{lemma:eq40}, Eq. (\ref{eq:40}) holds.
Take $V(\vect{x})=x_1+\cdots+x_{n-1}$.  
It is shown in Proposition \ref{prop:expectationX} that $\E[|\vect{Q}|]<\infty$ when the system is stable. Since $\E[V(\vect{X}_1)]=\E[|\vect{Q}|]$, we see that the condition $\E[V(\vect{X}_1)]<\infty$ is met so that $(\ref{eq:40})$ holds for this choice of $V$.
After this substitution and multiplication of both sides of (\ref{eq:40}) by $k$, we obtain
\begin{align}
-(n-1)(k-n) \sum_{\vect{x}\in R_0}  \pi(\vect{x}) + k \sum_{j=1}^{n-1} \sum_{\vect{x}\in R_j}\pi(\vect{x})= 0.
\label{eq:41}
\end{align}
From the identities
\[
1=\sum_{\vect{x}\in R}\pi(\vect{x})=\sum_{\vect{x}\in R_0} \pi(\vect{x})+\sum_{j=1}^{n-1}\sum_{\vect{x}\in R_j}\pi(\vect{x}),
\]
where the latter identity holds from  (\ref{eq:partition}), we deduce that
\begin{align}
\sum_{j=1}^{n-1}\sum_{\vect{x}\in R_j} \pi(\vect{x})= 1-\sum_{\vect{x}\in R_0} \pi(\vect{x}).
\label{eq:42}
\end{align}
Hence,  cf. (\ref{eq:41}), (\ref{eq:42}),
\[
0=-n(k-(n-1)) \sum_{\vect{x}\in R_0}  \pi(\vect{x}) +  k,
\]
so that
\begin{align}
\sum_{\vect{x}\in R_0}  \pi(\vect{x})=\frac{k}{n(k-(n-1))}.
\label{eq:43}
\end{align}
To compute the capacity of the switch, observe that entanglement swapping occurs whenever there are $n-1$ distinct links with at least one Bell pair each (\ie, the system is in a state $\vect{x}\in R_0$), and a link without available Bell pairs successfully generates one. This occurs with rate $(k-(n-1))\mu$. Further, an $n$-partite entangled state is generated when the swapping operation succeeds, which occurs with probability $q$.
The capacity $C$ is then given by 
\[
C=q\mu (k-(n-1))\times \sum_{\vect{x}\in R_0} \pi(\vect{x})= \frac{q\mu k}{n},
\]
which proves the proposition.
\end{proof}
\section{Expected Number of Qubits in Memory at Switch}
\label{subsec:eq}
Recall that $\vect{Q}$ is the steady-state version of the vector $\vect{X}_t$. In this section, we derive the expected number of stored qubits, across all links, at the switch in steady state, or $\E[|\vect{Q}|]$. 
\if{false}
\begin{proposition}[Expected Number of Stored Qubits]\hfill\\
\label{prop:eqShort}
If the system is stable\footnote{The system is stable if and only if $k>n$, see Section \ref{subsec:stability}.} and $\E[|\vect{Q}|^2] <\infty$,  then
\[
\E[|\vect{Q}|]=\frac{k(n-1)}{2(k-n)}.
\]
\end{proposition}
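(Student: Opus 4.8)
The plan is to follow the template of the proof of Proposition~\ref{capacity-n}: apply Lemma~\ref{lemma:eq40} to suitable Lyapunov functions $V$ and read off linear equations for the unknowns. The new wrinkle is that a single quadratic $V$ does not close the system. Indeed, the only transition that decreases $|\vect{x}|$ is the swap transition $\vect{x}\to\vect{x}-\vect{1}$, which occurs only on $R_0$; consequently, the stationarity identity produced by a quadratic $V$ couples $\E[|\vect{Q}|]$ with the truncated sum $A_0:=\sum_{\vect{x}\in R_0}\pi(\vect{x})|\vect{x}|$. I would therefore work with two Lyapunov functions, together with the already-known mass $S_0:=\sum_{\vect{x}\in R_0}\pi(\vect{x})=k/(n(k-n+1))$ from (\ref{eq:43}), obtaining two linear relations in $A_0$ and $B:=\E[|\vect{Q}|]-A_0=\sum_{\vect{x}\notin R_0}\pi(\vect{x})|\vect{x}|$.

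First I would take $V(\vect{x})=|\vect{x}|^2=(x_1+\cdots+x_{n-1})^2$; the hypothesis $\E[|\vect{Q}|^2]<\infty$ ensures $\E[V(\vect{X}_1)]<\infty$, so Lemma~\ref{lemma:eq40} applies. The drift is easy to tabulate on the stratification $R=R_0\cup\cdots\cup R_{n-1}$: on any $\vect{x}\in R_j$ with $j\ge 1$ every transition has the form $\vect{x}\to\vect{x}+\vect{e}_l$, so the bracket in (\ref{eq:40}) is simply $2|\vect{x}|+1$; on $R_0$ the extra swap transition contributes $(|\vect{x}|-(n-1))^2-|\vect{x}|^2$, and after collecting terms the bracket becomes $-\frac{2(n-1)(k-n)}{k}|\vect{x}|+(\text{constant})$. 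Summing against $\pi$, and using $\sum_{j\ge 1}\sum_{\vect{x}\in R_j}\pi(\vect{x})=1-S_0$, yields the first linear relation among $A_0$, $B$ and $S_0$. Next I would repeat with $V(\vect{x})=\sum_{i=1}^{n-1}x_i^2$ (one could equivalently use $\sum_{i<j}x_ix_j$, related to the previous two by $|\vect{x}|^2=\sum_i x_i^2+2\sum_{i<j}x_ix_j$); now $V(\vect{x}+\vect{e}_l)-V(\vect{x})=2x_l+1$ depends on $l$, but on $R_j$ the newly occupied link has $x_l=0$ and the remaining contributions telescope, so the bracket reduces to $1+\frac{2}{k}|\vect{x}|$, while on $R_0$ one again picks up an affine-in-$|\vect{x}|$ term from the swap. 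Summing against $\pi$ gives the second relation. The resulting $2\times2$ system in $(A_0,B)$ has determinant $\frac{4(k-n)(k-n+1)}{k^2}\neq 0$ (this is where $k>n$ enters), so one solves for $A_0$ and $B$; then $\E[|\vect{Q}|]=A_0+B$, after substituting $S_0=k/(n(k-n+1))$, simplifies to $\frac{k(n-1)}{2(k-n)}$.

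The main obstacle is bookkeeping rather than conceptual: correctly evaluating both drifts over all strata $R_j$ — keeping the transition probabilities $(k-(n-1-j))/(kj)$ straight and exploiting that a freshly occupied link enters from value $0$ — and then pushing the linear algebra through to the stated closed form; I would double-check the computation on a small case such as $n=3$, where the formula reduces to $k/(k-3)$. The only other thing to verify is that every Lyapunov function used has finite stationary expectation, which is immediate from $\E[|\vect{Q}|^2]<\infty$ since $\sum_i x_i^2\le|\vect{x}|^2$.
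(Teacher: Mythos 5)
Your proposal is correct and follows essentially the same route as the paper's proof: apply Lemma \ref{lemma:eq40} to the two quadratic Lyapunov functions $|\vect{x}|^2$ and $\sum_i x_i^2$ (the hypothesis $\E[|\vect{Q}|^2]<\infty$ licensing both), use $\sum_{\vect{x}\in R_0}\pi(\vect{x})=k/(n(k-n+1))$ from (\ref{eq:43}), and solve the resulting nonsingular $2\times 2$ system for $A_0$ and $B$. Your drift computations on each stratum $R_j$ and the determinant/nondegeneracy condition $k>n$ all check out against the paper's equations (\ref{eq:73})--(\ref{eq:74}).
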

\begin{proof}
The beginning of this proof is identical to that of the capacity proof of Section \ref{subsec:capacity}, up to and including Eq. (\ref{eq:40}). Next, let us first take $V(\vect{x})=x_1^2+\cdots+x_{n-1}^2$.
After substituting $V(\vect{x})$ into (\ref{eq:40}) and multiplying both sides by $k$, we obtain
\begin{align}
0&= \sum_{\vect{x}\in R_0}  \pi(\vect{x})\Biggl[(k-(n-1))(-2|\vect{x}|+(n-1) )+2|\vect{x}|+n-1\Biggr]\nonumber \\
&+ \sum_{j=1}^{n-2} \sum_{\vect{x}\in R_j}\pi(\vect{x})\Biggl[k-(n-1-j)
+2|\vect{x}|+n-1-j\Biggr]
+ k\pi(\vect{0})\nonumber\\
&=-2(k-n)\sum_{\vect{x}\in R_0}  \pi(\vect{x})|\vect{x}|+(k-n+2)(n-1)\sum_{\vect{x}\in R_0}  \pi(\vect{x})\nonumber\\
&+ 2\sum_{j=1}^{n-2} \sum_{\vect{x}\in R_j}\pi(\vect{x})|\vect{x}| +  k\sum_{j=1}^{n-1} \sum_{\vect{x}\in R_j}\pi(\vect{x}).
\label{eq:70}
\end{align}
Using Eqs (\ref{eq:42}) and (\ref{eq:43}), we deduce that
\begin{align}
\label{eq:72}
\sum_{j=1}^{n-1} \sum_{\vect{x}\in R_j}\pi(\vect{x})= 1-\frac{k}{n(k-(n-1))}=
\frac{(k-n)(n-1)}{n(k-(n-1))}.
\end{align}
Hence, from (\ref{eq:43}), (\ref{eq:70}), and (\ref{eq:72}),
\begin{equation}
\label{eq:73}
(k-n)\sum_{\vect{x}\in R_0}  \pi(\vect{x})|\vect{x}|
- \sum_{j=1}^{n-2} \sum_{\vect{x}\in R_j}\pi(\vect{x})|\vect{x}|=\frac{k(n-1)}{n}.
\end{equation}
Now, take $V(\vect{x})=(x_1+\cdots+x_{n-1})^2$.  After substituting into (\ref{eq:40}) and multiplying both sides by $k$, we obtain
\begin{align*}
&\sum_{\vect{x}\in  R_0}  \pi(\vect{x})\Biggl[(k-(n-1))(-2(n-1)|\vect{x}|+(n-1)^2 )+(n-1)(2|\vect{x}|+1)\Biggr]\\
&+ \sum_{j=1}^{n-2} \sum_{\vect{x}\in R_j}\pi(\vect{x})\Biggl[(k-(n-1-j))(2|\vect{x}|+1)
+(n-1-j)(2|\vect{x}|+1)\Biggr]\\
&+ k\pi(\vect{0})\\
&=-2(k-n)(n-1)\sum_{\vect{x}\in R_0}  \pi(\vect{x})|\vect{x}|
+2k\sum_{j=1}^{n-2} \sum_{\vect{x}\in R_j} \pi(\vect{x})|\vect{x}|\\
&+(n-1)((n-1)(k-(n-1))+1) \sum_{\vect{x}\in R_0}  \pi(\vect{x})
+ k\sum_{j=1}^{n-1} \sum_{\vect{x}\in R_j}\pi(\vect{x}) = 0.
\end{align*}
Using (\ref{eq:43}) and  (\ref{eq:72}) yields
\begin{align}
2(k-n)(n-1)\sum_{\vect{x}\in R_0}  \pi(\vect{x})|\vect{x}| -2k\sum_{j=1}^{n-2} \sum_{\vect{x}\in R_j} \pi(\vect{x})|\vect{x}|   =k(n-1).
\label{eq:74}
\end{align}
Let $A\coloneqq \sum_{\vect{x}\in R_0}  \pi(\vect{x})|\vect{x}|$ and $B\coloneqq\sum_{j=1}^{n-2} \sum_{\vect{x}\in R_j} \pi(\vect{x})|\vect{x}|$.
Note that $\E[|\vect{Q}|]\coloneqq A+B$.
Eqns (\ref{eq:73})-(\ref{eq:74}) define the following set of linear equations in the unknowns $A$ and $B$:
\begin{eqnarray*}
(k-n)A- B&=&\frac{k(n-1)}{n}\\
2(k-n)(n-1)A-2kB&=&k(n-1).
\end{eqnarray*}
When $k\not=n$ there is a unique solution given by 
\begin{eqnarray*}
A&=&\frac{k(n-1)(2k-n)}{2n(k-n)(k-(n-1))}\\
B&=& \frac{k(n-1)(n-2)}{2n(k-(n-1))}. 
\end{eqnarray*}
We find $\E[|\vect{Q}|]=A+B=\frac{k(n-1)}{2(k-n)}$.
\end{proof}
\fi
\begin{figure}
\centering
\includegraphics[width=0.48\textwidth]{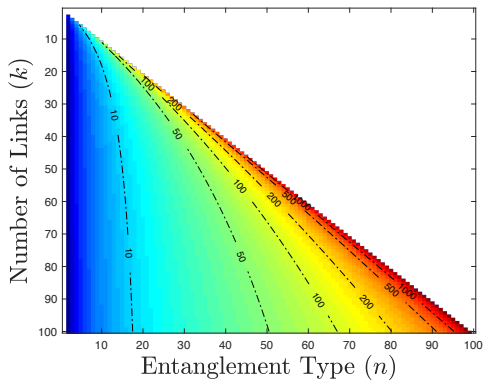}
\caption{A heatmap of the $\log$ expected number of stored qubits at the switch, $\log(\E[|\vect{Q}|])$, as a function of the number of links $k$ and entanglement type -- $n$ for $n$-partite. Overlaid contour lines are labeled with $\E[|\vect{Q}|]$ values.}
\label{fig:EQplot}
\end{figure}

Following are a proposition and proof sketch for $\E[|\vect{Q}|]$. 
The details of the proof can be found in Appendix \ref{app:EQ-long-proof}.
This proof relies on the finiteness of the first moment of the MC $X$, which comes as a consequence of $X$'s stability -- see Section \ref{subsec:finEQ}, Proposition \ref{prop:expectationX}.
\begin{proposition}[Expected Number of Stored Qubits]\hfill
\label{prop:EQlonger}

If $X$ is stable (\ie, $k>n$ -- see Proposition \ref{prop:stability}), then
\[
\E[|\vect{Q}|]=\frac{k(n-1)}{2(k-n)}.
\]
\end{proposition}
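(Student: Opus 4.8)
The plan is to apply Lemma~\ref{lemma:eq40} to two quadratic test functions and thereby reduce the computation of $\E[|\vect{Q}|]$ to solving a $2\times 2$ linear system. Set $A:=\sum_{\vect{x}\in R_0}\pi(\vect{x})\,|\vect{x}|$ and $B:=\sum_{j=1}^{n-2}\sum_{\vect{x}\in R_j}\pi(\vect{x})\,|\vect{x}|$. Since $|\vect{x}|=0$ on $R_{n-1}=\{\vect{0}\}$, the partition (\ref{eq:partition}) gives $\E[|\vect{Q}|]=\sum_{\vect{x}\in R}|\vect{x}|\,\pi(\vect{x})=A+B$, so it suffices to determine $A$ and $B$. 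The key leverage is that the probability masses are already known in closed form from the capacity analysis: by (\ref{eq:43}) and (\ref{eq:42}) we have $\sum_{\vect{x}\in R_0}\pi(\vect{x})=\frac{k}{n(k-(n-1))}$ and $\sum_{j\ge 1}\sum_{\vect{x}\in R_j}\pi(\vect{x})=1-\frac{k}{n(k-(n-1))}$.

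The step I expect to be the main obstacle is simply \emph{licensing} the use of quadratic $V$ in Lemma~\ref{lemma:eq40}, which requires $\E[V(\vect{X}_1)]<\infty$, i.e.\ $\E[|\vect{Q}|^2]<\infty$ --- a bound I do not want to assume a priori. I would circumvent this by a truncation argument: apply the lemma instead to the bounded functions $V^{(m)}:=\min(V,m)$, for which $\E[V^{(m)}(\vect{X}_1)]\le m<\infty$ trivially, obtaining (\ref{eq:40}) with $V^{(m)}$ in place of $V$. Since $t\mapsto\min(t,m)$ is $1$-Lipschitz, every one-step increment $V^{(m)}(\vect{y})-V^{(m)}(\vect{z})$ appearing in (\ref{eq:40}) is bounded in absolute value, uniformly in $m$, by the corresponding increment of $V$ itself, which for each transition of $X$ is at most an affine function of $|\vect{x}|$. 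Because $\E[|\vect{Q}|]<\infty$ by Proposition~\ref{prop:expectationX}, these dominating terms are $\pi$-summable, so dominated convergence lets me pass (\ref{eq:40}) to the limit $m\to\infty$ and recover it for the quadratic $V$.

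With that secured, the rest is bookkeeping. Taking $V(\vect{x})=x_1^2+\cdots+x_{n-1}^2$, using $V(\vect{x}+\vect{e}_l)-V(\vect{x})=2x_l+1$, $V(\vect{x}-\vect{1})-V(\vect{x})=-2|\vect{x}|+(n-1)$, and summing the gains separately over the $j$ empty and the $n-1-j$ occupied coordinates of a state in $R_j$, Eq.~(\ref{eq:40}) (multiplied through by $k$) collapses, after substituting the known probability masses, to the relation $(k-n)A-B=\frac{k(n-1)}{n}$. Taking instead $V(\vect{x})=(x_1+\cdots+x_{n-1})^2$, with $V(\vect{x}+\vect{e}_l)-V(\vect{x})=2|\vect{x}|+1$ and $V(\vect{x}-\vect{1})-V(\vect{x})=-2(n-1)|\vect{x}|+(n-1)^2$, the same procedure yields $2(n-1)(k-n)A-2kB=k(n-1)$. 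Since $k\neq n$, this system has the unique solution $A=\frac{k(n-1)(2k-n)}{2n(k-n)(k-(n-1))}$ and $B=\frac{k(n-1)(n-2)}{2n(k-(n-1))}$; adding them and simplifying (the numerator factor collapses via $k-n+1=k-(n-1)$) gives $\E[|\vect{Q}|]=A+B=\frac{k(n-1)}{2(k-n)}$, as claimed. The detailed algebra is deferred to Appendix~\ref{app:EQ-long-proof}.
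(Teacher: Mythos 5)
Your proposal is correct and follows essentially the same route as the paper's Appendix~\ref{app:EQ-long-proof}: truncated quadratic Lyapunov functions ($\min$ with a cutoff) fed into Lemma~\ref{lemma:eq40}, the known masses (\ref{eq:42})--(\ref{eq:43}), and a $2\times2$ linear system in $A$ and $B$, all resting on $\E[|\vect{Q}|]<\infty$ from Proposition~\ref{prop:expectationX}. The only difference is cosmetic but pleasant: where the paper carries explicit remainder terms $g(T)$, $h(T)$ through the algebra and shows they vanish, you pass to the limit up front via the $1$-Lipschitz bound $|\min(V,m)(\vect{y})-\min(V,m)(\vect{z})|\le|V(\vect{y})-V(\vect{z})|$ and dominated convergence (valid because the one-step increments of the quadratic $V$ are affine in $|\vect{x}|$, hence $\pi$-summable), which cleanly sidesteps the $\E[|\vect{Q}|^2]<\infty$ hypothesis the paper only conjectures.
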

\begin{proof}[Proof Sketch]
Let $\pi$ be the stationary distribution of $X$. Assume that the MC is in steady-state at time $t=1$ (which implies that it is in steady-state at any time $t>1$).
For every mapping
$V:\{0,1,\ldots,\}^{n-1}\to [0,\infty)$ such that $\E[V(\vect{X}_1)]<\infty$, we know that
\begin{align*}
\E[V(\vect{X}_{t+1})-V(\vect{X}_t)] = 0,
\end{align*}
so that Eq. (\ref{eq:40}) holds. Define $|\vect{x} |^{(2)}=\sum_{j=1}^{n-1} x^2_j$. 
Let $A\coloneqq \sum_{\vect{x}\in R_0}  \pi(\vect{x})|\vect{x}|$ and $B\coloneqq\sum_{j=1}^{n-2} \sum_{\vect{x}\in R_j} \pi(\vect{x})|\vect{x}|$.
Note that $\E[|\vect{Q}|]\coloneqq A+B$.
To obtain the final result, we use two Lyapunov functions: first, take
$V(\vect{x})=\min\{|\vect{x}|^2,T^2\}$, and then take $V(\vect{x})=\min\{|\vect{x}|^{(2)},T\}$. After substitution into Eq. (\ref{eq:40}) and simplification, we obtain the following set of linear equations:
\begin{eqnarray*}
2(k-n)(n-1)A-2kB&=&k(n-1)+g(T)\\
2(k-n)A- 2B&=&\frac{2k(n-1)}{n}+h(T),
\end{eqnarray*}
where $g(T)$ and $h(T)$ both go to $0$ as $T\to \infty$.
When $k>n$ there is a unique solution given by 
\begin{eqnarray*}
A&=&\frac{k(n-1)(2k-n)}{2n(k-n)(k-(n-1))}\\
B&=& \frac{k(n-1)(n-2)}{2n(k-(n-1))}. 
\end{eqnarray*}
We find $\E[|\vect{Q}|]=A+B=\frac{k(n-1)}{2(k-n)}$.
\end{proof}
The proof of Proposition \ref{prop:EQlonger} (see Appendix \ref{app:EQ-long-proof}) can be considerably  shortened if instead of working with the Lyapunov functions $\min\{|\vect{x}|^2,T^2\}$ and 
$\min\{|\vect{x}|^{(2)},T\}$, one selects the functions $|\vect{x}|^2$ and $|\vect{x}|^{(2)}$, respectively. However, in addition to the stability assumption, this approach requires to assume that $\E[|\vect{Q}|^2]<\infty$ so that Lemma \ref{lemma:eq40} can be invoked; see \cite{vardoyan2019stochasticarXiv} for a proof. We conjecture that $\E[|\vect{Q}|^2]<\infty$ if the system is stable but have not been able to prove it.

Figure \ref{fig:EQplot} presents a heatmap of $\log(\E[|\vect{Q}|])$ as a function of $k$ and $n$, and an overlaid contour plot with lines labeled using $\E[|\vect{Q}|]$ values. The number of links is varied from three to 100, and for each value of $k$, $n$ is varied from two to $k-1$. Note that as $n$ increases, so does the number of stored qubits. Along the diagonal, the data points correspond to cases when $n=k-1$. In such, cases $\E[|\vect{Q}|]$ becomes large, and continues to increase as one descends down the diagonal. This plot suggests that for small $n$ (\eg, $n\leq 20$), few quantum memories are required on average for our protocol.

\section{Stability Analysis}
\label{subsec:stability}
In this section, we prove that the system is stable if and only if $k>n$. 
In the proofs that follow, we make use of some general results and several useful formulas derived in Appendices \ref{app:generic} and \ref{app:useful-formulas}.
\begin{proposition}[Stability]
\label{prop:stability}
The Markov chain $X$ is stable (ergodic) when $k>n$.
\end{proposition}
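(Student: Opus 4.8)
The plan is to reduce ergodicity of $X$ to that of a chain embedded in it. Foster's Theorem cannot be applied to $X$ directly: on $R\setminus R_0$ every transition increases $|\vect{x}|$ by one, so any Lyapunov function that is coordinatewise increasing has nonnegative drift on the \emph{infinite} set $R\setminus R_0$. Instead I would observe $X$ only at its successive visits to $R_0$, obtaining a Markov chain $\tilde X$ on $R_0$, and show $\tilde X$ is positive recurrent. To transfer this back, note that along $X$ the number of empty coordinates is nonincreasing and, while positive, strictly decreases at the next step with probability at least $(k-n+2)/k>0$; hence $X$ reaches $R_0$ almost surely from every state, and the expected return time of $X$ to $R_0$ is bounded, uniformly over $R_0$, by $1+\sum_{j=1}^{n-2}\frac{k}{k-n+1+j}<\infty$. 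By the transfer result in Appendix~\ref{app:generic} (positive recurrence of the embedded chain plus a uniform bound on excursion lengths implies positive recurrence of the original chain), it then suffices to prove $\tilde X$ is positive recurrent.

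Next I would work out the one-step law of $\tilde X$. From $\vect{x}\in R_0$: with probability $1/k$ each it moves to $\vect{x}+\vect{e}_l\in R_0$; with probability $(k-n+1)/k$ it jumps to $\vect{x}-\vect{1}$, which lies in $R_m$ with $m=\#\{i:x_i=1\}$, whence it performs an excursion in $R\setminus R_0$ that fills the $m$ empty coordinates one at a time and returns to $R_0$ at some $\vect{Z}\ge\vect{x}-\vect{1}$ (in particular, if all coordinates of $\vect{x}$ are $\ge 2$ there is no excursion and $\tilde X$ moves to $\vect{x}-\vect{1}$). I would record the elementary excursion statistics needed below --- its mean length $\sum_{j=1}^{m}\frac{k}{k-n+1+j}$, and the expected number of small coordinates of $\vect{Z}$ --- these being the combinatorial identities of Appendix~\ref{app:useful-formulas}.

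I would then apply Foster's Theorem to $\tilde X$ with a Lyapunov function of the form $h(\vect{x})=|\vect{x}|+c\sum_{i=1}^{n-1}f(x_i)$, where $c>0$ and $f:\N\to[0,\infty)$ is bounded, nonincreasing and vanishing at infinity, so the second term penalises states with small coordinates. On the interior of $R_0$ the drift of $h$ is essentially that of $|\vect{x}|$, namely $\frac{(n-1)(n-k)}{k}<0$ because $k>n$. The delicate states are those of $R_0$ with one or more coordinates equal to $1$ --- infinitely many as soon as $n\ge 4$ --- for which, when $k$ is close to $n$, the swap triggers an excursion long enough that $|\vect{x}|$ alone has strictly positive drift; the point is that a long excursion deposits many extra increments on the small coordinates, so $\E[\sum_i f(Z_i)]$ drops below $\sum_i f(x_i)$ by an amount that, for suitable $c$ and $f$, outweighs the growth of $|\vect{x}|$. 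This should give
\[
\E\bigl[h(\tilde X_1)-h(\tilde X_0)\mid\tilde X_0=\vect{x}\bigr]\le-\epsilon\qquad\text{for all }\vect{x}\in R_0\setminus F,
\]
for a finite set $F$ and some $\epsilon>0$, precisely when $k>n$, with $\E[h(\tilde X_1)\mid\tilde X_0=\vect{x}]<\infty$ on $F$ since excursions have finite mean length. Since $\tilde X$ is irreducible on $R_0$ (it even has a self-loop at $\vect{1}$, via the excursion $\vect{1}\to\vect{0}\to\vect{1}$ that refills every coordinate with no extra increment), Foster's Theorem yields positive recurrence of $\tilde X$, hence, with the reduction above, ergodicity of $X$.

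The step I expect to be the main obstacle is the drift estimate at the unit-coordinate states of $R_0$. One must quantify how the state-dependent excursion --- non-negligible exactly in the regime $k\approx n$ where it matters --- redistributes mass so as to strictly lower $\sum_i f(x_i)$, and then choose $c$ and $f$ so the inequality holds simultaneously for \emph{all} $k>n$ and for all but finitely many states: the penalty must be strong enough to dominate a long excursion, yet weak enough that states such as $(2,\dots,2,b)$ retain negative drift. This balancing act, together with the combinatorial bookkeeping on the excursions, is the technical heart of the argument.
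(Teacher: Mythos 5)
Your architecture is the one the paper uses: observe $X$ only at its visits to the all-positive set (the paper's $S$, your $R_0$), compute the law of the embedded chain including the excursion through the boundary (the paper's Lemma \ref{lem:hat-q}), bound excursion lengths uniformly, and apply Foster's Theorem to the embedded chain (the paper's $Y$). Where you genuinely diverge is the Lyapunov function, and that is where the proposal has a gap rather than an omitted routine computation. The paper takes $V(\vect x)=\sum_i x_i^2+b\sum_{i<l}x_ix_l$ with $b$ tuned in $(-\frac{2}{n-2},0)$; at the problematic states $(1,\ldots,1,x_{j+1},\ldots,x_{n-1})$ the drift is then $C_j\sum_{i>j}x_i+\delta_j$ with $C_j<0$ (Proposition \ref{prop:drift}), so negativity comes from the coefficient of an \emph{unbounded} quantity and the size of the constant $\delta_j$ is irrelevant. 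Your $h(\vect x)=|\vect x|+c\sum_i f(x_i)$ has a bounded correction, so as the large coordinates tend to infinity the drift converges to the constant $a_j+c\,(\hbox{expected penalty change})$, where $a_j$ is the drift of $|\vect x|$ alone; everything hinges on the penalty strictly decreasing, uniformly, at every such state while not increasing too much elsewhere.

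That is exactly the step you defer, and the mechanism you offer for it --- ``a long excursion deposits many extra increments on the small coordinates, so $\E[\sum_if(Z_i)]$ drops'' --- fails in the simplest problematic case $j=1$. From $(1,x_2,\ldots,x_{n-1})$ the swap creates a single empty coordinate, and the excursion ends the instant that coordinate receives its first arrival, so it returns at the value $1$ deterministically: its penalty does not drop at all. The remaining coordinates are decremented by $1$ and re-incremented by only $1/(k-n+2)<1$ in expectation, so with $f$ nonincreasing the excursion tends to \emph{raise} the penalty on any moderate coordinate (it drags a $2$ toward $1$, as at the infinitely many states $(1,2,b)$). The only penalty decrease available at $j=1$ states is the probability-$1/k$ move $\vect x\to\vect x+\vect e_1$, a much weaker effect than the one you describe; and $j=1$ cannot be absorbed into the finite exceptional set, since already for $n=3$, $k=4$ one computes $a_1=1/6>0$ on the infinite set $\{(1,i),(i,1):i\geq 2\}$. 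Making a bounded penalty work then becomes a cascade of constraints on the successive differences $f(m)-f(m+1)$ (one for each family $(1,m,\cdot)$, $(2,m,\cdot)$, \dots), which you do not resolve and which is the actual technical heart of the proof; the paper's quadratic choice avoids the issue entirely. Two smaller corrections: Appendix \ref{app:generic} contains combinatorial identities, not a transfer theorem for embedded chains (the paper's transfer step is the closing sentence of its proof, resting on the bounded excursion lengths you correctly identify); and since the swap from $\vect 1$ lands at $\vect 0\in S_{n-1}$, your uniform bound on the return time to $R_0$ should run to $j=n-1$, not $n-2$.
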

\begin{proof}
As discussed in Section \ref{sec:relwork}, we were not able to apply Foster's criterion directly to the chain $X=\{\vect{X}_t\}_{t\geq 1}$. However, we will introduce a Markov chain $Y=\{\vect{Y}_t\}_{t\geq 1}$ embedded in $X$ and prove that it is positive recurrent. Using this result, we will argue that $X$ is also positive recurrent. First, let us define some useful sets:
\begin{itemize}
\item[-] $S\coloneqq\{\vect{x}=(x_1~\ldots~x_{n-1})\in \N^{n-1}: x_1\geq 1,\ldots, x_{n-1}\geq 1\}$;
\item[-] $S^c$ -- the complementary set of $S$ in $\N^{n-1}$;
\item[-] $S^{\star}$ -- the subset of $S$ containing vectors $\vect{x}\in S$ for which $\vect{x}-\vect{1}\not \in S$. In words, if  $\vect{x}\in S^{\star}$, then at least one $x_i$ is equal to 1. For instance, if $n=3$ then $S^{\star}=\{(i,1), (1,i),i\geq 1\}$;
\item[-] $S^{\star}_j$ -- the subset of $S^{\star}$ whose vectors have exactly $j$ components equal to $1$;
\item[-] $S_j$ -- the subset of $S^c$  whose vectors have exactly $j$ components equal to $0$.
\end{itemize}
The  Markov chain $Y$ is embedded in $X$ at times when $X$ lies in the set $S$.
Following is a formal definition of $Y$:
\begin{equation}
\label{def:Y}
\vect{Y}_t=\vect{X}_{\min\{m\geq 1: \sum_{i=1}^m \ind_{\{\vect{X}_i\in S\}}=t\}}, \quad t=1,2,\ldots.
\end{equation}
Denote by  $q(\vect{x},\vect{y})$, $\vect{x}, \vect{y} \in S$,  the one-step transition probabilities of $Y$.
The non-zero one-step probability transitions of $Y$ are given by
\begin{align}
& q(\vect{x},\vect{x}+{\bf e}_i)=\frac{1}{k} \quad\hbox{ for } \vect{x}\in S, \, i=1,\ldots,n-1, \label{prob-trans1}\\
& q(\vect{x},\vect{x}-{\bf 1})= \frac{k-n+1}{k}\quad \hbox{ for } \vect{x}\in S-S^{\star}, \label{prob-trans2}\\
& q(\vect{x},\vect y) =\frac{k-n+1}{k}\,q^{\star}(\vect{x}-{\bf 1},\vect y)   \quad \hbox{ for } 
\vect{x}, \vect y\in S^{\star},\label{prob-trans3}
\end{align}
where $q^{\star}(\vect{x},\vect{y})$, $\vect{x}\in S^{c}$, $ \vect{y}\in S^{\star}$,  is the probability that  when in state $\vect{x}\in S^c$, $X$ will re-enter $S^{\star}$ through state $\vect{y}$. Clearly, ${\sum_{\vect{y}\in S^{\star}}q^{\star}(\vect{x}, \vect{y})=1}$
for all $\vect{x}\in S^c$ so that  (\ref{prob-trans1})-(\ref{prob-trans3}) define the one-step transition probabilities of $Y$ in $S$. Following, we derive an expression for $q^{\star}$.

For $j=1,\ldots,n-1$, define $E_j$ as the set of vectors $\vect{r}=(r_1~\dots~r_{n-1})$ for which at least one entry among entries $1,\dots,j$ is equal to $1$, and entries $j+1,\dots,n-1$ can take any values in $\N$. \emph{I.e.},
 \begin{align}
\label{def:Ej}
 E_j\coloneqq\left\{\vect{r}: r_i\geq 1, 1\leq i\leq j; r_l\geq 0, j+1\leq l\leq n-1, \prod_{i=1}^j (r_i-1)=0\right\}.
\end{align}
{\em Warning: Lemma \ref{lem:sum-f} below is not referenced in the paper and can be skipped. We left it in this report so that the numbering of equations throughout the paper and
the numbering of Lemmas and Propositions in Section \ref{subsec:stability} are the same as in the journal paper.}

\begin{lemma}
\label{lem:sum-f}
For $j=1,\ldots,n-1$, define
\[
f_j(\vect r)= \sum_{l=1}^j \ind_{\{r_l=1\}} \frac{(|\vect r|-1)!}{r_1! \cdots r_{n-1}!}\left( \frac{1}{k}\right)^{|\vect r|} g(r_{n-1}), \quad
\vect{r}=(r_1~\ldots~r_{n-1}).
\]
For $j=1,\ldots,n-2$ and $N_i \coloneqq r_i+\cdots+r_{n-1}$ (assume $N_{n-1}=r_{n-1}$),
\begin{equation}
\label{sum-fj}
\sum_{\vect r\in E_j} f_j(\vect r)=\sum_{l=1}^j {j\choose l} \frac{l}{k^l} \sum_{r_{l+1}\geq 2, \ldots, r_j\geq 2\atop r_{j+1}\geq 0,\cdots, r_{n-1}\geq 0}
\frac{(l-1+N_{l+1})!}{r_{l+1}! \cdots r_{n-1}!}\left( \frac{1}{k}\right)^{N_{l+1}} g(r_{n-1}),
\end{equation}
for any mapping $g:\N\to (-\infty,\infty)$ such that the r.h.s. of (\ref{sum-fj})  is finite.

In addition, (\ref{sum-fj}) holds for $j=n-1$ if $g(i)=1$ for all $i\in \N$ provided that  the r.h.s. of (\ref{sum-fj})  is finite.
\end{lemma}
See Appendix \ref{app:proof-lemSumf} for a proof.
\begin{lemma}[Derivation of $q^{\star}$]\hfill\\
\label{lem:hat-q}
For $j=1,\ldots,n-1$, let $\vect{x}_j=(0~\ldots~0~x_{j+1}~\ldots~{x_{n-1}})\in S_j$ (\ie, $x_i\geq 1$ for $i=j+1,\ldots, n-1$). 
Then,
\begin{equation}
\label{hat-q}
q^{\star}(\vect{x}_j, \vect{r}(\vect{x}_j))= {k-n+j+1\choose k-n+1}  \frac{(|\vect{r}|-1)!}{r_1! \cdots r_{n-1}!}
 \left(\frac{1}{k}\right)^{\sum_{l=1}^{n-1} r_l} \sum_{l=1}^j \ind_{\{r_l=1\}},
\end{equation}
for all $\vect r(\vect x_j)\coloneqq \vect x_j+\vect r $ with  $\vect r=(r_1~\ldots~r_{n-1})\in E_j$ (notice that $\vect r(\vect x_j)\in S^{\star}$ when
 $\vect r\in E_j$), and
 \begin{equation}
 \sum_{\vect r\in E_j} q^{\star}(\vect x_j, \vect r(\vect x_j))=1.
 \label{lem:trans-proba}
 \end{equation}
Moreover, $q^{\star}(\vect x_j, \vect x^\prime)=0$  for all $\vect x^\prime \in S^{\star}$ if $\vect x^\prime$ is not of the form $\vect r(\vect x_j)
=\vect x_j+\vect r $ with $\vect r\in E_j$.
\end{lemma}
\begin{proof}
For the following, fix $j$ in $\{1,\ldots,n-1\}$. 
It is well-known\footnote{Formula (\ref{nb-paths}) is obtained by calculating the number of ways of depositing  $i_1+\cdots+i_{n-1}$ objects in $n-1$ bins with $i_1$ objects in the first bin, $i_2$ objects in the second bin, and so on.}  that the number of possible paths to go from state $\vect{0}$ to state $(i_1~\ldots~i_l)$ with $i_1+\cdots+i_l\geq 1$ is 
\begin{equation}
\label{nb-paths}
\frac{(i_1+\cdots+i_l)!}{i_1!\cdots i_l!},
\end{equation}
where by convention $0!=1$.

Recall that $\vect{r} \in E_j$. State $\vect x_j+\vect r$ is necessarily reached from one of the states $\vect x_j+\vect r -\vect e_l$ such that $r_l=1$ with $l\in \{1,\ldots,j\}$.
Since the number of paths connecting $\vect x_j$ to $\vect x_j+\vect r-\vect e_l$, denoted as $N_l(\vect r)$,  is equal to the number of paths connecting
$(0~\ldots~0)$ to $\vect r-\vect e_l$, we may use (\ref{nb-paths}) to obtain
\[
N_l(\vect r)=\frac{(r_1+\cdots+r_{n-1}-1)!}{r_1! \cdots r_{n-1}!} = \frac{(|\vect{r}|-1)!}{r_1! \cdots r_{n-1}!}.
\]
Therefore, the number of paths connecting $\vect x_j$ to $\vect x_j+\vect r$, denoted by $N(\vect r)$, is given by
\begin{equation}
N(\vect r)=\frac{(|\vect{r}|-1)!}{r_1! \cdots r_{n-1}!}\sum_{l=1}^j \ind_{\{r_l=1\}}.
\label{Nr}
\end{equation}
On the other hand,  each of the $N(\vect r)$ paths connecting $\vect x_j$ to $ \vect r(\vect x_j)$ has the same probability of occurrence, given by
\begin{align}
\prod_{i=1}^j \frac{k-n+i+1}{k i} \left(\frac{1}{k}\right)^{\sum_{i=1}^j  (r_i-1)} 
\left(\frac{1}{k}\right)^{\sum_{i=j+1}^{n-1} r_i} ={k-n+j+1\choose k-n+1} \left(\frac{1}{k}\right)^{|\vect{r}|}.
\label{prob-path}
\end{align}
Therefore, cf. (\ref{Nr})-(\ref{prob-path}),
\[
q^{\star}(\vect x_j, \vect r(\vect x_j))= {k-n+j+1\choose k-n+1} \frac{(|\vect{r}|-1)!}{r_1! \cdots r_{n-1}!}
 \left(\frac{1}{k}\right)^{\sum_{l=1}^{n-1} r_l}  \sum_{l=1}^j \ind_{\{r_l=1\}},
\]
which establishes (\ref{hat-q}).

Next, define $D_j=\sum_{\vect r\in E_j} q^{\star}(\vect x_j, \vect r(\vect x_j))$.  Using (\ref{hat-q}), we obtain
\begin{align}
D_j&={k-n+j+1\choose k-n+1}\sum_{\vect r\in E_j} \frac{(|\vect{r}|-1)!}{r_1! \cdots r_{n-1}!}\left(\frac{1}{k}\right)^{|\vect{r}|}\sum_{l=1}^j \ind_{\{r_l=1\}}.
\label{eq:dj}
\end{align}
We can rewrite the sum above as follows:
\begin{align*}
\sum_{l=1}^j {j \choose l}\frac{l}{k^l}
\sum_{r_{l+1}\geq 2,\dots, r_j\geq 2\atop r_{j+1}\geq 0,\dots,r_{n-1}\geq 0} \frac{(r_{l+1}+\cdots+r_{n-1}+l-1)!}{r_{l+1}! \cdots r_{n-1}!}\left(\frac{1}{k}\right)^{\sum\limits_{i=l+1}^{n-1}r_i},
\end{align*}
which, using the definition of $F_m(g;J,L)$ in (\ref{def-Fm}), equals
\begin{align*}
\sum_{l=1}^j {j \choose l}\frac{l}{k^l}F_{j-l}(1;n-1-l,l-1).
\end{align*}
Using Lemma \ref{lem:Gm}, this equals
\begin{align*}
\frac{1}{k}\sum_{l=1}^j {j \choose l}(-1)^{l+1}lF_{0}(1;n-1-l,0).
\end{align*}
Substituting this into (\ref{eq:dj}), we obtain
\begin{align}
D_j&= {k-n+j+1\choose k-n+1}\frac{1}{k}\sum_{l=1}^j {j \choose l}(-1)^{l+1}lF_{0}(1;n-1-l,0)
={k-n+j+1\choose k-n+1}\sum_{l=1}^j{j\choose l}\frac{(-1)^{l+1}l}{k-n+l+1},
\end{align}
where the latter equality is obtained by using (\ref{F0:g1}). Note that $D_1=1$. Since
\[
D_{j+1}-D_j=\frac{(k-n+j+1)!}{(k-n+1)! (j+1)!}\sum_{l=1}^{j+1} {j+1\choose l}(-1)^{l+1} l =0,
\]
where the last equality is true from (\ref{poly-result}), this proves that $1=D_1=D_2=\cdots=D_{n-1}$. This proves (\ref{lem:trans-proba}).

The last statement in the lemma is clearly true by definition of $\vect x_j$,  $S^{\star}$ and $E_j$.
\end{proof}
\begin{remark}
\label{rem:hat-q}
Note that $q^\star(\vect x_j, \vect r(\vect x_j))$ given in the r.h.s. of (\ref{hat-q}) does not depend on $x_{j+1},\ldots,x_{n-1}$.
Moreover, since the channels are statistically indistinguishable, the r.h.s. of (\ref{hat-q}) gives the transition probability from any $\vect z_j\in S_j$ to  $S^\star$.

Another interpretation of  (\ref{hat-q})  is to say that with probability 
\[
 {k-n+j+1\choose k-n+1}  \frac{(r_1+\cdots+ r_{n-1}-1)!}{r_1! \cdots r_{n-1}!}\left(\frac{1}{k}\right)^{\sum_{l=1}^{n-1} r_l} \sum_{l=1}^j \ind_{\{r_l=1\}},
\]
the MC $X$ will stay in $S^c$ for $|\vect r|$ time-steps, $\vect r\in E_j$, given that it is currently located in the set $S_j$ (note that $|\vect r|\geq j$ when
$\vect r\in E_j$).
\end{remark}
Next, we choose a suitable Lyapunov function and examine the drift of $Y$.
For any  (Lyapunov) function $V:\N^{n-1}\to [0,\infty)$, let  $\Delta_V(\vect x):=\E[V(\vect{Y}_{n+1}) - V(\vect x)\,|\, \vect{Y}_n=\vect x]$, $\vect x=(x_1~\ldots~x_{n-1})\in S$,  be  the drift of $Y$ associated with $V$. We will consider the function 
\begin{equation}
\label{def:V}
V(\vect x)=\sum_{i=1}^{n-1} x_i^2+ b\sum_{1\leq i<l\leq n-1} x_i x_l. 
\end{equation}
By writing $V(\vect x)$ as
\begin{equation}
\label{def:V-2}
V(\vect x)=\frac{1}{n-2}\sum_{1\leq i<l\leq n-1} (x_i-x_l)^2 +\left(\frac{2}{n-2}+b\right) \sum_{1\leq i<l\leq n-1} x_i x_l,
\end{equation}
we see that it is nonnegative when $b\geq -\frac{2}{n-2}$ for all $\vect x\in S$.
\begin{proposition}[Drift of $Y$]\hfill
\label{prop:drift}

For $\vect x=(x_1~\ldots~x_{n-1})\in S-S^{\star}$,
\begin{align}
\label{prop:delta-1}
k\times \Delta_V(\vect x)=&- (k-n) (2+b(n-2))\sum_{i=1}^{n-1} x_i
+(n-1)\left(k-n+2+ b\frac{(k-n+1)(n-2)}{2}\right).
\end{align}
For $\vect x_j'=(1~\ldots~1~x_{j+1}~\ldots~x_{n-1})\in S^{\star}_j$  (\ie, $x_i\geq 2$ for ${i=j+1},\ldots, {n-1}$),
\begin{align}
k\times \Delta_V(\vect{x_j'}) =&
\Biggl[(k-n+1)(2+b(k-1))\sum_{i=2}^{j+1} \frac{1}{k-n+i}
-(k-n)(2+b(n-2))\Biggr]\sum_{i=j+1}^{n-1}x_i + \delta_j,
\label{prop:delta-2}
\end{align}
where $\delta_j$ is independent of $x_{j+1},\ldots,x_{n-1}$ and such that 
${\max_{1\leq j\leq n-2}|\delta_j|<\infty}$.
\end{proposition}
The proof of this proposition is in Appendix \ref{app:proof-drift}.

Using these results, we can use Foster's criterion to derive a sufficient stability condition for $X$ and complete the proof of Proposition \ref{prop:stability}.
For the following, $n$ and $k$ are fixed with $3\leq n\leq k$. Let $\epsilon>0$.
First take $\vect x=(x_1~\ldots~ x_{n-1})\in S-S^{\star}$. From (\ref{prop:delta-1}) we see that
 for $\sum_{i=1}^{n-1}x_i$ large enough, $\Delta_V(\vect x)$ can be made strictly less than $-\epsilon$ when $k>n$
and $b>-2/(n-2)$. This shows that, under these conditions, Foster's criterion applies to states in $S-S^{\star}$.

Consider now states in $S^{\star}$. Let $\vect x_j\in S^{\star}$ be a state that has exactly $j$  entries equal to $1$ with $j=1,\ldots,n-2$ (the case $j=n-1$, \ie, state $\vect{x}_{n-1}=(1~\dots~1)$, will be addressed below).
Since any permutation of the entries of  $\vect x\in S$  is equally likely to occur, without loss of generality we can consider the state 
$\vect x_j'=(1~ \ldots~ 1~ x_{j+1}~\ldots~x_{n-1})$ with  $x_i\geq 2$ for $i=j+1,\ldots,n-2$.

Assume that $b=-\frac{2(1-\alpha)}{n-2}$. Note that $b>-\frac{2}{n-2}$ if $\alpha>0$.
With this choice of $b$ and upon setting $k-n=m$, the coefficient of $\sum_{i=j+1}^{n-1}x_i$ in (\ref{prop:delta-2}), denoted by $C_j$,    $j=1,\ldots,n-2$, becomes
\begin{align}
C_j&=2\left(m+1-\left(\frac{1-\alpha}{n-2}\right)(m^2+nm+n-1)\right)\sum_{i=2}^{j+1} \frac{1}{m+i}
-2m\left(1-\frac{1-\alpha}{n-2}\times (n-2)\right)
\nonumber\\
&=-\frac{2}{n-2}\left((m+1)^2-\alpha(m^2 +mn+n-1)\right)\sum_{i=2}^{j+1} \frac{1}{m+i}-2m\alpha.
\label{Cj-2}
\end{align}
Assume now that $0<\alpha<\frac{1}{n-1}$. Then,
\begin{equation}
\label{proof:inq}
(m+1)^2-\alpha(m^2 +mn+n-1)>\frac{m(m+1)(n-2)}{n-1}>0,
\end{equation}
for all $m>0$. Hence, for $j=1,\ldots,n-2$, $m\geq 1$ (or equivalently for $k>n$)
\[
C_j< -\frac{2m(m+1)}{n-1}\sum_{i=2}^{j+1}\frac{1}{m+i}-2m\alpha < 0.
\]
This shows from (\ref{prop:delta-2}) and the finiteness of $\delta_j$ that for $\sum_{i=j+1}^{n-1}x_i$ large enough, $\Delta_V(\vect x_j)<-\epsilon$.

Define the finite set $F_M=\{\vect x\in S: \sum_{i=1}^{n-1} x_i\leq M\}$. Note that $\vect x_{n-1}=(1,\ldots,1)\in F_M$ as long as $M\geq n-1$, which we will assume to be true.
The above shows that, if $k>n$, there exists $M>0$ such that $\Delta_V(\vect x)<-\epsilon$ for all
$\vect x\in F^c_M=\{\vect x\in S: \sum_{i=1}^{n-1} x_i> M\}$, where $b=-\frac{2\alpha}{n-2}$ with $0<\alpha<\frac{1}{n-1}$ in the definition of $V$.
On the other hand, $\E[V(\vect{Y}_{n+1})|\vect{Y}_n=\vect x]=\sum_{\vect y\in S} q(\vect x,\vect y)V(\vect{y}) <\infty$ for all  $\vect x\in F_M$, as   from any state $\vect x$ only a finite number of states 
$\vect y$ are reachable in one time unit (states $\vect x+\vect e_i$, $i=1,\ldots,n-1$ when $\vect x\in S$ and state $\vect x-{\bf 1}$ when $\vect x\in S-S^{\star}$),
which implies the finiteness of $\E[V(\vect{Y}_{n+1})|\vect{Y}_n=\vect x]$ when $\vect x$ belongs to the finite set $F_M$. Hence,
Foster's criterion \cite[pp. 167-168]{Bremaud99} applies to the  irreducible Markov chain $Y$ (the irreducibility of $Y$ is inherited from the irreducibility of $X$;  irreducibility can also be checked directly from 
(\ref{prob-trans1})-(\ref{prob-trans2})), which shows that $Y$ is positive recurrent on $S$ when $k>n$ and so is $X$ since $Y$ is embedded in $X$. Since $X$ is also aperiodic, this proves that it is stable when $k>n$.
\end{proof}

Above, we proved that the system is stable when $k>n$. Next, we will show that this condition is not only sufficient, but also necessary.
\begin{proposition}[Instability]
The Markov chain $X$ is unstable when $k=n$.
\label{prop:instability}
\end{proposition}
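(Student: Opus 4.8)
The plan is to show that when $k=n$ the embedded chain $Y$ on $S$ (equivalently the original chain $X$) is not positive recurrent, by exhibiting a Lyapunov-type function whose drift is nonnegative outside a finite set, and then invoking a converse to Foster's theorem (a Foster–Kaplan / Tweedie-style non-ergodicity criterion): if there is a function $V:\N^{n-1}\to[0,\infty)$ that is unbounded (i.e.\ $\{ \vect x : V(\vect x)\le c\}$ is finite for every $c$, or at least $V$ takes arbitrarily large values and the sublevel sets are finite), satisfies a bounded-jump condition, and has $\E[V(\vect Y_{t+1})-V(\vect Y_t)\mid \vect Y_t=\vect x]\ge 0$ for all $\vect x$ outside a finite set, then the chain is not positive recurrent. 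I would reuse the exact same quadratic function $V(\vect x)=\sum_i x_i^2+b\sum_{i<l}x_ix_l$ from (\ref{def:V}), and the drift formulas already computed in Proposition \ref{prop:drift}, specialized to $k=n$, i.e.\ $m:=k-n=0$.

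The key computational observations come for free from Proposition \ref{prop:drift} with $m=0$. For $\vect x\in S-S^\star$, formula (\ref{prop:delta-1}) gives $k\cdot\Delta_V(\vect x) = -(k-n)(2+b(n-2))\sum_i x_i + (n-1)(k-n+2+b(k-n+1)(n-2)/2)$, and with $k=n$ the coefficient of $\sum_i x_i$ vanishes, leaving $k\cdot\Delta_V(\vect x)=(n-1)(2+b(n-2)/2)$, a constant; choosing $b>-4/(n-2)$ (e.g.\ $b=0$) makes this strictly positive, hence $\Delta_V(\vect x)>0$ on all of $S-S^\star$. For $\vect x_j'\in S^\star_j$, formula (\ref{prop:delta-2}) with $m=0$ makes the coefficient $C_j$ of $\sum_{i=j+1}^{n-1}x_i$ equal to $(k-n+1)(2+b(k-1))\sum_{i=2}^{j+1}\frac1{k-n+i}-(k-n)(2+b(n-2)) = (2+b(n-1))\sum_{i=2}^{j+1}\frac1i$, which is $\ge 0$ once $b\ge -2/(n-1)$, plus the bounded term $\delta_j$ with $\max_j|\delta_j|<\infty$. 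So with a single admissible choice of $b$ (take $b=0$: then on $S-S^\star$ the drift is the positive constant $2(n-1)/n$, and on $S^\star_j$ the coefficient $C_j=2\sum_{i=2}^{j+1}1/i>0$ while $|\delta_j|$ is uniformly bounded), the drift $\Delta_V$ is bounded below by a positive constant outside the finite set $F_M^\star:=\{\vect x\in S^\star:\sum_i x_i\le M\}$ for $M$ large enough — on $S-S^\star$ it is always $>0$, and on each $S^\star_j$ the linear-in-$\sum_{i>j}x_i$ term dominates $\delta_j$ once $\sum_{i>j}x_i$ exceeds some threshold. Also $V$ has finite sublevel sets and bounded increments (from any state $\vect y$ only reachable in one step, $V$ changes by a bounded amount since each coordinate changes by at most $1$ and $\vect x$ near the boundary has most coordinates small — more precisely I should bound $|V(\vect Y_{t+1})-V(\vect x)|$ using $\sum_i x_i\le$ some affine function, which needs a touch of care and is the one place to be careful). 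Hence the non-ergodicity criterion applies to $Y$, so $Y$ is not positive recurrent, and therefore neither is $X$.

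The main obstacle I anticipate is not the drift sign — that falls out immediately by setting $m=0$ in Proposition \ref{prop:drift} — but verifying the technical hypotheses of whichever converse-Foster criterion one cites. In particular, standard versions (e.g.\ \cite[Theorem 2.2.7]{fayolle1999random} or the Kaplan/Tweedie criterion) require either bounded jumps of $V$ together with $\E[V(\vect Y_{t+1})-V(\vect Y_t)]\ge 0$, or a more delicate condition controlling $\E[(V(\vect Y_{t+1})-V(\vect Y_t))\ind_{\{V(\vect Y_{t+1})>V(\vect Y_t)\}}]$. The subtlety is that from a state in $S^\star$, the chain $Y$ can jump to a far-away state in $S^\star$ (re-entering through $q^\star$), so the jump of $V$ is \emph{not} uniformly bounded over $S^\star$; one must instead show that the expected jump of $V$ from $\vect x_j'$ is controlled by $O(\sum_{i>j}x_i)$, which is exactly what the structure of $q^\star$ in Lemma \ref{lem:hat-q} and the drift formula (\ref{prop:delta-2}) guarantee (the drift is itself linear in $\sum_{i>j}x_i$, and moments of the excursion length in $S^c$ are controlled by the tail of $q^\star$). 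So the cleanest route is: cite a converse-Foster theorem that only needs (i) $\Delta_V\ge 0$ off a finite set and (ii) a linear (rather than bounded) growth bound on the one-step conditional second moment of $V$, verify (ii) directly from Lemma \ref{lem:hat-q} and the fact that the excursion-length distribution in $S^c$ has all moments finite (geometric-type tails from repeated factors of $1/k$), and conclude $X$ is transient-or-null, hence unstable. If one prefers to avoid moment bookkeeping, an alternative is a direct first-passage argument: couple $Y$ restricted to the ``diagonal-type'' statistic $\sum_i x_i$ with a random walk that has nonnegative mean increment off a finite set and show $\sum_i X_t^i\to\infty$ a.s., but this essentially re-derives the same estimates, so I would go with the cited converse criterion.
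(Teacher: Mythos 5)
Your drift computations are correct and do follow immediately from Proposition \ref{prop:drift} with $m=k-n=0$, but the argument has a genuine gap at the step where you invoke a converse-Foster (non-ergodicity) criterion for the quadratic function $V(\vect x)=\sum_i x_i^2+b\sum_{i<l}x_ix_l$ on the embedded chain $Y$. Every standard criterion of this type (Kaplan; Sennott--Humblet--Tweedie; \cite[Theorem~2.2.7]{fayolle1999random}) requires, in addition to nonnegative drift off a finite set, a uniform control on the \emph{downward} excursions of $V$ --- either $\Prob(V(\vect Y_{t+1})-V(\vect Y_t)<-c\mid \vect Y_t=\vect x)=0$ for some constant $c$, or at least a uniform bound on $\E[(V(\vect Y_{t+1})-V(\vect Y_t))\ind_{\{V(\vect Y_{t+1})\leq V(\vect Y_t)\}}\mid \vect Y_t=\vect x]$. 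For your quadratic $V$ this fails: from $\vect x\in S-S^{\star}$ the chain moves to $\vect x-\vect 1$ with probability $(k-n+1)/k=1/n$, and $V(\vect x-\vect 1)-V(\vect x)=-(2+b(n-2))|\vect x|+O(1)$, so the expected negative part of the increment is of order $|\vect x|$ and is unbounded over the state space. Without that hypothesis the conclusion is simply false in general: the chain on $\N$ that jumps from $n$ to $n+1$ or to $0$ each with probability $1/2$ is positive recurrent, yet $V(n)=2^n$ is nonnegative, unbounded, and has strictly positive drift everywhere. So the issue you flagged (unbounded \emph{upward} re-entry jumps through $q^{\star}$) is not the decisive one; the downward jumps already block every version of the criterion you propose to cite, and controlling excursion moments in $S^c$ will not repair this.

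The paper avoids the problem entirely by applying Sennott's non-ergodicity theorem \cite[Theorem 2]{Sennott1985} directly to the \emph{original} chain $X$ with the \emph{linear} statistic $|\vect x|=\sum_i x_i$ (coordinatewise), for which the downward jump is at most $n-1$, so the required bound holds with $M=(k-n+1)/k$. One then checks that the total mean increment $\sum_{i}\sum_{\vect y}p_{\vect x,\vect y}(y_i-x_i)$ equals $-(n-1)(k-n)/k$ on states with all entries positive --- which is $\geq 0$ exactly when $k=n$ --- and is nonnegative on the boundary states, giving instability in a few lines. If you want to keep your route via $Y$ and a converse criterion, you would need to switch to a Lyapunov function with bounded (or uniformly integrable) downward increments, e.g.\ a linear one, at which point you are essentially reproducing the paper's argument on a more complicated chain.
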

\begin{proof}
Let $\vect x=(x_1~\ldots~ x_{n-1})\in R$ be the state of the DTMC $X$. Denote the transition probabilities of $X$ using $p_{\vect{x}, \vect{y}}$.

It is shown in \cite[Theorem 2]{Sennott1985} that if  there exists a positive integer $M$ such that $\sum_{|\vect y|<|\vect x|}p_{\vect x,\vect y}(y_i-x_i)\geq -M$ for
$1\leq i\leq n-1$, $\vect x\in R$, and if $\sum_{i=1}^{n-1} \sum_{\vect y} p_{\vect x,\vect y}(y_i-x_i) \geq 0$ for all $\vect x\in R$, then the DTMC is not ergodic.

Let us apply this result to $X$. Take first $\vect x=(x_1~\ldots~x_{n-1})$ with $x_i\geq 1$ for all $i$. The only state such that $|\vect x|>|\vect y|$ and into which $X$
can jump in one time step is $\vect x-{\bf 1}$. Therefore,
\begin{align*}
\sum_{|\vect y|<|\vect x|}p_{\vect x,\vect y}(y_i-x_i)=\frac{k-n+1}{k}\times (x_i-1-x_i)=-\frac{k-n+1}{k},
\quad \text{for all } i=1,\ldots, n-1.
\end{align*}
If $\vect x$ has at least one zero entry, $X$ cannot go  in one time step from $\vect x$ to a state $\vect y$ such that $|\vect x|>|\vect y|$ and for these states
any $M\geq 0$ works.  Therefore, Theorem 2 in \cite{Sennott1985} holds with (for instance) $M=\frac{k-n+1}{k}$. Note that $M>0$ since $k\geq n$.

Let us verify the second condition in Theorem 2. Take $\vect x=(x_1~\ldots~x_{n-1})$ with $x_i\geq 1$ for all $i$. Then,
\begin{align*}
&\sum_{i=1}^{n-1}\sum_{\vect y\in \N^{n-1}} p_{\vect x,\vect y}(y_i-x_i)=
\sum_{i=1}^{n-1}\left(\frac{k-n+1}{k} (x_i-1-x_i) + \frac{1}{k}
(x_i+1-x_i)\right)=-\frac{(n-1)(k-n)}{k},
\end{align*}
which is nonnegative if and only if  $k=n$ (under the constraint that $k\geq n$).

Take now  $\vect x_j=(0~\ldots~0~x_{j+1}~\ldots~x_{n-1})$  with $x_i\geq 1$ for $i=j+1,\ldots,n-1$, $j=1,\ldots,n-1$. We have
\begin{align*}
\sum_{i=1}^{n-1} \sum_{\vect y\in \N^{n-1}} p_{\vect x,\vect y}(y_i-x_i)&=
\sum_{i=1}^{n-1}\left( \sum_{l=1}^j p_{\vect x, \vect x+\vect e_l}(x_i+\ind_{\{l=i\}}-x_i)+\sum_{l=j+1}^{n-1}p_{\vect x, \vect x+\vect e_l}(x_i+\ind_{\{l=i\}}-x_i)\right)\\
&= \sum_{i=1}^{n-1}\left( \sum_{l=1}^j \frac{k-n+j+1}{kj} \ind_{\{l=i\}}+\sum_{l=j+1}^{n-1}\frac{1}{k}\ind_{\{l=i\}}\right),
\end{align*}
which is nonnegative for all $k\geq n$. We conclude that $X$ is unstable when $k=n$.
\end{proof}
\section{Finiteness of the First Moment}
\label{subsec:finEQ}
In this section, we show that when the system is stable, \ie, $k>n$, then $\E[|\vect{Q}|]<\infty$.  The finiteness of $\E[|\vect{Q}|]$ is used in the proof of Propositions 
\ref{capacity-n}  and \ref{prop:EQlonger}.

Recall from Section \ref{subsec:stability} that when $k>n$, both Markov chains $X$ and $Y$ are ergodic, with  stationary elements $\vect{Q}$ and $\vect{Y}$, respectively,
that is $\vect{X}_t\to \vect{Q}$ a.s. and $\vect{Y}_t\to \vect{Y}$ a.s.  as $t\to\infty$.

We will now argue that when $k>n$, $\E[|\vect{Y}|]< \infty$. From this, it will follow that $\E[|\vect{Q}|]<\infty$.

In case of potential confusion with the notation $|\vect{v}|=\sum_{i=1}^n v_i$ if $\vect{v}=(v_1,\ldots,v_n)$,
we will denote by $\abs(x)$ the absolute value of any real number $x$.

\begin{proposition}
\label{prop:expectationY}
$\E[|\vect{Y}|]<\infty$ when $k>n$.
\end{proposition}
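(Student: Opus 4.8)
The plan is to reuse, almost verbatim, the Lyapunov machinery already built for Proposition~\ref{prop:stability}. Recall the function $V$ of (\ref{def:V}) with the choice $b=-\tfrac{2\alpha}{n-2}$, $0<\alpha<\tfrac1{n-1}$, for which $V\ge 0$, and recall the drift formulas (\ref{prop:delta-1}) and (\ref{prop:delta-2}). The whole argument rests on upgrading what the stability proof establishes --- namely $\Delta_V<-\epsilon$ outside a finite set --- to a \emph{global linear} drift bound: there exist $c>0$ and $d<\infty$ such that
\[
\Delta_V(\vect x)\ \le\ -\,c\,|\vect x|\ +\ d\qquad\text{for every }\vect x\in S .
\]
Granting this, $\E[|\vect Y|]<\infty$ follows from a standard moment bound of Foster--Lyapunov type.

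To establish the global bound I would split $S$ exactly as in the stability proof. For $\vect x\in S-S^{\star}$, (\ref{prop:delta-1}) already reads $k\,\Delta_V(\vect x)=-2(1-\alpha)(k-n)\,|\vect x|+(n-1)\bigl(k-n+2+b\tfrac{(k-n+1)(n-2)}{2}\bigr)$, whose leading coefficient is strictly negative since $\alpha<1$. For $\vect x\in S^{\star}_j$ with $1\le j\le n-2$, a representative state is $(1~\ldots~1~x_{j+1}~\ldots~x_{n-1})$, so that $\sum_{i=j+1}^{n-1}x_i=|\vect x|-j$; substituting this into (\ref{prop:delta-2}) turns it into $k\,\Delta_V(\vect x)=C_j\,|\vect x|+(\delta_j-jC_j)$, and the stability proof has already shown (cf. (\ref{Cj-2})--(\ref{proof:inq})) that $C_j<-2(k-n)\alpha<0$ when $k>n$, while $|\delta_j|$ and $|C_j|$ are finite uniformly over the finitely many admissible $j$. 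The single remaining state $(1~\ldots~1)$ has a finite drift that can be absorbed into $d$. Choosing $c>0$ and $d<\infty$ appropriately then gives the displayed inequality.

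With the global bound in hand, the moment estimate proceeds along standard lines, the only subtlety being that one runs it from a fixed initial state $\vect x_0\in S$ rather than from stationarity (to avoid circularity). Since Proposition~\ref{prop:drift} guarantees that each $\Delta_V(\vect x)$ is a finite number, a short induction gives $\E_{\vect x_0}[V(\vect Y_t)]<\infty$ for all $t\ge1$. Telescoping and using $V\ge0$,
\[
0\ \le\ \E_{\vect x_0}[V(\vect Y_{N+1})]\ =\ \E_{\vect x_0}[V(\vect Y_1)]+\sum_{t=1}^{N}\E_{\vect x_0}[\Delta_V(\vect Y_t)]\ \le\ V(\vect x_0)+d-c\sum_{t=1}^{N}\E_{\vect x_0}[|\vect Y_t|]+Nd ,
\]
hence $\tfrac1N\sum_{t=1}^{N}\E_{\vect x_0}[|\vect Y_t|]\le \tfrac dc+O(1/N)$. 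Since $Y$ is irreducible and positive recurrent (Proposition~\ref{prop:stability}), its Ces\`aro-averaged occupation measures converge to $\pi_Y$, i.e. $\tfrac1N\sum_{t=1}^{N}\Prob_{\vect x_0}(\vect Y_t=\vect y)\to\pi_Y(\vect y)$ for every $\vect y\in S$; Fatou's lemma then yields $\E[|\vect Y|]=\sum_{\vect y\in S}|\vect y|\,\pi_Y(\vect y)\le\liminf_{N}\tfrac1N\sum_{t=1}^{N}\E_{\vect x_0}[|\vect Y_t|]\le d/c<\infty$.

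The step I expect to require the most care is not the moment estimate itself but the two finiteness points hiding inside it. First, one must check that $\E[V(\vect Y_{t+1})\mid\vect Y_t=\vect x]$ is finite at all: through a long excursion of $X$ in $S^c$, the embedded chain $Y$ can jump directly to states of arbitrarily large norm, so one needs the excursion law $q^{\star}$ of Lemma~\ref{lem:hat-q} to be summable against the quadratic $V$; this holds because the relevant multinomial series converges, $n-1$ being strictly smaller than $k$. Second, one must resist short-circuiting the last step by writing $\E_{\pi_Y}[\Delta_V(\vect Y)]=0$, as $\E_{\pi_Y}[V(\vect Y)]$ is not a priori known to be finite --- which is precisely why the argument goes through a fixed start, Ces\`aro averages, and Fatou rather than a one-line balance identity.
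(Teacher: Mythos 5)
Your proof is correct, and while it extracts exactly the same global linear drift bound from Proposition \ref{prop:drift} as the paper does, it finishes differently. The paper's proof (Appendix \ref{app:expYproof}) applies Theorem 1 of Tweedie (1983) with $f(\vect x)=c|\vect x|+d$ and the finite set $A_M$; this forces it to verify two extra hypotheses beyond the drift inequality, namely $c|\vect x|+d\geq 0$ and $V(\vect x)\geq c|\vect x|+d$ on $A_M^c$, the latter requiring a genuine computation (locating the unique critical point of $V(\vect x)-c|\vect x|-d$ and checking positive definiteness of its Hessian for $\alpha$ small). Your route --- run the chain from a fixed $\vect x_0$, propagate finiteness of $\E_{\vect x_0}[V(\vect Y_t)]$ by induction, telescope, and pass to $\pi_Y$ via Ces\`aro averages and Fatou --- is essentially a self-contained proof of the relevant special case of Tweedie's theorem and sidesteps the minorization of $V$ by $c|\vect x|+d$ entirely; what it costs you is the explicit bookkeeping of the induction and the Ces\`aro limit, which the citation would have absorbed. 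Your two flagged subtleties are exactly the right ones: the summability of $q^{\star}$ against the quadratic $V$ is what the paper itself establishes when proving finiteness of $\gamma_j$ and $\beta_j$ (Appendix \ref{app:finitess}, via the generating-function identities (\ref{app:diff2}) and (\ref{app:diff-cross}), which converge precisely because $(n-1)/k<1$), and the refusal to write $\E_{\pi_Y}[\Delta_V(\vect Y)]=0$ is the correct way to avoid circularity.

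One small caveat: you take $b=-\tfrac{2\alpha}{n-2}$, but the bound $C_j<-2(k-n)\alpha$ in (\ref{Cj-2})--(\ref{proof:inq}) that you invoke is derived under $b=-\tfrac{2(1-\alpha)}{n-2}$ (the paper itself states both at different points of Section \ref{subsec:stability}; the appendix consistently uses the latter). With $b=-\tfrac{2\alpha}{n-2}$ and $\alpha$ small, the coefficient $(k-n+1)(2+b(k-1))\sum_{i=2}^{j+1}\frac{1}{k-n+i}$ in (\ref{prop:delta-2}) stays positive and can dominate for large $j$, so that choice does not give a uniformly negative $C_j$. Use $b=-\tfrac{2(1-\alpha)}{n-2}$ throughout; then the $S-S^{\star}$ coefficient becomes $-2\alpha(k-n)<0$ and your argument goes through unchanged.
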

Proof of this proposition is in Appendix \ref{app:expYproof}.
\begin{proposition}
\label{prop:expectationX}

$\E[|\vect{Q}|]<\infty$ when $k>n$.
\end{proposition}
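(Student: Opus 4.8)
The plan is to deduce $\E[|\vect Q|]<\infty$ from $\E[|\vect Y|]<\infty$ (Proposition~\ref{prop:expectationY}) by the cycle (regeneration) formula that relates the stationary law $\pi$ of $X$ to the stationary law $\nu$ of the embedded chain $Y$, which watches $X$ on the set $S=R_0$. When $k>n$ both chains are ergodic (Proposition~\ref{prop:stability}), $\nu$ is the normalized restriction of $\pi$ to $S$, i.e.\ $\nu(\vect y)=\pi(\vect y)/\pi(S)$, and the standard identity for an induced Markov chain gives, for every $f:R\to[0,\infty)$,
\[
\E_\pi[f(\vect X)]=\pi(S)\,\E_\nu\!\Big[\sum_{m=0}^{\sigma-1} f(\vect X_m)\Big],\qquad \sigma:=\inf\{m\ge 1:\vect X_m\in S\},
\]
with $\vect X_0\sim\nu$ on the right-hand side (see, e.g., \cite{Bremaud99}). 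Taking $f=|\cdot|$ reduces the claim to bounding $\E_{\vect y}[\sum_{m=0}^{\sigma-1}|\vect X_m|]$ by an affine function of $|\vect y|$, uniformly over $\vect y\in S$.

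To get that bound I would analyze a single excursion out of $S$. From $\vect y\in S=R_0$ the chain moves to $\vect y-\vect 1$ with probability $(k-n+1)/k$ and to a neighbour $\vect y+\vect e_i$ with probability $(n-1)/k$; the up-steps, and also the down-step when $\vect y\notin S^{\star}$, keep $X$ in $S$, so then $\sigma=1$ and the excursion sum is $|\vect y|$. When $\vect y\in S^{\star}_j$ (exactly $j$ entries equal to $1$) the down-step lands in $S_j\subset S^c$, and by \eqref{eq:trans2} every subsequent step taken inside $S^c$ raises $|\vect X_m|$ by exactly one, with $X$ re-entering $S$ precisely when the last zero coordinate is filled. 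Writing $W_j$ for the number of steps spent in $S^c$, this gives $\sigma=1+W_j$ and
\[
\sum_{m=0}^{\sigma-1}|\vect X_m|=|\vect y|+\sum_{m=0}^{W_j-1}\big(|\vect y|-(n-1)+m\big)=(1+W_j)\,|\vect y|-(n-1)W_j+\tfrac12 W_j(W_j-1).
\]
The key structural point is that inside $S^c$ the number of zero coordinates is non-increasing and decreases one at a time: from a state with $i\ge 1$ zeros it drops with probability $p_i:=(k-n+1+i)/k$ and is otherwise unchanged (from \eqref{eq:trans2}, the aggregate probability of incrementing a currently-zero entry is $i\cdot\frac{k-(n-1-i)}{ki}=p_i$). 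Hence $W_j$ has the law of a sum $\sum_{i=1}^{j}G_i$ of independent geometric variables with parameters $p_i\ge (k-n+2)/k>0$, so $\E[W_j]$ and $\E[W_j^2]$ are finite and bounded, uniformly over $1\le j\le n-1$, by constants $\mu_1,\mu_2<\infty$ depending only on $k,n$; crucially, these bounds do not depend on $|\vect y|$.

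Combining, for every $\vect y\in S$ (with $W_0:=0$ when $\vect y\notin S^{\star}$),
\[
\E_{\vect y}\!\Big[\sum_{m=0}^{\sigma-1}|\vect X_m|\Big]\le \big(1+\E[W_{j(\vect y)}]\big)\,|\vect y|+\tfrac12\E[W_{j(\vect y)}^2]\le (1+\mu_1)\,|\vect y|+\tfrac12\mu_2,
\]
where $j(\vect y)\le n-1$ is the number of $1$-entries of $\vect y$. Averaging over $\vect y\sim\nu$ and using that $\nu$ is the stationary law of $Y$, the right-hand side integrates to $(1+\mu_1)\,\E[|\vect Y|]+\tfrac12\mu_2<\infty$ by Proposition~\ref{prop:expectationY}, whence $\E[|\vect Q|]=\pi(S)\,\E_\nu[\sum_{m=0}^{\sigma-1}|\vect X_m|]<\infty$.

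The step I expect to be most delicate is the clean justification of the cycle formula (and of $\nu=\pi|_S/\pi(S)$) on this countably infinite state space; an alternative that sidesteps it is to write $T^{-1}\sum_{t\le T}|\vect X_t|$ as a ratio of per-excursion sums to per-excursion lengths, apply the ergodic theorem to $Y$ together with the strong law of large numbers for the i.i.d.-dominated quantities $W_j$, and interpolate routinely for times $T$ falling between two successive visits of $X$ to $S$. Everything else — identifying the excursion length with a sum of geometrics and deriving the affine-in-$|\vect y|$ bound — is elementary.
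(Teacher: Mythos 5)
Your argument is correct, and it reaches the conclusion by a genuinely different mechanism than the paper. The paper works pathwise: it compares $|\vect{X}_t|$ with $|\vect{Y}_{M_t}|$ at a \emph{fixed} time $t$, observes that the two coincide whenever $\vect{X}_t\in S$ and that during an excursion in $S^c$ the gap is controlled by the elapsed excursion time, bounds $\E[\abs(|\vect{X}_t|-|\vect{Y}_{M_t}|)]\leq D_0$ uniformly in $t$ using only \emph{first} moments of the excursion durations $\Psi_j=\sum_{l=1}^j k/(k-n+l+1)$, and then passes to the limit via Fatou's lemma to get $\E[\abs(|\vect{Q}|-|\vect{Y}|)]\leq D_0$. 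You instead invoke the cycle (Kac-type) representation of $\pi$ through the chain induced on $S$, $\E_\pi[f]=\pi(S)\,\E_\nu[\sum_{m=0}^{\sigma-1}f(\vect{X}_m)]$, and reduce everything to an affine-in-$|\vect{y}|$ bound on the per-excursion sum; your identification of the excursion structure (each step in $S^c$ raises $|\vect{X}_m|$ by one, the number of zero coordinates decreases one level at a time with geometric holding times of parameter $p_i=(k-n+1+i)/k$) matches the paper's computation of $\Psi_j$, which the paper obtains by uniformizing a continuous-time argument. Your route is the more systematic regenerative one: it yields an explicit quantitative bound $\E[|\vect{Q}|]\leq \pi(S)\bigl((1+\mu_1)\E_\nu[|\vect{Y}|]+\tfrac12\mu_2\bigr)$ and would extend immediately to higher moments of $|\vect{Q}|$ given corresponding moments of $|\vect{Y}|$, at the price of needing second moments of the excursion lengths (harmless, as they are sums of at most $n-1$ geometrics) and of the representation formula for $\pi$ via the induced chain. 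That formula, which you flag as the delicate point, is in fact standard for countable positive recurrent chains and holds for nonnegative $f$ with both sides possibly infinite, so there is no gap; the paper's coupling-plus-Fatou device simply sidesteps it and gets by with first moments only. Both proofs rest on the same two pillars: Proposition \ref{prop:expectationY} and the uniform (in the entry state) control of excursions out of $S$.
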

\begin{proof}
Let $\{\tau_i\}_{i\geq 1}$ be the successive times when $X$ enters $S^c$ and let  $\{T_i\}_{i\geq 1}$ be the successive times when $X$ leaves $S^c$ (\ie, enters $S$).
Since the MCs $X =\{\vect{X}_t\}_{t\geq 1}$ and $Y=\{\vect{Y}_t\}_{t\geq 1}$ are both ergodic when $k>n$, and since we will let $t\to\infty$ in the following, we may assume 
without loss of generality that  $\vect{X}_1\in S$ (implying that $\vect{Y}_1=\vect{X}_1$ -- see (\ref{def:Y})), so that $1<\tau_1<T_1<\tau_2<T_2<\cdots$.
Recall that $S^c=\cup_{j=1}^{n-1}S_j$.
With these definitions, $\vect{X}_t\in S^c$ when $t\in \cup_{i\geq 1}[\tau_i,T_i)$ and $\vect{X}_t\in S$ when  $t\in \cup_{i\geq 0}[T_i, \tau_{i+1})$ with $T_0=1$. 

Define
\begin{equation}
\label{def:Mt}
M_t= \sum_{i=1}^t {\ind}_{\{\vect{X}_i\in S\}}.
\end{equation}
As an illustration, assume that  $X_1\in S$, $X_2\in S^c$,  $X_3\in S^c$, $X_4\in S^c$, and $X_5\in S$.  Then,
$M_1=1$, $M_2=1$, $M_3=1$, $M_4=1$, and $M_5=2$.
Let us focus on the difference $|\vect{X}_t|- |\vect{Y}_{M_t}|$. By construction of the MC $Y$,  $\vect{X}_t=_{st}\vect{Y}_{M_t}$ when $\vect{X_t}\in S$, that is,
when $t\in \cup_{i\geq 0}[T_i, \tau_{i+1})$.  
%
Therefore,
\begin{align}
\E[\hbox{abs}( |{\vect{X}}_t|-|{\vect{Y}}_{M_t}|) ]&=\sum_{i\geq 1} \E\left[ \hbox{abs}( |{\vect{X}}_t|-|{\vect{Y}}_{M_t}|) {\ind}_{\{\tau_i\leq t<T_i\}}\right]\nonumber\\
&= \sum_{i\geq 1} \E[ \hbox{abs}(|{\vect{X}}_t|-|{\vect{Y}}_{M_t}| ) \,|\, \tau_i\leq t<T_i] \Prob  (\tau_i\leq t<T_i).
\label{difference1}
\end{align}
Conditioned on $\vect{X}_{\tau_i}=\vect x_j\in S_j$,  $j=1,\ldots,n-1$, \[
|\vect{X}_t|=|\vect x_j|+ t-\tau_i,\quad |\vect{Y}_{M_t}|= |\vect x_j|+ n-1,
\]
for $t=\tau_i,\tau_i+1, \ldots,T_i-1$,  since $|\vect{X}_t|$ increases by $1$ at times $t=\tau_i+1, \ldots, T_i-1$, $Y_{M_{\tau_i}}= |\vect x_j|+ n-1$, and 
$M_{\tau_i}=\ldots= M_{T_i-1}$ by definition of $M_t$.
Hence, by (\ref{difference1}),
\begin{align}
\hspace{-1.7mm}
\E[\hbox{abs}( |{\vect{X}}_t|&-|{\vect{Y}}_{M_t}|) ] 
=\sum_{i\geq 1}\sum_{j=1}^{n-1}
\sum_{\vect x_j\in S_j}\E[| t-\tau_i -(n-1)| \,|\,\vect{X}_{\tau_i}=\vect x_j, \tau_i\leq t<T_i]
\Prob  (\vect{X}_{\tau_i}=\vect x_j, \tau_i\leq t<T_i)\nonumber\\
&\leq\sum_{i\geq 1}\sum_{j=1}^{n-1}\sum_{\vect x_j\in S_j}\left(\E[T_i-\tau_i\,|\,\vect{X}_{\tau_i}=\vect x_j, \tau_i\leq t<T_i] +n-1\right)
\Prob  (\vect{X}_{\tau_i}=\vect x_j, \tau_i\leq t<T_i).
\label{difference2}
\end{align}
Let us focus on $\Psi_j:=\E[T_i-\tau_i\,|\,\vect{X}_{\tau_i}=\vect x_j, \tau_i\leq t<T_i]$ in (\ref{difference2}).
%
Assume for the time being that $X$ is a continuous-time Markov chain, with $\mu$ the rate at which a link generates an entanglement.
The time spent by $X$ in $S_l$  is an exponential rv  $Z_l$ with rate $\mu(k-n+l+1)$, so that the  time spent in 
$S^c$ starting from $S_j$ is $\sum_{l=1}^j  Z_l$, and the expected time is given by $\sum_{l=1}^j \frac{1}{\mu(k-n+l+1)}$.
We come back to the discrete-time MC by uniformizing the continuous-time Markov chain at rate $k\mu$, which yields
\begin{equation}
\Psi_j=\sum_{l=1}^j \frac{k}{k-n+l+1}.
\label{sign3}
\end{equation}
Define 
\begin{align*}
D_0&=\max_{1\leq j\leq n-1}\left\{\sum_{l=1}^{j} \frac{k}{k-n+l+1}+n-1\right\}
=\sum_{l=1}^{n-1}\frac{k}{k-n+l+1}+n-1<\infty.
\end{align*}
Combining (\ref{difference2}) and (\ref{sign3}), we obtain 
\begin{align}
\E[\abs(|\vect{X}_t|-|{\vect{Y}}_{M_t}|)]&\leq D_0
\sum_{i\geq 1}\sum_{j=1}^{n-1}\sum_{\vect x_j\in S_j}\Prob  (\vect{X}_{\tau_i}=\vect x_j, \tau_i\leq t<T_i)
= D_0\sum_{i\geq 1}\sum_{j=1}^{n-1} \Prob  (\vect{X}_{\tau_i}\in S_j,\tau_i\leq t<T_i)\nonumber\\
&= D_0\sum_{i\geq 1}\Prob (\vect{X}_{\tau_i}\in S^c, \tau_i\leq t<T_i)\quad \hbox{as } S^c =\cup_{j=1}^{n-1} S_j,\nonumber\\
& =   D_0\sum_{i\geq 1}\Prob (\tau_i\leq t<T_i)\leq D_0,
\label{differenceX-Y-0}
\end{align}
since $ \sum_{i\geq 1}\Prob (\tau_i\leq t<T_i)=\E\left[\sum_{i\geq 1} {\ind}_{\{\tau_i\leq t<T_i\}}\right] \leq 1$.  

By Fatou's lemma,
\[
\E[\liminf_t \abs(|\vect{X}_t|-|{\vect{Y}}_{M_t}|)]\leq \liminf_t \E[ \abs(|\vect{X}_t|-|{\vect{Y}}_{M_t}|)]
\]
so that
\begin{equation}
\label{inq-abs}
\E[\abs(|\vect{Q}|-|{\vect{Y}}|)] \leq D_0,
\end{equation}
by using (\ref{differenceX-Y-0}) and the fact that $\vect{X}_t\to \vect{Q}$ a.s. and $\vect{Y}_{M_t}\to \vect{Y}$ a.s. as $t\to\infty$ when $k>n$
(Hint: $M_t\to\infty$ a.s. as $t\to\infty$, since $X$ is irreducible and recurrent on $S\cup S^c$, which implies, in particular, that it visits $S$ infinitely often).

The inequality $x\leq |x|$ which holds for any real number $x$, yields
\[
|\vect{Q}|\leq \abs(|\vect{Q}|-|{\vect{Y}}|) + |\vect{Y}| \quad \hbox{a.s.}
\]
which in turn gives, by using (\ref{inq-abs}),
\[
\E[|\vect{Q}|]\leq D_0+E[|{\vect{Y}}|] <\infty
\]
from Proposition \ref{prop:expectationY}. This concludes the proof.
\end{proof}
%

\section{Conclusion}
\label{sec:conclusion}
We analyze an assembly-like stochastic queueing system with one central node serving multiple users in a star topology. This system is analogous to a quantum switch serving $n$-partite maximally entangled states to sets of users from a total of $k\geq n$ users. When the switch has infinite memory, we find that the process is stable if and only if $k>n$. We derive closed-form expressions for the switch capacity and expected number of qubits in memory at the switch, under the assumptions that all links are identical, link-level entanglement generation is a Poisson process, and that entanglement swapping operations are instantaneous but may fail. We find that for the protocol in which any $n$ users wish to share an entangled state, memory requirements for the switch are generally low as long as $n$ is not too large, although this conclusion may differ in the presence of user demands. These results are of interest to quantum communication, while at the same time make a novel contribution to queueing theory.

For possible future directions, it may be of interest to $(i)$ analyze a variant of this system wherein the switch buffer is finite, $(ii)$ consider a variant where the links are heterogeneous -- \ie, link-level entanglement generation rates differ from link to link; and $(iii)$ incorporate finite memory coherence times into the model. The last point may have further implications in that when decoherence is introduced into the system, we must find a way to model the fact that quantum state fidelity decreases with time (recall that in our model, we assume that all states, if generated successfully, have unit fidelity). These extensions are nontrivial, but we anticipate that our work will serve as either a starting or a comparison basis for future work on this problem.
\begin{acks}
This research is supported in part by the National Science Foundation under grants ECCS-1640959 and CNS-1617437.
The authors are thankful to Alain Jean-Marie (Inria) for stimulating discussions during the course of this work.
\end{acks}

\bibliographystyle{ACM-Reference-Format}
\bibliography{qnet}

\appendix

\section{General Results}
\label{app:generic}
For a non-negative vector $\vect{n}=(n_1~\dots~n_J)\geq 0$, define
\begin{align}
F_m(g;J,L)=\sum_{n_1\geq 2, \ldots, n_m\geq 2\atop n_{m+1}\geq 0, \ldots, n_J\geq 0}\frac{(L+|\vect{n}|)!}{n_1!\cdots n_J!} \left(\frac{1}{k}\right)^{|\vect{n}|}g(n_J),
\label{def-Fm}
\end{align}
where $0\leq m\leq J$ and $L\geq 0$.  
Note that
\begin{equation}
\label{def:F0JL}
F_0(g;J,L)=\sum_{n_1\geq 0,\ldots, n_J\geq 0}\frac{(L+n_1+\cdots+n_J)!}{n_1!\cdots n_J!} \left(\frac{1}{k}\right)^{n_1+\cdots+n_J}g(n_J),
\end{equation}
and 
\begin{equation}
\label{def:FJJL}
F_J(g;J,L)=\sum_{n_1\geq 2,\ldots, n_J\geq 2}\frac{(L+n_1+\cdots+n_J)!}{n_1!\cdots n_J!} \left(\frac{1}{k}\right)^{n_1+\cdots+n_J}g(n_J).
\end{equation}
Of particular interest will be the mappings
\begin{equation}
\label{def:mappings-g}
g_1(i)\coloneqq 1,\quad g_2(i)\coloneqq i,\quad g_3(i)\coloneqq i-1. 
\end{equation}
For later use, notice that  
\begin{eqnarray}
F_0(g_1;J,L)&=& \frac{L! k^{L+1}}{(k-J)^{L+1}}, \label{F0:g1}\\
F_0(g_2;J,L)&=& \frac{(L+1)! k^{L+1}}{(k-J)^{L+2}},\label{F0:g2}\\
F_0(g_3;J,L)&=& -\frac{L! k^{L+1}(k-J-L-1)}{(k-J)^{L+2}}, \label{F0:g3}
 \end{eqnarray}
for $k>J$ and $L\geq 0$, by applying formulas (\ref{gen-formula}), (\ref{app:diff}) and (\ref{app:formula-0}) in Appendix \ref{app:useful-formulas}, respectively, with $z=\frac{1}{k}$.

\begin{lemma}
\label{lem:recursion} 
Let $g:\N\to (-\infty,+\infty)$ be such that $F_0(g;J,L)$ is finite for $k>J$ and $L\geq 0$ 
Then, for every $m=1,\ldots,J-1$, $F_m(g;J,L)$ is finite for $k>J$ and $L\geq 0$, and satisfies the recursion 
\begin{equation}
\label{recursion-Fm}
F_m(g;J,L)=F_{m-1}(g;J,L)- F_{m-1}(g;J-1,L)-\frac{1}{k} F_{m-1}(g;J-1,L+1).
\end{equation}
In addition, (\ref{recursion-Fm}) holds for $m=n-1$ if $g(i)=1$ for all $i\in \N$.
\end{lemma}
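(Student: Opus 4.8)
The plan is to obtain the recursion (\ref{recursion-Fm}) by peeling off the summation variable $n_m$ in $F_{m-1}(g;J,L)$ according to the three cases $n_m=0$, $n_m=1$, and $n_m\geq 2$, and identifying each of the three resulting pieces. By the definition (\ref{def-Fm}), the $n_m\geq 2$ piece is exactly $F_m(g;J,L)$. In the $n_m=0$ piece the factor $1/n_m!$ becomes $1$, $|\vect n|$ is unchanged, and --- crucially, because $m\leq J-1$ --- the argument $n_J$ of $g$ is not disturbed; relabelling the remaining $J-1$ indices so that the old $n_J$ becomes the new last index, on which $g$ still acts, this piece is precisely $F_{m-1}(g;J-1,L)$. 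In the $n_m=1$ piece one extracts a factor $(1/k)^{1}$, the factorial argument $L+|\vect n|$ turns into $(L+1)+|\vect n'|$ with $\vect n'$ the remaining $J-1$ indices, and $g$ is again undisturbed, so this piece equals $\tfrac1k F_{m-1}(g;J-1,L+1)$. Solving for $F_m(g;J,L)$ then gives the claimed identity; the genuine work is (a) justifying absolute convergence of every sum involved so that this regrouping is legitimate, and (b) the boundary case $m=J$, where the split no longer closes unless $g$ is constant.

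For (a) I would argue as follows. For any $0\leq m'\leq J'$ and $L'\geq 0$ the set of multi-indices summed in $F_{m'}(g;J',L')$ is contained in that of $F_0(g;J',L')$, with the same general term, so $|F_{m'}(g;J',L')|\leq F_0(|g|;J',L')$; hence it suffices to bound the three $F_0(|g|;\cdot,\cdot)$ that enter (\ref{recursion-Fm}). Reading ``$F_0(g;J,L)$ is finite'' as absolute summability, i.e.\ $F_0(|g|;J,L)<\infty$ (which holds in all of our applications), I would note that fixing the index $n_{J-1}=0$ inside $F_0(|g|;J,L)$ and relabelling the $J-1$ surviving indices (so that the old $n_J$ becomes the new last index, on which $g$ acts) exhibits $F_0(|g|;J-1,L)$ as a subsum of $F_0(|g|;J,L)$, whence $F_0(|g|;J-1,L)\leq F_0(|g|;J,L)<\infty$; fixing instead $n_{J-1}=1$ exhibits $\tfrac1k F_0(|g|;J-1,L+1)$ as a subsum, whence $F_0(|g|;J-1,L+1)<\infty$. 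All three ambient sums are therefore finite, so $F_{m-1}(g;J,L)$, $F_{m-1}(g;J-1,L)$ and $F_{m-1}(g;J-1,L+1)$ are absolutely convergent, $F_{m-1}(g;J,L)$ may be split on $n_m$ and regrouped as in the first paragraph, and $F_m(g;J,L)$ is finite (being dominated by $F_0(|g|;J,L)$) --- proving the recursion and finiteness for every $1\leq m\leq J-1$.

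For (b), when $m=J$ the peeled variable $n_J$ is the very argument of $g$: setting $n_J=0$ (resp.\ $n_J=1$) replaces $g(n_J)$ by the constant $g(0)$ (resp.\ $g(1)$), which agrees with $g$ evaluated at the new last index only when $g$ is constant, so in general the recursion does not close at $m=J$. When $g\equiv 1$ this obstruction disappears and the identical three-way split of $F_{J-1}(g;J,L)$ on $n_J$ reproduces (\ref{recursion-Fm}) for $m=J$ word for word, with finiteness of every term guaranteed by (\ref{F0:g1}) since $k>J$. (In the application one has $J=n-1$, which is why this boundary statement is phrased for $m=n-1$.)

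I expect the main obstacle to be not any single computation but the convergence bookkeeping in (a): one must make sure that each sum produced by dropping an index and shifting $L$ stays absolutely convergent, and must pinpoint exactly where the clean identification of the $n_m=0$ and $n_m=1$ pieces degrades --- namely at $m=J$, where $g$ is evaluated at the variable being removed --- and why constancy of $g$ is the precise condition that repairs it.
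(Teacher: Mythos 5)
Your proof is correct and follows essentially the same route as the paper's: the paper also obtains (\ref{recursion-Fm}) by writing the sum over $n_m\geq 2$ as the sum over $n_m\geq 0$ minus the $n_m=0$ and $n_m=1$ contributions, identifying the latter two as $F_{m-1}(g;J-1,L)$ and $\tfrac1k F_{m-1}(g;J-1,L+1)$, and then notes that the same computation works for $m=J$ only when $g\equiv 1$. Your additional bookkeeping on absolute convergence and your explicit explanation of why the boundary case forces $g$ constant are points the paper leaves implicit, but they do not change the argument.
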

\begin{proof} 
Assume first that  $m=1,\ldots,J-1$.
Define $N_{\setminus n_m} \coloneqq n_1+\cdots+n_{m-1}+n_{m+1}+\cdots n_J$.
 From (\ref{def-Fm}), we have
\begin{align}
F_m(g;J,L)&=\sum_{n_1\geq 2, \ldots,n_{m-1}\geq 2\atop n_m\geq 0,\dots, n_J\geq 0} \frac{(L+n_1+\cdots+n_J)!}{n_1!\cdots n_J!} \left(\frac{1}{k}\right)^{n_1+\cdots+n_J}
g(n_J)\nonumber\\
&\qquad-\hspace{-3mm} \sum_{n_1\geq 2, \ldots,n_{m-1}\geq 2\atop n_{m+1}\geq 0,\dots, n_J\geq 0} \frac{(L+N_{\setminus n_m})!}{n_1!\cdots n_{m-1}! n_{m+1}!\ldots n_J!} \left(\frac{1}{k}\right)^{N_{\setminus n_m}}g(n_J)\nonumber\\
&\qquad-\frac{1}{k} \sum_{n_1\geq 2, \ldots,n_{m-1}\geq 2\atop n_{m+1}\geq 0,\dots, n_J\geq 0} \frac{(L+1+N_{\setminus n_m})!}{n_1!\cdots n_{m-1}! n_{m+1}!\ldots n_J!} \left(\frac{1}{k}\right)^{N_{\setminus n_m}}g(n_J)\nonumber\\
&= F_{m-1}(g;J,L)-F_{m-1}(g;J-1,L)-\frac{1}{k}F_{m-1}(g;J-1,L+1),
\label{proof-recursion}
\end{align}
which establishes (\ref{recursion-Fm}). The recursion (\ref{recursion-Fm}) together with the finiteness of  $F_0(g;J,L)$ for $k>J$ and $L\geq 0$ yield the finiteness
of $F_m(g;J,L)$  for $0\leq m<J<k$ and $L\geq 0$.

Via the same analysis it is easily obtained that (\ref{recursion-Fm}) holds for $m=J$ when $g(i)=1$ for all $i\in \N$.
\end{proof}
For $a\in \{-1,0,1,\ldots\}$, define
\begin{equation}
\label{app:Gm0}
G_m(g;a,n)\coloneqq \sum_{l=1}^m {m\choose l} \frac{l}{k^l} F_{m-l}(g;n-l-1,l+a), \quad m=1,\ldots,n-1.
\end{equation}
\begin{lemma}
\label{lem:Gm}
For $m=1,\ldots,n-2$,
\begin{equation}
\label{app:Gm}
G_m(g;a,n)=\frac{1}{k}\sum_{l=1}^m {m\choose l}(-1)^{l+1}l  F_0(g;n-l-1,1+a).   
\end{equation}
In addition, (\ref{app:Gm}) holds for $m=n-1$ if $g(i)=1$ for all $i\in \N$.
\end{lemma}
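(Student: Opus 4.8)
The plan is to unroll the recursion of Lemma~\ref{lem:recursion} until every $F_{m-l}$ appearing in the definition of $G_m$ is written as a linear combination of $F_0$'s, and then to identify coefficients by elementary binomial algebra.

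First I would record the ``closed form'' of the unrolled recursion. Introduce the commuting shift operators $T,U$ acting on a function $f$ of $(J,L)$ by $(Tf)(J,L)=f(J-1,L)$ and $(Uf)(J,L)=f(J-1,L+1)$; then (\ref{recursion-Fm}) reads $F_p=\big(I-T-\tfrac1k U\big)F_{p-1}$ as an identity between functions of $(J,L)$, so $F_p=\big(I-T-\tfrac1k U\big)^{p}F_0$, and the binomial theorem gives, for $0\le p\le J-1$ and $k>J$, $L\ge 0$,
\[
F_p(g;J,L)=\sum_{i=0}^{p}\binom{p}{i}(-1)^i\sum_{j=0}^{i}\binom{i}{j}\frac{1}{k^{j}}\,F_0(g;J-i,L+j),
\]
each $F_0$ on the right being finite since $J-i\le J<k$; if $g\equiv 1$ one also gets $p=J$ by iterating the $m=J$ case of Lemma~\ref{lem:recursion}. (This could equally well be proved by a short induction on $p$ using Pascal's rule.)

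Next I would substitute this identity into $G_m(g;a,n)=\sum_{l=1}^{m}\binom{m}{l}\tfrac{l}{k^{l}}F_{m-l}(g;n-l-1,l+a)$, which is legitimate because $m\le n-2$ forces $m-l\le(n-l-1)-1$ and $n-l-1<k$. The result is a finite sum whose generic term is a multiple of $F_0\big(g;\,n-(l+i)-1,\,(l+j)+a\big)$. Grouping terms by the pair $(s,t):=(l+i,\,l+j)$ --- for which the constraints $1\le t\le s\le m$ are automatic --- and using $\tfrac{l}{k^{l}}\cdot\tfrac{1}{k^{j}}=\tfrac{l}{k^{t}}$, the coefficient of $F_0(g;n-s-1,t+a)$ comes out to $\frac{1}{k^{t}}\sum_{l} l\binom{m}{l}\binom{m-l}{s-l}(-1)^{s-l}\binom{s-l}{t-l}$. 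The identities $\binom{m}{l}\binom{m-l}{s-l}=\binom{m}{s}\binom{s}{l}$ and $\binom{s}{l}\binom{s-l}{t-l}=\binom{s}{t}\binom{t}{l}$ rewrite this as $\frac{(-1)^{s}}{k^{t}}\binom{m}{s}\binom{s}{t}\sum_{l\ge1}l\binom{t}{l}(-1)^{l}$, and since $\sum_{l=0}^{t}l\binom{t}{l}(-1)^{l}$ equals $-1$ for $t=1$ and $0$ for $t\ge2$ (the same ``poly-result'' used in the proof of Lemma~\ref{lem:hat-q}), only the $t=1$ terms survive, each equal to $\frac{(-1)^{s+1}s}{k}\binom{m}{s}$. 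Summing over $s=1,\dots,m$ reproduces $\frac{1}{k}\sum_{l=1}^{m}\binom{m}{l}(-1)^{l+1}l\,F_0(g;n-l-1,1+a)$, which is the claim; the $m=n-1$, $g\equiv1$ case is identical once the $p=J$ form above is available.

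The only delicate point is the regrouping in the last paragraph --- keeping the ranges of $l,i,j$ (hence of $s,t$) straight and spotting the two binomial rearrangements; everything else is mechanical, and every rearrangement is justified because all the sums are finite and each $F_0$ involved is finite.
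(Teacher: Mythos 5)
Your proof is correct, and it takes a genuinely different route from the paper's. The paper never solves the recursion (\ref{recursion-Fm}) for $F_m$ in closed form; instead it combines that recursion with the identity ${m+1\choose l}l=\frac{m+1}{m}\left({m\choose l}l+{m\choose l-1}(l-1)\right)$ to establish a recursion at the level of $G$ itself, namely $G_{m+1}(g;a,n)=\frac{m+1}{m}\left(G_m(g;a,n)-G_m(g;a,n-1)\right)$, and then verifies the closed form (\ref{app:Gm}) by induction on $m$, with a second use of the same binomial identity inside the induction step. You instead unroll (\ref{recursion-Fm}) completely, writing $F_p$ as $\left(I-T-\frac{1}{k}U\right)^pF_0$ for commuting shift operators, substitute into the definition (\ref{app:Gm0}), and collapse the resulting triple sum by two applications of trinomial revision together with the alternating-sum identity (\ref{poly-result}) applied to $P(l)=l$, so that only the $t=1$ terms survive. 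Your route buys a direct, induction-free derivation in which the cancellation mechanism is fully explicit; the cost is the index bookkeeping in the regrouping $(l,i,j)\mapsto(s,t)$, which you handle correctly --- the domain constraints $1\le t\le s\le m$ and $1\le l\le t$ check out, the applicability condition $m-l\le(n-l-1)-1$ does follow from $m\le n-2$, and the $g\equiv 1$, $m=n-1$ extension is correctly reduced to the $m=J$ case of Lemma \ref{lem:recursion} (note that for $m=n-1$ \emph{every} term of $G_m$, not just $l=1$, has $p=J$ and so needs that extension, but your argument covers this). The paper's route avoids the bookkeeping but requires guessing both the $G$-recursion and the closed form before verifying them.
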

\begin{proof}We will use the identity
\begin{align}
\label{basic-id}
{m+1\choose l}l=\frac{m+1}{m}\left({m\choose l}l+{m\choose l-1}(l-1)\right).
\end{align}
Throughout the proof we will skip the first argument $g$ of the mappings $F$ and $G$, to simplify the notation. 
We have
\begingroup
\allowdisplaybreaks
\begin{align}
G_{m+1}&(a,n) =
 \sum_{l=1}^m {m+1\choose l} \frac{l}{k^l} F_{m+1-l}(n-l-1,l+a)
 + \frac{m+1}{k^{m+1}} F_{0}(n-m-2,m+1+a)\nonumber\\
&=\frac{m+1}{m}\Biggl[  \sum_{l=1}^m {m\choose l} \frac{l}{k^l}  F_{m+1-l}(n-l-1,l+a)
+ \sum_{l=2}^m {m\choose l-1}\frac{l-1}{k^l}F_{m+1-l}(n-l-1,l+a)\Biggr]     \nonumber\\
&\quad+ \frac{m+1}{k^{m+1}} F_{0}(n-m-2,m+1+a) \quad \hbox{by using } (\ref{basic-id})\nonumber\\
&=\frac{m+1}{m}\Biggl[   \sum_{l=1}^m {m\choose l} \frac{l}{k^l}  F_{m+1-l}(n-l-1,l+a)
+ \sum_{l=1}^{m-1} {m\choose l}\frac{l}{k^{l+1}}F_{m-l}(n-l-2,l+a+1)\Biggr]     \nonumber\\
& \quad+\frac{m+1}{k^{m+1}} F_{0}(n-m-2,m+1+a)\nonumber\\
&=\frac{m+1}{m}\Biggl[   \sum_{l=1}^m {m\choose l} \frac{l}{k^l}  F_{m+1-l}(n-l-1,l+a)
+ \sum_{l=1}^m {m\choose l}\frac{l}{k^{l+1}}F_{m-l}(n-l-2,l+a+1)\Biggr] \nonumber\\
&=\frac{m+1}{m}\Biggl[  \sum_{l=1}^m {m\choose l} \frac{l}{k^l}  F_{m-l}(n-l-1,l+a)
-  \sum_{l=1}^m {m\choose l} \frac{l}{k^l} F_{m-l}(n-l-2,l+a)\nonumber\\
&\quad-\frac{1}{k} \sum_{l=1}^m {m\choose l} \frac{l}{k^l} F_{m-l}(n-l-2,l+a+1)
+ \sum_{l=1}^m {m\choose l}\frac{l}{k^{l+1}}F_{m-l}(n-l-2,l+a+1)\Biggr]
~~ \hbox{using } (\ref{recursion-Fm}) \nonumber\\
&=\frac{m+1}{m}\Biggl[  \sum_{l=1}^m {m\choose l} \frac{l}{k^l}  F_{m-l}(n-l-1,l+a)
-  \sum_{l=1}^m {m\choose l} \frac{l}{k^l} F_{m-l}(n-l-2,l+a)\Biggr]\nonumber\\
&=\frac{m+1}{m}\left(G_m(a,n)-G_m(a,n-1)\right).
\label{app-generic:eq1}
\end{align}
\endgroup
Letting $m=1$ in (\ref{app:Gm0}), we get $G_1(a,n)=\frac{1}{k}F_0(n-2,1+a)$, which is equal to the r.h.s. of  (\ref{app:Gm}) when $m=1$. Assume that (\ref{app:Gm}) is true for $m=2,\ldots,M$; let us show that it is still true for $m=M+1$. Using the induction hypothesis in the r.h.s. of (\ref{app-generic:eq1}) yields
\begingroup
\allowdisplaybreaks
\begin{align}
G_{M+1}&(a,n)=  \frac{M+1}{kM}\Biggl[ \sum_{l=1}^M {M\choose l}(-1)^{l+1}l F_0(n-l-1,1+a)
-\sum_{l=1}^M {M\choose l}(-1)^{l+1}l F_0(n-l-2,1+a)\Biggr]\nonumber\\
&=\frac{M+1}{kM}\Biggl[ \sum_{l=1}^M {M\choose l}(-1)^{l+1}l F_0(n-l-1,1+a)
-\sum_{l=1}^{M-1} {M\choose l}(-1)^{l+1}l F_0(n-l-2,1+a)\Biggr]\nonumber\\
&\quad-\frac{M+1}{k}(-1)^{M+1}F_0(n-M-2,1+a)\label{chgt}\\
&=\frac{M+1}{kM}\Biggl[ \sum_{l=1}^M {M\choose l}(-1)^{l+1}l F_0(n-l-1,1+a)
+\sum_{l=1}^{M} {M\choose l-1}(-1)^{l+1}(l -1)F_0(n-l-1,1+a)\Biggr]\nonumber\\
&\quad+\frac{M+1}{k}(-1)^{M+2}F_0(n-M-2,1+a)\label{chgt2}\\
&=\frac{1}{k} \sum_{l=1}^M {M+1\choose l}(-1)^{l+1}l F_0(n-l-1,1+a)
+\frac{M+1}{k}(-1)^{M+2}F_0(n-M-2,1+a)\quad \hbox{using } (\ref{basic-id})\nonumber\\
&=  \frac{1}{k}\sum_{l=1}^{M+1} {M+1\choose l}(-1)^{l+1}l F_0(n-l-1,1+a),\nonumber
\end{align}
\endgroup
where we perform a change of variable $l \to l-1$ in the second sum of (\ref{chgt}) to obtain (\ref{chgt2}).
This concludes the induction step and the proof.
\end{proof}
\section{Useful Formulas}
\label{app:useful-formulas}
It is known that  
\begin{equation}
\sum_{i=0}^n {n\choose i} (-1)^i P(i)=0,
\label{poly-result}
\end{equation}
for any polynomial $P$ of degree less than $n$ \cite{Ruiz96}.
\begin{lemma}
For any integer $J\geq 1$, $0\leq z_1+\cdots+z_J<1$, and $L\in \N$,
\begin{align}
\label{gen-formula}
\sum_{n_1\geq 0,\ldots,n_J\geq 0} \frac{(L+n_1+\cdots+n_J)!}{n_1!\cdots n_J!} z_1^{n_1}\cdots z_J^{n_J}=\frac{L!}{(1-z_1-\cdots-z_J)^{L+1}}.
\end{align}
\end{lemma}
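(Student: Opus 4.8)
The plan is to reduce the $J$-dimensional sum to a one-dimensional one by collecting terms according to the common total degree $m=n_1+\cdots+n_J$, and then to dispatch the one-dimensional identity by the negative binomial series. Throughout I would assume $z_1,\dots,z_J\ge 0$ (which is all that is needed in the paper, where each $z_i=1/k$); since every term of the sum is then nonnegative, Tonelli's theorem lets the summation be reorganized in any order without changing its value.

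First I would fix $m\ge 0$ and sum over all tuples $(n_1,\dots,n_J)$ with $n_1+\cdots+n_J=m$. The multinomial theorem gives
\[
\sum_{n_1+\cdots+n_J=m}\frac{z_1^{n_1}\cdots z_J^{n_J}}{n_1!\cdots n_J!}=\frac{(z_1+\cdots+z_J)^m}{m!},
\]
so, writing $s:=z_1+\cdots+z_J\in[0,1)$, the left-hand side of (\ref{gen-formula}) collapses to $\sum_{m\ge 0}\frac{(L+m)!}{m!}\,s^m$. It then remains to prove the single-variable identity $\sum_{m\ge 0}\frac{(L+m)!}{m!}\,s^m=\dfrac{L!}{(1-s)^{L+1}}$ for $|s|<1$ and $L\in\N$. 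This is standard: starting from the geometric series $\sum_{m\ge 0}s^m=(1-s)^{-1}$, one differentiates both sides $L$ times in $s$ (legitimate term-by-term on compact subintervals of $(-1,1)$). The $L$-th derivative of $s^m$ is $\frac{m!}{(m-L)!}s^{m-L}$, vanishing for $m<L$, while the $L$-th derivative of $(1-s)^{-1}$ is $L!\,(1-s)^{-(L+1)}$; after reindexing $m\mapsto m+L$ this is exactly the claimed equality. Equivalently, one may recognize $\frac{(L+m)!}{m!\,L!}=\binom{L+m}{L}$ and quote the negative binomial theorem, or argue by induction on $J$, peeling off one variable at a time and invoking the base case together with the substitution $z_{J+1}\mapsto z_{J+1}/(1-z_1-\cdots-z_J)$.

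Combining the two steps yields (\ref{gen-formula}). There is essentially no hard step here; the only points deserving a line of justification are the interchange of the order of summation — handled by nonnegativity of the terms via Tonelli — and the term-by-term differentiation of the power series, which is valid inside its interval of convergence. I therefore expect the entire argument to be short, with the ``main obstacle'' being merely the bookkeeping needed to phrase these convergence remarks cleanly.
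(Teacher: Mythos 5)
Your proof is correct, but it takes a genuinely different route from the paper's. The paper proves (\ref{gen-formula}) by induction on $J$: the base case $J=1$ is the negative binomial series (obtained there by rewriting $\frac{(L+j)!}{j!}$ as $L!\binom{j+L}{L}$ and summing the binomial series, essentially equivalent to your $L$-fold differentiation of the geometric series), and the induction step sums over the last variable $n_{m+1}$ first and then absorbs the factor $(1-z_{m+1})^{-1}$ by rescaling the remaining $z_i$ to $z_i/(1-z_{m+1})$ before invoking the hypothesis with $J=m$. You instead collapse the whole $J$-fold sum in one stroke by grouping terms according to the total degree $m=n_1+\cdots+n_J$ and applying the multinomial theorem, reducing immediately to the single-variable identity in $s=z_1+\cdots+z_J$. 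Your route makes transparent the structural fact that the left-hand side depends on the $z_i$ only through their sum, and it replaces the induction by a single rearrangement justified by Tonelli (you are right to flag that this uses $z_i\geq 0$, which is all the paper ever needs since $z_i=1/k$; the paper's induction step likewise tacitly interchanges sums and rescales, so neither argument is more general on this point). The paper's induction, for its part, avoids any explicit appeal to the multinomial theorem and keeps every intermediate expression in the same form as the statement being proved, which is convenient given how often the formula is specialized later. Both arguments rest on the same one-dimensional kernel and are equally short; the choice is essentially one of taste.
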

\begin{proof} Fix $L\geq 0$. For $ 0\leq z<1$, we have
\[
\sum_{j\geq 0} \frac{(L+j)!}{j!}z^{j}= \frac{L!}{z^L} \sum_{i\geq L} {i\choose L} z^i=  \frac{L!}{z^L} \times \frac{z^L}{(1-z)^{L+1}}= \frac{L!}{(1-z)^{L+1}},
\]
which shows that (\ref{gen-formula}) holds when $J=1$. Assume that  (\ref{gen-formula}) is true for $J=1,\ldots,m$ and $0\leq z_1+\cdots+z_m<1$,
and let us show that it is true for $J=m+1$ and $0\leq z_1+\cdots+z_{m+1}<1$. Take $0\leq z_1+\cdots+z_{m+1}<1$. We have
\begin{align*}
\sum_{n_1\geq 0,\ldots,n_{m+1}\geq 0}\hspace{-2mm}& \frac{(L+n_1+\cdots+n_{m+1})!}{n_1!\cdots n_{m+1}!} z_1^{n_1}\cdots z_{m+1}^{n_{m+1}}\\
&=
\sum_{n_1\geq 0,\ldots,n_{m}\geq 0} \frac{z_1^{n_1}\cdots z_m^{n_m}}{n_1!\cdots n_{m}!} \sum_{n_{m+1}\geq 0}\hspace{-2mm}\frac{(L+n_1+\cdots+n_m+n_{m+1})!}{n_{m+1}!}z_{m+1}^{n_{m+1}}
\\
&=\sum_{n_1\geq 0,\ldots,n_{m}\geq 0}\hspace{-2mm}\frac{(L+n_1+\cdots+n_m)!}{n_1!\cdots n_{m}!} \left(\frac{z_1}{1-z_{m+1}}\right)^{n_1}\cdots
 \left(\frac{z_m}{1-z_{m+1}}\right)^{n_{m}}\\
 &=\frac{L!}{(1-z_1-\cdots -z_{m+1})^{L+1}},
\end{align*}
where the last two identities follow from the induction hypothesis with $J=1$ and $J=m$, respectively.
\end{proof}
Differentiating (\ref{gen-formula}) w.r.t. $z_J$ and multiplying both sides of the resulting equation by $z_J$ yields, for $0\leq z_1+\cdots+z_J<1$.
\begin{align}
\sum_{n_1\geq 0,\ldots,n_J\geq 0}\hspace{-4mm} \frac{(L+n_1+\cdots+n_J)!}{n_1!\cdots n_J!} z_1^{n_1}\cdots z_J^{n_J}n_J=
\frac{(L+1)!z_J}{(1-z_1-\cdots -z_J)^{L+2}}.
\label{app:diff}
\end{align}
In the following, assume  $0\leq z <\frac{1}{J}$ and $L\in \N$.
 Letting $z_j=z$ for $j=1,\ldots,J$ in both (\ref{gen-formula}) and (\ref{app:diff}) yields
\begin{align}
\label{app:formula-0}
\sum_{n_1\geq 0,\ldots,n_J\geq 0}\hspace{-3mm} \frac{(L+n_1+\cdots+n_J)!}{n_1!\cdots n_J!} z^{\sum\limits_{i=1}^Jn_i}(n_J-1)=
\frac{L!((J+L+1)z-1)}{(1-Jz)^{L+2}}.
\end{align}
Differentiating (\ref{app:diff}) w.r.t. $z_J$, multiplying both sides of the resulting equation by $z_J$, and then letting $z_j=z$ for all $j$ yields
\begin{align}
\label{app:diff2}
\sum_{n_1\geq 0,\ldots,n_J\geq 0}\hspace{-4mm} \frac{(L+n_1+\cdots+n_J)!}{n_1!\cdots n_J!}z^{\sum\limits_{i=1}^Jn_i} n_J^2=
\frac{(L+1)!((L+2-J)z+1)z}{(1-Jz)^{L+3}}.
\end{align}
Differentiating  (\ref{app:diff}) w.r.t. $z_I$ with $I\neq J$,   multiplying both sides of the resulting equation by $z_I $, then letting $z_j=z$ for all $j$ yields
\begin{align}
\label{app:diff-cross}
\sum_{n_1\geq 0,\ldots,n_J\geq 0} \frac{(L+n_1+\cdots+n_J)!}{n_1!\cdots n_J!} z^{n_1+\cdots+n_J}n_I n_J=
\frac{(L+2)!z^2}{(1-Jz)^{L+3}}.
\end{align}
\section{Expected Number of Stored Qubits at the Switch}
\label{app:EQ-long-proof}
Following is a proof of Proposition \ref{prop:EQlonger}.
\begin{proof}
For any vector $\vect{y}=(y_1~\ldots~y_{n-1})$, define
$|\vect y |^{(2)}\coloneqq\sum_{j=1}^{n-1} y^2_j$.

In Section 7, we prove that the system is stable if and only if $k>n$. Hence, assume from now on that $k>n$. 
In Section 8, we prove that $\E[|\vect Q|]<\infty$ when the system is stable.
Assume that $X$ is in steady state at time $t=1$ (which implies that it is in steady-state at any time $t>1$). By Lemma \ref{lemma:eq40}, Eq. (\ref{eq:40}) holds.
Take  $V(\vect{x})=\min\{|\vect{x}|^2,T^2\}$ for $T\geq  n-1$. Multiplying both sides of (\ref{eq:40}) by $k$, we get
\begin{align}
0&=(k-(n-1)) \sum_{\vect{x}\in R_0}  \pi(\vect{x})\left[\min\{(|\vect{x}|-(n-1))^2, T^2\}-\min\{|\vect{x}|^2, T^2\}\right]\nonumber\\
&\quad+(n-1) \sum_{\vect{x}\in R_0}  \pi(\vect{x})\left[\min\{(|\vect{x}|+1)^2, T^2\}-\min\{|\vect{x}|^2, T^2\}\right]\nonumber\\
&\quad+k\sum_{j=1}^{n-1}\sum_{\vect{x}\in R_j} \pi(\vect{x})\left[\min\{(|\vect{x}|+1)^2, T^2\}-\min\{|\vect{x}|^2, T^2\}\right]\nonumber\\
&= -2(k-(n-1))(n-1)\sum_{\vect{x}\in R_0\atop n-1\leq |\vect{x}|\leq T-1}  \pi(\vect{x}) |\vect{x}|
+(k-(n-1)) \sum_{\vect{x}\in R_0\atop T\leq  |\vect{x}|<T+n-1}  \pi(\vect{x})((|\vect{x}|-(n-1))^2 -T^2)\nonumber\\
&\quad+(k-(n-1))(n-1)^2 \sum_{\vect{x}\in R_0\atop n-1\leq |\vect{x}|\leq T-1} \pi(\vect{x})
+(n-1)  \sum_{\vect{x}\in R_0\atop T\geq |\vect{x}|+1}  \pi(\vect{x})(2|\vect{x}|+1)
+k\sum_{j=1}^{n-1}  \sum_{\vect{x}\in R_j\atop T\geq |\vect{x}|+1}  \pi(\vect{x})(2|\vect{x}|+1).
\label{eq:EQ41}
\end{align}
Define
\begin{align*}
A&\coloneqq \sum_{\vect{x}\in R_0}  \pi(\vect{x}) |\vect{x}|,
\qquad
B\coloneqq \sum_{j=1}^{n-1} \sum_{\vect{x}\in R_j}  \pi(\vect{x}) |\vect{x}|=\sum_{j=1}^{n-2} \sum_{\vect{x}\in R_j}  \pi(\vect{x}) |\vect{x}|,\\
C&\coloneqq \sum_{\vect{x}\in R_0}  \pi(\vect{x}),
\qquad\quad
D\coloneqq \sum_{j=1}^{n-1} \sum_{\vect{x}\in R_j}  \pi(\vect{x}),
\end{align*}
where the second equality in the definition of $B$ holds since $R_{n-1}=\{\vect{0}\}$. With these definitions  (\ref{eq:EQ41}) becomes
\begin{align}
0&=
-2(k-(n-1))(n-1)\left[A- \sum_{\vect{x}\in R_0\atop T\leq |\vect{x}|}  \pi(\vect{x}) |\vect{x}|\right]+(k-(n-1)) \sum_{\vect{x}\in R_0\atop T\leq  |\vect{x}|\leq T+n-1}  \pi(\vect{x})((|\vect{x}|-(n-1))^2 -T^2)\nonumber\\
&\quad+(k-(n-1))(n-1)^2\left[C-  \sum_{\vect{x}\in R_0\atop T\leq  |\vect{x}|} \pi(\vect{x})\right]
+(n-1) \left[2A- 2\sum_{\vect{x}\in R_0\atop T\leq |\vect{x}|} \pi(\vect{x})|\vect{x}|   +C-\sum_{\vect{x}\in R_0\atop T\leq |\vect{x}|} \pi(\vect{x} )\right]\nonumber\\
&\quad+k\left[ 2B - \sum_{j=1}^{n-1}  \sum_{\vect{x}\in R_j\atop T\leq  |\vect{x}|}  \pi(\vect{x})|\vect{x}|    +D- \sum_{j=1}^{n-1}  \sum_{\vect{x}\in R_j\atop T\leq  |\vect{x}|}  \pi(\vect{x})\right]\nonumber\\
&= -2(n-1)(k-1)A +2kB+(n-1)C +(n-1)[(k-(n-1))(n-1)+1]C+kD + g(T),
\label{eq:EQ70}
\end{align}
with
\begin{align}
\lefteqn{g(T)\coloneqq 
2(n-1)(k-n)  \sum_{\vect{x}\in R_0\atop T\leq  |\vect{x}|}  \pi(\vect{x}) |\vect{x}| -(n-1)[(k-(n-1))(n-1)-1] 
 \sum_{\vect{x}\in R_0\atop T\leq  |\vect{x}|}  \pi(\vect{x})}\nonumber\\
 &+(k-(n-1)) \sum_{\vect{x}\in R_0\atop T\leq  |\vect{x}|<T+n-1}  \pi(\vect{x})((|\vect{x}|-(n-1))^2 -T^2) -k\left[ \sum_{j=1}^{n-1}  \sum_{\vect{x}\in R_j\atop T\leq |\vect{x}|}  \pi(\vect{x})|\vect{x}| +\sum_{j=1}^{n-1}\sum_{\vect{x}\in R_j\atop T\leq |\vect{x}|} \pi(\vect{x})\right].
 \label{EQdef-g}
\end{align}
We know that (see proof of Proposition \ref{capacity-n}, Eq. (\ref{eq:43}))
\begin{equation}
\label{eq:EQ71}
C=\frac{k}{n(k-(n-1))},
\end{equation}
so that by Eq. (\ref{eq:42}),
\begin{equation}
\label{eq:EQ72}
D=1-\sum_{ \vect{x} \in R_0}  \pi(\vect{x})= 1-\frac{k}{n(k-(n-1))}=
\frac{(k-n)(n-1)}{n(k-(n-1))}.
\end{equation}
Introducing  (\ref{eq:EQ71}) and  (\ref{eq:EQ72}) into (\ref{eq:EQ70}) gives
\begin{equation}
2(k-n)(n-1)A-2kB  = k(n-1) +g(T).
\label{EQfirst-eqn}
\end{equation}
Take now $V(\vect{x})=\min\{|\vect{x}|^{(2)},T\}$ for $T\geq n-1$. Multiplying both sides of (\ref{eq:40}) by $k$, we get
\begin{align}
0&=\sum_{\vect{x}\in R_0}\pi(\vect{x})(k-(n-1)) \times
\left(\min\left\{|\vect{x} - {\bf 1}|^{(2)}, T\right\}-\min\left\{ |\vect{x}|^{(2)},T\right\}\right)\nonumber\\
&\quad+\sum_{\vect{x}\in R_0}\pi(\vect{x})
\sum_{l=1}^{n-1} \left(\min\left\{|\vect{x}|^{(2)}+2x_l+1, T\right\}-\min\left\{|\vect{x}|^{(2)},T\right\}\right)\nonumber\\
&\quad
+\sum_{j=1}^{n-1} \sum_{\vect{x}\in R_j}\pi(\vect{x})\sum_{l=1}^j \left(\frac{k-(n-1)-j}{j}\right)\times \left(\min\left\{1+|\vect{x}|^{(2)},T\right\}-
\min\left\{|\vect{x}|^{(2)},T\right\}\right)\nonumber\\
&\quad+
\sum_{j=1}^{n-1} \sum_{\vect{x}\in R_j}\pi(\vect{x}) \sum_{m=1}^{n-1}\ind_{\{x_m>0\}}\left(\min\left\{ |\vect{x}|^{(2)}+2x_m+1,T\right\}-\min\left\{|\vect{x}|^{(2)},T\right\}\right).
\label{eq:EQ300}
\end{align}
Note that the last summation in (\ref{eq:EQ300}) has $n-1-j$ terms as $\vect{x}=(x_1~\ldots~x_{n-1})\in R_j$ has $n-1-j$ non-zero entries by definition of the set $R_j$.
Let us consider separately the four terms in the r.h.s. of (\ref{eq:EQ300}). Call these terms I, II, III, and IV respectively.
We have
\begin{align*}
I&=(k-(n-1))\sum_{\vect{x}\in R_0\atop |\vect{x} |^{(2)}\leq T}\pi(\vect{x})  (-2|\vect{x}|+n-1)
+(k-(n-1))\sum_{\vect{x}\in R_0\atop |\vect{x}  -{\bf 1}|^{(2)} <T<|\vect{x} |^{(2)}}\pi(\vect{x}) \left( 
|\vect{x} - {\bf 1}|^{(2)}- T\right)\nonumber\\
&= -2(k-(n-1)) A+(k-(n-1))(n-1) \sum_{\vect{x}\in R_0}\pi(\vect{x})
-(k-(n-1))\sum_{\vect{x}\in R_0\atop T< |\vect{x} |^{(2)}}\pi(\vect{x})  (-2|\vect{x}|+n-1)\\
&\quad+(k-(n-1))\sum_{\vect{x}\in R_0\atop    |\vect{x}  -{\bf 1}|^{(2)}<T< |\vect{x} |^{(2)}}\pi(\vect{x}) \left(|\vect{x} - {\bf 1}|^{(2)}- T\right),\\
%
II&=\sum_{l=1}^n \sum_{\vect{x}\in R_0\atop  |\vect{x} |^{(2)}+2x_l+1\leq T}\pi(\vect{x})(2x_l+1)
+\sum_{l=1}^n \sum_{\vect{x}\in R_0\atop   |\vect{x} |^{(2)} <T<  |\vect{x} |^{(2)}+2x_l+1}\pi(\vect{x})
\left(T-|\vect{x} |^{(2)} \right)\nonumber\\
&=\sum_{l=1}^{n-1}\sum_{\vect{x}\in R_0} \pi(\vect{x})(2x_l+1)
-\sum_{l=1}^{n-1}\sum_{\vect{x}\in R_0\atop T< |\vect{x} |^{(2)}+2x_l+1} \pi(\vect{x})(2x_l+1)
+\sum_{l=1}^n \sum_{\vect{x}\in R_0 \atop|\vect{x} |^{(2)} <T<  |\vect{x} |^{(2)}+2x_l+1}\pi(\vect{x})
\left(T-  |\vect{x} |^{(2)}  \right)\nonumber\\
&=2A +(n-1)\sum_{\vect{x}\in R_0} \pi(\vect{x})
-\sum_{l=1}^{n-1} \sum_{\vect{x}\in R_0\atop T< |\vect{x} |^{(2)} +2 x_l+1} \pi(\vect{x})(2x_l+1)
+\sum_{l=1}^{n-1} \sum_{\vect{x}\in R_0\atop |\vect{x} |^{(2)} <T<  |\vect{x} |^{(2)}+2x_l+1}\pi(\vect{x})
\left(T-|\vect{x} |^{(2)}\right),\\
%
III&= \sum_{j=1}^{n-1}(k-(n-1-j))  \sum_{\vect{x}\in R_j \atop |\vect{x}|^{(2)}+1\leq T} \pi(\vect{x})  \nonumber\\
&= \sum_{j=1}^{n-1}(k-(n-1-j))  \sum_{\vect{x}\in R_j}\pi(\vect{x})  
-  \sum_{j=1}^{n-1}(k-(n-1-j))  \sum_{\vect{x}\in R_j\atop  |\vect{x}|^{(2)}+1> T} \pi(\vect{x}),\\
%
IV&=\sum_{j=1}^{n-1}\sum_{m=1}^{n-1}  \sum_{\vect{x}\in R_j \atop |\vect{x}|^{(2)}+2x_m+1\leq T} \pi(\vect{x}) \ind_{\{x_m>0\}}(2x_m+1)
+\sum_{j=1}^{n-1}\sum_{m=1}^{n-1}  \sum_{\vect{x}\in R_j \atop |\vect{x}|^{(2)} < T <|\vect{x}|^{(2)}+2x_m+1} \pi(\vect{x}) \ind_{\{x_m>0\}}\left(T-|\vect{x}|^{(2)}\right)\nonumber\\
&= \sum_{j=1}^{n-1} \sum_{m=1}^{n-1} \sum_{\vect{x}\in R_j} \ind_{\{x_m>0\}} \pi(\vect{x}) (2x_m+1) -
 \sum_{j=1}^{n-1} \sum_{m=1}^{n-1} \sum_{\vect{x}\in R_j\atop  T< |\vect{x}|^{(2)}+2x_m+1}\ind_{\{x_m>0\}} \pi(\vect{x}) (2x_m+1)\nonumber\\
&\quad+\sum_{j=1}^{n-1}\sum_{m=1}^{n-1}  \sum_{\vect{x}\in R_j \atop |\vect{x} |^{(2)} < T <|\vect{x}|^{(2)}+2x_m+1} \pi(\vect{x}) \ind_{\{x_m>0\}}
\left(T-|\vect{x}|^{(2)}\right)\nonumber\\
&=2 B +\sum_{j=1}^{n-1} (n-1-j) \sum_{\vect{x}\in R_j}  \pi(\vect{x}) 
- \sum_{j=1}^{n-1} \sum_{m=1}^{n-1} \sum_{\vect{x}\in R_j\atop  T< |\vect{x}|^{(2)}+2x_m+1}\ind_{\{x_m>0\}} \pi(\vect{x}) (2x_m+1)\nonumber\\
&\quad+\sum_{j=1}^{n-1} \sum_{m=1}^{n-1}  \sum_{\vect{x}\in R_j \atop |\vect{x} |^{(2)} < T <|\vect{x}|^{(2)}+2x_m+1} \pi(\vect{x}) \ind_{\{x_m>0\}}
\left(T-|\vect{x}|^{(2)}\right). 
\end{align*}
Hence,
\begin{align}
0&= I+II+III+IV\nonumber\\
&=-2(k-n)A +2B+(k-(n-1))(n-1) \sum_{\vect{x}\in R_0}\pi(\vect{x})   +(n-1)\sum_{\vect{x}\in R_0} \pi(\vect{x})  \nonumber\\
&\quad+\sum_{j=1}^{n-1}(k-(n-1-j))  \sum_{\vect{x}\in R_j}\pi(\vect{x})  +\sum_{j=1}^{n-1} (n-1-j)\sum_{\vect{x}\in R_j}  \pi(\vect{x}) +h(T)\nonumber\\
&=-2(k-n)A +2B+(n-1)(k-n+2)\sum_{\vect{x}\in R_0} \pi(\vect{x}) +k \sum_{j=1}^{n-1} \sum_{\vect{x}\in R_j}\pi(\vect{x})+h(T) \nonumber\\
&=-2(k-n)A +2B+\frac{2(n-1)k}{n} +h(T)
\label{EQsecond-eqn}
\end{align}
by using (\ref{eq:EQ71}) and (\ref{eq:EQ72}), where 
\begin{align}
h(T)&\coloneqq -(k-(n-1))\sum_{\vect{x}\in R_0\atop T< |\vect{x} |^{(2)}}\pi(\vect{x})  (-2|\vect{x}|+n-1)+(k-(n-1))\sum_{\vect{x}\in R_0\atop    |\vect{x}  -{\bf 1}|^{(2)}<T< |\vect{x} |^{(2)}}\pi(\vect{x}) 
\left(|\vect{x} - {\bf 1}|^{(2)}- T\right)\nonumber\\
&\quad-\sum_{l=1}^{n-1} \sum_{\vect{x}\in R_0\atop T<  |\vect{x} |^{(2)} +2 x_l+1} \pi(\vect{x})(2x_l+1)
+\sum_{l=1}^{n-1} \sum_{\vect{x}\in R_0\atop |\vect{x} |^{(2)} <T<  |\vect{x} |^{(2)}+2x_l+1}\pi(\vect{x})
\left(T-|\vect{x} |^{(2)}\right)\nonumber\\
&\quad-  \sum_{j=1}^{n-1}(k-(n-1-j))  \sum_{\vect{x}\in R_j\atop T< |\vect{x}|^{(2)}+1} \pi(\vect{x})
- \sum_{j=1}^{n-1} \sum_{m=1}^{n-1} \sum_{\vect{x}\in R_j\atop  T< |\vect{x}|^{(2)}+2x_m+1}\ind_{\{x_m>0\}} \pi(\vect{x}) (2x_m+1)\nonumber\\
&\quad+\sum_{j=1}^{n-1} \sum_{m=1}^{n-1}  \sum_{\vect{x}\in R_j \atop |\vect{x} |^{(2)} < T <|\vect{x}|^{(2)}+2x_m+1} \pi(\vect{x}) \ind_{\{x_m>0\}}
\left(T-|\vect{x}|^{(2)}\right).
\label{def:EQh}
\end{align}
Eqns (\ref{EQfirst-eqn}) and (\ref{EQsecond-eqn}) give
the following set of linear equations in the unknowns $A$ and $B$:
\begin{align*}
2(k-n)(n-1)A-2kB&=k(n-1)+g(T)\\
2(k-n)A- 2B&=\frac{2k(n-1)}{n}+h(T). 
\end{align*}
Since $k>n$ there is a unique solution given by 
\begin{align}
A&=\frac{k(n-1)(2k-n)+n(kh(T)-g(T))}{2n(k-n)(k-(n-1))}
\label{sol:EQA}\\
B&= 
\frac{k(n-1)(n-2)+n((n-1)h(T)-g(T))}{2n(k-(n-1))}.
\label{sol:EQB}
\end{align}
Let us show that $\lim_{T\to\infty} g(T)=\lim_{T\to\infty}h(T)=0$.
Fix $j=1,\ldots,n-1$. Since
 $\sum_{\vect{x}\in R_j\atop T\leq |\vect{x}|} \pi(\vect{x})|\vect{x}|=\sum_{l=T}^\infty \sum_{\vect{x}\in R_j\atop |\vect{x}|=l} \pi(\vect{x})|\vect{x}|$
and since the series $\sum_{\vect{x}\in R_j\atop T\leq |\vect{x}|} \pi(\vect{x})|\vect{x}|$ is finite (bounded by $\E[|\vect Q|]$, which is finite due to the chain's stability), then necessarily $\lim_{T\to\infty}\sum_{l=T}^\infty \sum_{\vect{x}\in R_j\atop |\vect{x}|=l} \pi(\vect{x}) |\vect{x}|=0$,  which in turn implies that $\lim_{T\to\infty}\sum_{\vect{x}\in R_j\atop T\leq |\vect{x}|} \pi(\vect{x})|\vect{x}|=0$. This latter limit combined with the two-sided inequalities $0\leq \sum_{\vect{x}\in R_j\atop T\leq |\vect{x}|} \pi(\vect{x}) \leq \sum_{\vect{x}\in R_j\atop T\leq |\vect{x}|} \pi(\vect{x})|\vect{x}|$  imply that 
$\lim_{T\to\infty}\sum_{\vect{x}\in R_j\atop T\leq |\vect{x}|} \pi(\vect{x})=0$. Hence, cf. (\ref{EQdef-g}),
\[
\lim_{T\to\infty} g(T)= (k-(n-1))\times \lim_{T\to\infty}\sum_{\vect{x}\in R_0\atop T\leq  |\vect{x}|<T+n-1}  \pi(\vect{x})( (|\vect{x}|-(n-1))^2 -T^2).
\]
Let us evaluate this limit. We have
\begin{align}
0&\geq \sum_{\vect{x}\in R_0\atop T\leq  |\vect{x}|<T+n-1}  \pi(\vect{x})  (( |\vect{x}|-(n-1))^2 -T^2)\nonumber\\
&= \sum_{\vect{x}\in R_0\atop T\leq  |\vect{x}|<T+n-1}  \pi(\vect{x}) (|\vect{x}|^2 -T^2) 
-2(n-1)\times \sum_{\vect{x}\in R_0\atop T\leq  |\vect{x}|<T+n-1}  \pi(\vect{x}) |\vect{x}|+
(n-1)^2 \times\sum_{\vect{x}\in R_0\atop T\leq  |\vect{x}|<T+n-1}  \pi(\vect{x})\nonumber\\
&\geq  -2(n-1)\times \sum_{\vect{x}\in R_0\atop T\leq  |\vect{x}|<T+n-1}  \pi(\vect{x}) |\vect{x}|+
(n-1)^2 \times\sum_{\vect{x}\in R_0\atop T\leq  |\vect{x}|<T+n-1}  \pi(\vect{x}).
\label{inq:EQ100}
\end{align}
Now, the inequalities  $0\leq \sum_{\vect{x}\in R_0\atop T\leq  |\vect{x}|<T+n-1} \pi(\vect{x}) \leq \sum_{\vect{x}\in R_0\atop T\leq |\vect{x}|} \pi(\vect{x})$ and 
$0\leq \sum_{\vect{x}\in R_0\atop T\leq  |\vect{x}|<T+n-1}  \pi(\vect{x}) |\vect{x}|\leq \sum_{\vect{x}\in R_0\atop T\leq  |\vect{x}|}  \pi(\vect{x}) |\vect{x}|$
combined with the limits $\lim_{T\to\infty}\sum_{\vect{x}\in R_j\atop T\leq |\vect{x}|} \pi(\vect{x})=0$ and $\lim_{T\to\infty}\sum_{\vect{x}\in R_j\atop T\leq |\vect{x}|} \pi(\vect{x})|\vect{x}|=0$ shown above (the latter limit holds since $\E[|\vect Q|]<\infty$), prove  
that 
\begin{align*}
\lim_{T\to\infty} \sum_{\vect{x}\in R_0\atop T\leq  |\vect{x}|<T+n-1} \pi(\vect{x})=\lim_{T\to\infty} \sum_{\vect{x}\in R_0\atop T\leq  |\vect{x}|<T+n-1} \pi(\vect{x})|\vect{x}|=0.
\end{align*} 
We then conclude from (\ref{inq:EQ100}) that 
\begin{equation}
\label{EQlim-g}
\lim_{T\to\infty} g(T)= 0.
\end{equation}
The proof that $\lim_{T\to\infty} h(T)= 0$ is similar; we compute this limit next. 
%
%
Throughout we will use the fact that for any mapping $\phi:\{0,1,\ldots\}^{n-1}\to [0,\infty)$ the limits
\begin{equation}
\label{EQfundamental}
\lim_{T\to \infty} \sum_{\vect{x}\in R_j\atop T<\phi(\vect{x})}\pi(\vect{x}) =0 \,\,\hbox{ and }\,\,
\lim_{T\to \infty} \sum_{\vect{x}\in R_j\atop T<\phi(\vect{x})}\pi(\vect{x})|\vect{x}|=0,
\quad \forall  j=0,1,\ldots,n-1,
\end{equation}
hold as  the series  $\sum_{\vect{x}\in R}\pi(\vect{x})$ and  $\sum_{\vect{x}\in R}\pi(\vect{x})|\vect{x}|=\E[|\vect Q|]$ are finite, respectively.\\

By applying (\ref{EQfundamental})  to (\ref{def:EQh}) we immediately conclude that the first, third, fifth and sixth terms in the r.h.s of (\ref{def:EQh}) go to zero as $T\to\infty$.  Hence,
\begin{align}
\lim_{T\to\infty}h(T)&=\lim_{T\to \infty}(k-(n-1))\sum_{\vect{x}\in R_0\atop    |\vect{x}  -{\bf 1}|^{(2)}<T< |\vect{x} |^{(2)}}\pi(\vect{x}) 
\left(|\vect{x} - {\bf 1}|^{(2)}- T\right)\nonumber\\
&\quad+\lim_{T\to\infty}\sum_{l=1}^{n-1} \sum_{\vect{x}\in R_0\atop |\vect{x} |^{(2)} <T<  |\vect{x} |^{(2)}+2x_l+1}\pi(\vect{x})
\left(T-|\vect{x} |^{(2)}\right)\nonumber\\
&\quad+\lim_{T\to \infty} \sum_{j=1}^{n-1} \sum_{m=1}^{n-1}  \sum_{\vect{x}\in R_j \atop |\vect{x} |^{(2)} < T <|\vect{x}|^{(2)}+2x_m+1} \pi(\vect{x}) \ind_{\{x_m>0\}}
\left(T-|\vect{x}|^{(2)}\right).
\label{app:EQlim-h}
\end{align}
Let us show that the three limits in the r.h.s. of (\ref{app:EQlim-h}) are equal to zero.
We have
\begin{align*}
0&\geq \sum_{\vect{x}\in R_0\atop    |\vect{x}  -{\bf 1}|^{(2)}<T< |\vect{x} |^{(2)}}\pi(\vect{x})
\left(|\vect{x} - {\bf 1}|^{(2)}- T\right)
=  \sum_{\vect{x}\in R_0\atop  T<   |\vect{x} |^{(2)} <T-(n-1) -2|\vect{x}|}\pi(\vect{x}) \left(|\vect{x}|^{(2)} - 2|\vect{x}| +n-1-T\right)\nonumber\\
&\geq
 -2 \sum_{\vect{x}\in R_0\atop    T<   |\vect{x} |^{(2)} <T-(n-1) -2|\vect{x}|}\pi(\vect{x}) |\vect{x}| 
 +(n-1)\sum_{\vect{x}\in R_0\atop    T<   |\vect{x} |^{(2)} <T-(n-1) -2|\vect{x}|} \pi(\vect{x}).
\end{align*}
Now, 
\[
0\leq \sum_{\vect{x}\in R_0\atop    T<   |\vect{x} |^{(2)} <T-(n-1) -2|\vect{x}|}\pi(\vect{x}) |\vect{x}| \leq 
 \sum_{\vect{x}\in R_0\atop    T<   |\vect{x} |^{(2)}}\pi(\vect{x}) |\vect{x}|\to 0 \quad \hbox{as } T\to \infty
 \]
and
\[
0\leq \sum_{\vect{x}\in R_0\atop    T<   |\vect{x} |^{(2)} <T-(n-1) -2|\vect{x}|}\pi(\vect{x}) \leq 
 \sum_{\vect{x}\in R_0\atop    T<   |\vect{x} |^{(2)}}\pi(\vect{x})\to 0 \quad \hbox{as } T\to \infty,
 \]
 where the limits hold from (\ref{EQfundamental}).
Therefore, 
\begin{equation}
\lim_{T\to\infty}\sum_{\vect{x}\in R_0\atop    |\vect{x}  -{\bf 1}|^{(2)}<T< |\vect{x} |^{(2)}}\pi(\vect{x}) \left(|\vect{x} - {\bf 1}|^{(2)}- T\right)=0.
\label{EQlim2}
\end{equation}
On the other hand,
\begin{align}
\lefteqn{0\leq \sum_{l=1}^{n-1} \sum_{\vect{x}\in R_0\atop |\vect{x} |^{(2)} <T<  |\vect{x} |^{(2)}+2x_l+1}\pi(\vect{x})\left(T-|\vect{x} |^{(2)}\right)
\leq 
\sum_{l=1}^{n-1} \sum_{\vect{x}\in R_0\atop |\vect{x} |^{(2)} <T<  |\vect{x} |^{(2)}+2x_l+1}\pi(\vect{x})(2x_l+1)}\nonumber\\
&\leq
2 \sum_{l=1}^{n-1} \sum_{\vect{x}\in R_0\atop T<  |\vect{x} |^{(2)}}  \pi(\vect{x})|\vect{x}|+(n-1)
\sum_{l=1}^{n-1} \sum_{\vect{x}\in R_0\atop T<  |\vect{x} |^{(2)}}\pi(\vect{x})\to 0 \quad \hbox{as } T\to\infty, 
\label{EQlim3}
\end{align}
by using (\ref{EQfundamental}).  Finally, for $j=1,\ldots,n-1$,
\begin{align}
0\leq \sum_{m=1}^{n-1}  \sum_{\vect{x}\in R_j \atop |\vect{x} |^{(2)} < T <|\vect{x}|^{(2)}+2x_m+1} \pi(\vect{x}) \ind_{\{x_m>0\}}
\left(T-|\vect{x}|^{(2)}\right)
\leq \sum_{m=1}^{n-1}  \sum_{\vect{x}\in R_j \atop |\vect{x} |^{(2)} < T <|\vect{x}|^{(2)}+2x_m+1} \pi(\vect{x})
\left(T-|\vect{x}|^{(2)}\right) 
\label{EQlim4}
\end{align}
which goes to zero as $T\to \infty$ from (\ref{EQlim3}). This proves that $\lim_{T\to\infty} h(T)=0$. 

We are now ready to conclude the proof.
Fix $\epsilon>0$. We can then find $T_\epsilon$ such that 
\[
\frac{2(n-1)h(T)-g(T)}{2(k-(n-1))}<\epsilon, \quad \frac{2(n-1)h(T)-g(T)}{2(k-(n-1))}<\epsilon,
\]
for all $T\geq T_\epsilon$. 
Take $T=T_\epsilon$ in the definition of the functions $V(\vect{x})=\min(|\vect{x}|^2,T^2)$ and  $V(\vect{x})=\min(\vect{x}^{(2)},T)$.
This yields
\begin{align}
\left|A-\frac{k(n-1)(2k-n)}{2n(k-n)(k-(n-1))}\right|&<\epsilon \label{sol:EQA2}\\
\left|B- \frac{k(n-1)(n-2)}{2n(k-(n-1))}\right|&<\epsilon. \label{sol:EQB2}
\end{align}
Since (\ref{sol:EQA2})-(\ref{sol:EQB2}) hold for any $\epsilon>0$, we conclude that  
\[
A=\frac{k(n-1)(2k-n)}{2n(k-n)(k-(n-1))},\quad  B=\frac{k(n-1)(n-2)}{2n(k-(n-1))},
\]
so that
\[
\E[|\vect Q|]=A+B=\frac{(n-1)k}{2(k-n)},
\]
which concludes the proof of Proposition \ref{prop:EQlonger}.
\end{proof}
\section{Stability Analysis}
\subsection{Proof of Lemma \ref{lem:sum-f}}
\label{app:proof-lemSumf}
\begin{proof}
For $j=1,\ldots,n-2$,  and for $j=n-1$ and $g(i)=1$ for all $i\in \N$, the mapping $f_j(\vect r)$ is symmetric w.r.t. $r_1,\ldots,r_j$.
Therefore, since  there are ${j\choose l}$ ways of having $l$ components equal to $1$ among $j$ components, we have 
\begin{equation}
\label{lem-j}
\sum_{\vect r\in E_j} f_j(\vect r)= \sum_{l=1}^{j}{j\choose l} \sum_{r_{l+1}\geq 2, \ldots, r_j \geq 2\atop r_{j+1}\geq 0, \ldots, r_{n-1}\geq 0} f_j(1,\ldots,1,r_{l+1},\ldots, r_{n-1}).
\end{equation}
Let us calculate the inner sum in (\ref{lem-j}). We have
\begin{align*}
\sum_{r_{l+1}\geq 2, \ldots, r_j \geq 2\atop r_{j+1}\geq 0, \ldots, r_{n-1}\geq 0}\hspace{-2mm} f_j(1,\ldots,1,r_{l+1},\ldots, r_{n-1})
&= 
\frac{l}{k^l}\hspace{-2mm}\sum_{r_{l+1}\geq 2, \ldots, r_j \geq 2\atop r_{j+1}\geq 0, \ldots, r_{n-1}\geq 0} \hspace{-3mm}
\frac{(l-1+r_{l+1}+\cdots +r_{n-1})!}{r_{l+1}!\ldots  r_{n-1}!}\left(\frac{1}{k}\right)^{\sum_{m=l+1}^{n-1} r_m} g(r_{n-1}),
\label{int:Gamma-j}
\end{align*}
so that
\begin{align*}
\sum_{\vect r\in E_j} f_j(\vect r) &=\sum_{l=1}^{j}{j\choose l}
\frac{l}{k^l} \sum_{r_{l+1}\geq 2, \ldots, r_j \geq 2\atop r_{j+1}\geq 0, \ldots, r_{n-1}\geq 0} 
\frac{(l-1+N_{l+1})!}{r_{l+1}!\ldots  r_{n-1}!}\ \left(\frac{1}{k}\right)^{N_{l+1}} g(r_{n-1}),
\end{align*}
which concludes the proof.
\end{proof}
\subsection{Proof of Proposition \ref{prop:drift}}
\label{app:proof-drift}
\begin{proof}
Define $V_1(\vect x)=\sum_{i=1}^{n-1}x_i^2$ and $V_2(\vect x)=\sum_{1\leq i<l\leq n-1}x_i x_l$, so that 
\begin{equation}
V(\vect x)=V_1(\vect x)+bV_2(\vect x).
\label{def:V}
\end{equation}
It will be convenient to consider separately the drifts $\Delta_{V_1}(\vect x)$ and $\Delta_{V_2}(\vect x)$ by
noting that $\Delta_{V}(\vect x) = \Delta_{V_1}(\vect x)+b \Delta_{V_2}(\vect x)$.

\subsubsection{Proof of (\ref{prop:delta-1})}
Take  $\vect x=(x_1~\ldots~x_{n-1})\in S-S^{\star}$. We have
\begin{align}
\Delta_{V_1}(\vect x)&=
\frac{k-n+1}{k} (V_1(\vect x -{\bf 1})- V_1(\vect x))+\frac{1}{k}\sum_{l=1}^{n-1}(V_1(\vect x+\vect e_l)-V_1(\vect x))\nonumber\\
&=
\frac{k-n+1}{k} (- 2|\vect x|+n-1)+\frac{1}{k}( 2|\vect x|+n-1)\nonumber\\
&=-2 \left(\frac{k-n}{k} \right)|\vect x|+(n-1)\left(\frac{k-n+2}{k}\right).
\label{foster1-V1}
\end{align}
It is a simple exercise to show that 
\begin{eqnarray*}
V_2(\vect x -{\bf 1})- V_2(\vect x)&=&-(n-2)|\vect x|+\frac{(n-1)(n-2)}{2}\\
V_2(\vect x+\vect e_l)-V_2(\vect x)&=& |\vect x|-x_l. 
\end{eqnarray*}
Hence,
\begin{align}
\Delta_{V_2}(\vect x)&=\frac{k-n+1}{k} (V_2(\vect x -{\bf 1})- V_2(\vect x))+\frac{1}{k}\sum_{l=1}^{n-1}(V_2(\vect x+\vect e_l)-V_2(\vect x))\nonumber\\
&=- \frac{(n-2)(k-n)}{k} |\vect x|+\frac{(k-n+1)(n-1)(n-2)}{2k}.
\label{foster1-V2}
\end{align}
From (\ref{foster1-V1}) and (\ref{foster1-V2}) we obtain that $k\Delta_{V}(\vect x) = k(\Delta_{V_1}(\vect x)+b \Delta_{V_2}(\vect x))$ is given by the r.h.s. of 
 (\ref{prop:delta-1}).
 \subsubsection{Proof of (\ref{prop:delta-2})}

Throughout $j\in \{1,\ldots,n-2\}$ is fixed and $\vect x_j'=(1~\ldots~1~x_{j+1}~\ldots~x_{n-1})$, $x_i\geq 2$ for $i=j+1,\ldots,n-1$, is also fixed.  
Define 
\[
\vect y(\vect r)=(r_1~\ldots~ r_j~ x_{j+1}-1+r_{j+1}~ \ldots~ x_{n-1}-1+r_{n-1}),
\]
with $\vect r=(r_1~\ldots~r_{n-1})\in E_j$ as defined in (\ref{def:Ej}).
We break the proof of (\ref{prop:delta-2}) in two parts, one for $V_1$ and the other for $V_2$.

\textit{Calculation of $\Delta_{V_1}(\vect x_j')$}:
By (\ref{prob-trans1})-(\ref{prob-trans3}), Lemma \ref{lem:hat-q} and the definition of $V_1$, we have 
\begin{align}
k\times \Delta_{V_1}(\vect x_j')&=(k-n+1)\sum_{ \vect r \in E_j} q^{\star}(\vect x_j' -{\bf 1}, \vect y(\vect r))
(V_1(\vect y(\vect r))-V_1(\vect x_j')) 
+\sum_{l=1}^{n-1}(V_1(\vect x_j'+\vect e_l)-V_1(\vect x))\nonumber\\
&=(j+1){k-n+j+1\choose k-n}\sum_{ \vect r \in E_j}   \sum_{l=1}^j \ind_{\{r_l=1\}} \frac{(|\vect{r}|-1)!}{r_1! \cdots r_{n-1}!}
 \left(\frac{1}{k}\right)^{|\vect{r}|}
(V_1(\vect y(\vect r))-V_1(\vect x_j')) \nonumber\\
&\qquad+2\sum_{i=j+1}^{n-1} x_i +n-1+2j,
\label{foster2}
\end{align}
where we have used the identity 
\[(k-n+1){k-n+j+1\choose k-n+1}=(j+1) {k-n+j+1\choose k-n}.\]
Since
\[
V_1(\vect y(\vect r))-V_1(\vect x_j')=2 \sum_{i=j+1}^{n-1} x_i(r_i-1)+\sum_{l=1}^{n-1}r^2_l -2\sum_{l=j+1}^{n-1} r_l +n-1-2j,
\]
elementary algebra from (\ref{foster2}) yields 
\begin{equation}
\label{Delta1-j}
k\times \Delta_{V_1}(\vect x_j')=2 \sum_{i=j+1}^{n-1} x_i\left[(j+1){k-n+j+1 \choose k-n}\Gamma_{j,i}+1\right]+\gamma_j,
\end{equation}
where, for $i=j+1,\ldots,n-1$,
\begin{align}
\Gamma_{j,i}&\coloneqq \sum_{\vect r\in E_j} \sum_{l=1}^j \ind_{\{r_l=1\}}
\frac{(|\vect{r}|-1) !}{r_1!\ldots  r_{n-1}!}  \left(\frac{1}{k}\right)^{|\vect{r}|} (r_i-1), \quad \text{and}
\label{def-Gamma-j}\\
\gamma_j&\coloneqq (j+1) {k-n+j+1\choose k-n} \sum_{\vect r\in E_j} \sum_{l=1}^j \ind_{\{r_l=1\}}
\frac{(|\vect{r}|-1)!}{r_1!\cdots  r_{n-1}!}\left(\frac{1}{k}\right)^{|\vect{r}|}
\left(\sum_{l=1}^{n-1}r^2_l -2\sum_{l=j+1}^{n-1} r_l +n-1-2j\right)\nonumber\\
&\qquad+n-1+2j.
\label{def-gamma-j}
\end{align}
It is easily seen from the definition of the set  $E_j$ in (\ref{def:Ej}) and the symmetry of the summand in (\ref{def-Gamma-j})
w.r.t. $r_{j+1},\ldots,r_{n-1}$,  that $\Gamma_{j,j+1}=\cdots=\Gamma_{j,n-1}$.
Define $\Gamma_j=\Gamma_{j,n-1}$. With this, (\ref{Delta1-j}) becomes
\begin{equation}
\label{Delta1-j-2}
k\times \Delta_{V_1}(\vect x_j')=2T_j\sum_{i=j+1}^{n-1} x_i+\gamma_j,
\end{equation}
where
\begin{equation}
T_j\coloneqq(j+1){k-n+j+1 \choose k-n}\Gamma_j+1.
\label{def:Tj}
\end{equation}
Let us further investigate $T_j$. 
We have (cf. (\ref{def-Gamma-j}) with $i=n-1$)
\begin{equation}
\label{eq-1:Gamma-j}
\Gamma_j= \sum_{\vect r \in E_j}  f_j(\vect r),
\end{equation}
where
\[
f_j(\vect r)\coloneqq \sum_{m=1}^j \ind_{\{r_m=1\}} \times 
\frac{(|\vect{r}|-1) !}{r_1!\ldots  r_{n-1}!} \times \left(\frac{1}{k}\right)^{|\vect{r}|}\times (r_{n-1}-1).
\]
By using the symmetry of $f_j$ wrt $r_1,\ldots,r_j$ 
we may rewrite $\Gamma_j$ as
\begin{equation}
\label{Gamma-j-alt}
\Gamma_j= \sum_{l=1}^{j}{j\choose l} \sum_{r_{l+1}\geq 2, \ldots, r_j \geq 2\atop r_{j+1}\geq 0, \ldots, r_{n-1}\geq 0} f_j(1,\ldots,1,r_{l+1},\ldots, r_{n-1}).
\end{equation}
By convention the set $\{r_{l+1}\geq 2, \ldots, r_j \geq 2\}$ is empty when $l=j$. In particular,
\begin{equation}
\label{Gamma-1}
\Gamma_1= \sum_{ r_2\geq 0,\ldots,r_{n-1}\geq 0}  f_1(1,r_2,,\ldots, r_{n-1}).
\end{equation}
Let us calculate the inner sum in (\ref{Gamma-j-alt}). 
For the following calculations, define $N_{i} \coloneqq r_i + \dots + r_{n-1}$.
We have
\begin{align*}
\sum_{r_{l+1}\geq 2, \ldots, r_j \geq 2\atop r_{j+1}\geq 0, \ldots, r_{n-1}\geq 0} f_j(1,\ldots,1,r_{l+1},\ldots, r_{n-1})
& =
l\left( \frac{1}{k}\right)^l \sum_{r_{l+1}\geq 2, \ldots, r_j \geq 2\atop r_{j+1}\geq 0, \ldots, r_{n-1}\geq 0} 
\frac{(l-1+N_{l+1})!}{r_{l+1}!\ldots  r_{n-1}!} \left(\frac{1}{k}\right)^{\sum_{m=l+1}^{n-1} r_m}  (r_{n-1}-1),
\end{align*}
for $l=1,\ldots,j$, so that
\begin{align}
\Gamma_j&=\sum_{l=1}^{j}{j\choose l}
\frac{l}{k^l} \sum_{r_{l+1}\geq 2, \ldots, r_j \geq 2\atop r_{j+1}\geq 0, \ldots, r_{n-1}\geq 0}
\frac{(l-1+N_{l+1})!}{r_{l+1}!\ldots  r_{n-1}!} \left(\frac{1}{k}\right)^{\sum_{m=l+1}^{n-1} r_m} 
(r_{n-1}-1)\nonumber\\
&=\sum_{l=1}^{j}{j\choose l} \frac{l}{k^l}F_{j-l}(r_{n-1}-1;n-l-1,l-1),
\label{Gamma-j-2}
\end{align}
where $F_m(g;J,L)$ is defined in (\ref{def-Fm}).
\begin{lemma}
\label{lem:T}
For $j=1,\ldots,n-2$,
\begin{equation}
T_j=-(k-n)+(k-n+1)\sum_{i=2}^{j+1}\frac{1}{k-n+i}.
\label{lem:Tj}
\end{equation}
\end{lemma}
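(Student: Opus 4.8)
The plan is to recognise the sum $\Gamma_j$ of (\ref{Gamma-j-2}) as a special value of the auxiliary quantity $G_m(g;a,n)$ of (\ref{app:Gm0}) and then apply Lemma \ref{lem:Gm}. Set $m=k-n$. Recalling the mapping $g_3(i)=i-1$ of (\ref{def:mappings-g}), a direct comparison of (\ref{Gamma-j-2}) with (\ref{app:Gm0}) shows that $\Gamma_j=G_j(g_3;-1,n)$ for $j=1,\dots,n-2$. Since $j\le n-2$ and the finiteness hypothesis for $g_3$ is guaranteed by the closed form (\ref{F0:g3}), Lemma \ref{lem:Gm} gives
\[
\Gamma_j=\frac{1}{k}\sum_{l=1}^{j}{j\choose l}(-1)^{l+1}\,l\,F_0(g_3;n-l-1,0).
\]
Plugging in (\ref{F0:g3}) with $J=n-l-1$ and $L=0$ (equivalently, writing $g_3=g_2-g_1$ and using (\ref{F0:g1})--(\ref{F0:g2})), for which $F_0(g_3;n-l-1,0)=-k(m+l)/(m+l+1)^2$, the factor $k$ cancels and one is left with
\[
\Gamma_j=\sum_{l=1}^{j}{j\choose l}(-1)^{l+1}\,l\left(\frac{1}{(m+l+1)^2}-\frac{1}{m+l+1}\right).
\]

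The heart of the argument is then the evaluation of the two alternating binomial sums
\[
S_1=\sum_{l=1}^{j}{j\choose l}(-1)^{l+1}\frac{l}{m+l+1},\qquad
S_2=\sum_{l=1}^{j}{j\choose l}(-1)^{l+1}\frac{l}{(m+l+1)^2},
\]
so that $\Gamma_j=S_2-S_1$. I would handle these with the integral representations $1/(m+l+1)=\int_0^1 x^{m+l}\,dx$ and $1/(m+l+1)^2=-\int_0^1 x^{m+l}\ln x\,dx$, combined with the elementary identity $\sum_{l=1}^{j}{j\choose l}(-1)^{l+1}l\,x^{l}=jx(1-x)^{j-1}$ (differentiate $(1-x)^j$ and multiply by $x$). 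Interchanging sum and integral gives $S_1=j\int_0^1 x^{m+1}(1-x)^{j-1}\,dx=jB(m+2,j)=\frac{(m+1)!\,j!}{(m+j+1)!}$, where $B$ is the Beta function, and $S_2=-j\int_0^1 x^{m+1}(1-x)^{j-1}\ln x\,dx=S_1\sum_{i=2}^{j+1}\frac{1}{m+i}$, the last step coming from differentiating $B(a,j)=\int_0^1 x^{a-1}(1-x)^{j-1}dx$ in $a$ at $a=m+2$ and using $\psi(m+2)-\psi(m+j+2)=-\sum_{i=2}^{j+1}\frac{1}{m+i}$. Hence $\Gamma_j=S_1\left(\sum_{i=2}^{j+1}\frac{1}{m+i}-1\right)$.

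It then remains to feed this into the definition (\ref{def:Tj}), $T_j=(j+1){m+j+1\choose m}\Gamma_j+1$. Writing $(j+1){m+j+1\choose m}=\frac{(m+j+1)!}{m!\,j!}$, the factorials in the product $(j+1){m+j+1\choose m}S_1$ collapse to $m+1$, so that
\[
T_j=(m+1)\left(\sum_{i=2}^{j+1}\frac{1}{m+i}-1\right)+1=(m+1)\sum_{i=2}^{j+1}\frac{1}{m+i}-m,
\]
which is exactly (\ref{lem:Tj}) once $m=k-n$ is restored.

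The one genuinely delicate point is the evaluation of $S_1$ and $S_2$: the appearance of the partial harmonic sum $\sum_{i=2}^{j+1}1/(m+i)$, rather than a rational function of $m$ and $j$, shows that $\Gamma_j$ is not accessible by partial fractions or telescoping alone and that the derivative-of-Beta (digamma) identity, or something equivalent to it, is really needed. An alternative that stays inside the paper's own toolkit is induction on $j$, proving $T_{j+1}-T_j=(m+1)/(m+j+2)$ with base case $T_1=-m+(m+1)/(m+2)$; the inductive step reduces to the vanishing of an alternating binomial sum of a low-degree polynomial and can be closed with (\ref{poly-result}), exactly as in the computation of $D_j$ in the proof of Lemma \ref{lem:hat-q}. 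Either way, everything beyond this step is routine bookkeeping with binomial coefficients and factorials.
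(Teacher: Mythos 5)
Your proof is correct, and after the common first step (identifying $\Gamma_j=G_j(g_3;-1,n)$, applying Lemma \ref{lem:Gm} and (\ref{F0:g3}) to reach $\Gamma_j=\sum_{l=1}^{j}{j\choose l}(-1)^{l}l\,\tfrac{m+l}{(m+l+1)^2}$, which is exactly the paper's (\ref{value-Gamma-j-3})) it diverges from the paper in how that alternating sum is evaluated. The paper never computes $\Gamma_j$ in closed form: it forms the difference $T_{j+1}-T_j$ directly from (\ref{def:Tj}), massages it with the Pascal-type identity (\ref{basic-id}) and the vanishing-polynomial identity (\ref{poly-result}) into $ {k-n+j+1\choose k-n}\sum_{l=1}^{j+1}{j+1\choose l}\tfrac{(-1)^{l+1}l}{k-n+l+1}$, and then proves by a separate induction (whose step cleverly relates the increments at $(k,n)$ to those at $(k,n-1)$) that this equals $\tfrac{k-n+1}{k-n+j+2}$, finally telescoping from the explicitly computed $T_1$. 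You instead split $\tfrac{m+l}{(m+l+1)^2}$ into $S_2-S_1$ and evaluate both sums exactly via Beta-function integral representations and the digamma difference $\psi(m+2)-\psi(m+j+2)=-\sum_{i=2}^{j+1}\tfrac{1}{m+i}$; your closed forms $S_1=\tfrac{(m+1)!\,j!}{(m+j+1)!}$ and $S_2=S_1\sum_{i=2}^{j+1}\tfrac{1}{m+i}$ check out (e.g.\ at $j=1$ they reproduce the paper's (\ref{app:T1})), the interchange of the finite sum with the integral is harmless, and the factorial cancellation $(j+1){m+j+1\choose m}S_1=m+1$ makes the final assembly immediate. What your route buys is a genuinely explicit formula for $\Gamma_j$ and a proof with no guessed recursion and no induction; what it costs is stepping outside the paper's purely combinatorial toolkit (Beta and digamma functions). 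Your closing remark correctly identifies the inductive alternative via $T_{j+1}-T_j=(m+1)/(m+j+2)$ and (\ref{poly-result}) as the in-paper route --- that is indeed, in essence, what the authors do.
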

The proof is given in Appendix \ref{app:Tj}.

Substituting (\ref{lem:Tj}) into  (\ref{Delta1-j-2}) yields
\begin{equation}
k\times \Delta_{V_1}(\vect x_j')=2 \left(-(k-n)+(k-n+1)\sum_{i=2}^{j+1}\frac{1}{k-n+i}\right)\sum_{i=j+1}^{n-1}x_i + \gamma_j.
\label{Delta-V1-foster2}
\end{equation}
Next, we compute the second portion of the drift for $\vect x_j'$.

\textit{Calculation of $\Delta_{V_2}(\vect x_j')$}:
Similarly to (\ref{foster2}), we find
\begin{align}
k\times \Delta_{V_2}(\vect x_j')&=(j+1){k-n+j+1\choose k-n}\sum_{ \vect r \in E_j}   \sum_{l=1}^j \ind_{\{r_l=1\}} \frac{(|\vect{r}|-1)!}{r_1! \cdots r_{n-1}!}
 \left(\frac{1}{k}\right)^{|\vect{r}|}
(V_2(\vect y(\vect r))-V_2(\vect x_j'))\nonumber\\
&\qquad +\sum_{l=1}^{n-1}(V_2(\vect x_j'+\vect e_l)-V_2(\vect x_j')).
\label{foster2-V2}
\end{align}
We have
\[
V_2(\vect x_j'+\vect e_l)-V_2(\vect x_j')= j+\sum_{i=j+1}^{n-1} x_i - x_l,
\]
so that
\begin{align}
 \sum_{l=1}^{n-1}(V_2(\vect x_j'+\vect e_l)-V_2(\vect x_j'))&= (n-1)j + (n-1) \sum_{i=j+1}^{n-1} x_i-\sum_{l=1}^{n-1}x_l\nonumber\\
 &=(n-1)j + (n-1) \sum_{i=j+1}^{n-1} x_i- \left(j + \sum_{l=j+1}^{n-1} x_l\right)\nonumber\\
 &= (n-2) \sum_{i=j+1}^{n-1} x_i +(n-2)j.
 \label{second-term-V2}
\end{align}
Let us now find  $V_2(\vect x_j')$ and $V_2(\vect y(\vect r))$. We have (recall that $x_1=\cdots=x_j=1$)
\begin{align}
V_2(\vect x_j')&= \sum_{i=1}^j \sum_{l=i+1}^{n-1} x_l + \sum_{i=j+1}^{n-2}x_i \sum_{l=i+1}^{n-1} x_l\nonumber\\
&=  \sum_{i=1}^j\left(j-i+\sum_{l=j+1}^{n-1}x_l\right)+ \sum_{j+1\leq i<l\leq n-1} x_ix_l\nonumber \\
&=\frac{j^2-j}{2}+j \sum_{l=j+1}^{n-1} x_l+\sum_{j+1\leq i<l\leq n-1} x_ix_l .
\label{V2-eq1}
\end{align}
On the other hand,
\begin{align*}
V_2(\vect y(\vect r))&=\sum_{i=1}^j r_i \left[\sum_{l=i+1}^j r_l + \sum_{l=j+1}^{n-1} (x_l-1+r_l)\right]
+\sum_{i=j+1}^{n-2}(x_i-1+r_i)\sum_{l=i+1}^{n-1}(x_l-1+r_l)\nonumber\\
&= \alpha_j(\vect r)+\hspace{-3mm}\sum_{j+1\leq i<l\leq n-1}\hspace{-3mm}x_ix_l +\sum_{i=j+1}^{n-1}x_i\left[\sum_{l=1}^{n-1}r_l -r_i -(n-j-2)\right],
\end{align*}
after elementary algebra, with 
\begin{equation}
\alpha_j(\vect r)\coloneqq\sum_{i=1}^j r_i\left[\sum_{l=i+1}^{n-1} r_l -(n-j-1)\right]+\sum_{j+1\leq i<l\leq n-1} (r_i-1)(r_l-1).
\label{def-alpha-j}
\end{equation}
Hence,
\begin{equation}
\label{diff-V2}
V_2(\vect y(\vect r))-V_2(\vect x_j')=\sum_{i=j+1}^{n-1}x_i\left[\sum_{l=1}^{n-1}r_l -r_i -(n-2)\right]+\alpha_j(\vect r)-\frac{j^2-j}{2}.
\end{equation}
Introducing (\ref{second-term-V2}) and (\ref{diff-V2}) into (\ref{foster2-V2}), we obtain
\begin{equation}
k\times \Delta_{V_2}(\vect x_j')=\sum_{i=j+1}^{n-1}x_i\left[n-2+ (j+1){k-n+j+1\choose k-n} \sum_{ \vect r \in E_j}   f_{j,i}(\vect r)  \right]+\beta_j,
\label{V2-eq3}
\end{equation}
where for $i=j+1,\ldots,n-1$,
\begin{align}
f_{j,i}(\vect r)\coloneqq \sum_{l=1}^j \ind_{\{r_l=1\}} \frac{(|\vect{r}|-1)!}{r_1! \cdots r_{n-1}!} \left(\frac{1}{k}\right)^{|\vect{r}|} \left(\sum_{l=1}^{n-1}r_l -r_i-(n-2)\right),
\label{def-f-j-i}
\end{align}
and
\begin{align}
\beta_j&\coloneqq (j+1){k-n+j+1\choose k-n}\sum_{ \vect r \in E_j}   \sum_{l=1}^j \ind_{\{r_l=1\}} \frac{(|\vect{r}|-1)!}{r_1! \cdots r_{n-1}!}
 \left(\frac{1}{k}\right)^{|\vect{r}|}
 \left(\alpha_j (\vect r)-\frac{j^2-j}{2}\right) +(n-2)j.
 \label{def-beta-j}
 \end{align}
The same argument used to derive  $\sum_{\vect r\in E_j}f_{j}$  (see (\ref{eq-1:Gamma-j}) and (\ref{Gamma-j-alt})) applies to $\sum_{\vect r\in E_j}f_{j,i}$, and again letting $N_{i} \coloneqq r_i + \dots + r_{n-1}$, we obtain
 \begin{align}
\sum_{\vect r\in E_j} f_{j,i}&=\sum_{l=1}^j {j\choose l} \sum_{r_{l+1}\geq 2, \ldots, r_j\geq 2 \atop r_{j+1}\geq 0,\ldots, r_{n-1}\geq 0}
f_{j,i}(1,\ldots,1,r_{l+1},\ldots,r_{n-1})\nonumber \\
&=\sum_{l=1}^j {j\choose l} l\left(\frac{1}{k}\right)^l   \sum_{r_{l+1}\geq 2, \ldots, r_j\geq 2 \atop r_{j+1}\geq 0,\ldots, r_{n-1}\geq 0}
 \frac{(l-1+N_{l+1})!}{r_{l+1}! \cdots   r_{n-1}!} \left(\frac{1}{k}\right)^{N_{l+1}}
 \left(l+N_{l+1} -r_i -(n-2)\right)\nonumber \\
 &=\sum_{l=1}^j {j\choose l} l\left(\frac{1}{k}\right)^l   \sum_{r_{l+1}\geq 2, \ldots, r_j\geq 2 \atop r_{j+1}\geq 0,\ldots, r_{n-1}\geq 0}
 \frac{(l+N_{l+1})!}{r_{l+1}! \cdots   r_{n-1}!} \left(\frac{1}{k}\right)^{N_{l+1}} \nonumber \\
 &\qquad-\sum_{l=1}^j {j\choose l} l\left(\frac{1}{k}\right)^l   \sum_{r_{l+1}\geq 2, \ldots, r_j\geq 2 \atop r_{j+1}\geq 0,\ldots, r_{n-1}\geq 0}
 \frac{(l-1+N_{l+1})!}{r_{l+1}! \cdots   r_{n-1}!} \left(\frac{1}{k}\right)^{N_{l+1}}r_i  
 \label{V2-eq2}\\
&\qquad-(n-2)\sum_{l=1}^j {j\choose l} l\left(\frac{1}{k}\right)^l   \sum_{r_{l+1}\geq 2, \ldots, r_j\geq 2 \atop r_{j+1}\geq 0,\ldots, r_{n-1}\geq 0}
 \frac{(l-1+N_{l+1})!}{r_{l+1}! \cdots   r_{n-1}!} \left(\frac{1}{k}\right)^{N_{l+1}}\nonumber\\
& \coloneqq W_{j,1}-W_{j,i,2} -(n-2)W_{j,3},
 \label{sum-W-j}
 \end{align}
 for $i=j+1,\ldots,n-1$.
 In (\ref{V2-eq2}) one can replace $r_i$ by, for instance, $r_{n-1}$ since $i\in\{j+1,\ldots,n-1\}$
and the summand is symmetric w.r.t.  $r_{j+1}, \ldots, r_{n-1}$.
 In the following $W_{j,2}\coloneqq W_{j,n-1,2}$.
 Finally, substituting (\ref{sum-W-j}) into (\ref{V2-eq3}) yields
\begin{equation}
\label{drift-V2}
k\times \Delta_{V_2}(\vect x_j')=W_j \sum_{i=j+1}^{n-1} x_i +\beta_j,
\end{equation}
with
\begin{equation}
\label{def-Tj-cross}
W_j\coloneqq(j+1){k-n+j+1\choose k-n}\left(W_{j,1}-W_{j,2} -(n-2)W_{j,3}\right)+n-2.
\end{equation}
Let us now turn to the calculation of $W_j$.
A glance at the definitions of  $W_{j,1}$,  $W_{j,2}$ and  $W_{j,3}$  in (\ref{sum-W-j}) shows that they are all  of the form (\ref{app:Gm0}) in Appendix \ref{app:generic}.
Specifically, 
\[
W_{j,1}=G_j(g_1;0,n),\quad W_{j,2}=G_j(g_2;-1,n),\quad W_{j,3}=G_j(g_1;-1,n),
\]
where the mappings $g_1$ and $g_2$ are defined in (\ref{def:mappings-g}).
Hence, we may apply 
Lemma \ref{lem:Gm} and use (\ref{F0:g1})-(\ref{F0:g2}) to calculate them. We find
\begin{equation}
\label{Wj1}
W_{j,1}= \frac{1}{k}\sum_{l=1}^j {j\choose l}l (-1)^{l+1}  F_0(n-l-1,1)= k\sum_{l=1}^j{ j\choose l} \frac{l (-1)^{l+1}}{(k-n+l+1)^2},
\end{equation}
by using (\ref{F0:g1}), 
\begin{equation}
\label{Wj2}
W_{j,2}= \frac{1}{k}\sum_{l=1}^j {j\choose l}l (-1)^{l+1}  F_0(n-l-1,0)= \sum_{l=1}^j{ j\choose l} \frac{l (-1)^{l+1}}{(k-n+l+1)^2},
\end{equation}
by using (\ref{F0:g2}),  and 
\begin{equation}
\label{Wj2}
W_{j,3}= \frac{1}{k}\sum_{l=1}^j {j\choose l}l (-1)^{l+1}  F_0(n-l-1,0)= \sum_{l=1}^j{ j\choose l} \frac{l (-1)^{l+1}}{k-n+l+1},
\end{equation}
by using (\ref{F0:g1}).
Therefore,
\[
W_{j,1}-W_{j,2} -(n-2)W_{j,3}= \sum_{l=1}^j {j\choose l} l(-1)^{l+1}  \frac{k-1 -(n-2)(k-n+l+1)}{(k-n+l+1)^2},
\]
and finally,
\begin{align}
\label{Wj}
W_j&=(j+1){k-n+j+1\choose k-n}  \sum_{l=1}^j {j\choose l} l(-1)^{l+1}  \frac{k-1 -(n-2)(k-n+l+1)}{(k-n+l+1)^2}
+n-2.
\end{align}
\begin{lemma}
\label{lem:W}
\begin{equation}
\label{lem:W1}
W_1=\frac{(k-1)(k-n+1)}{k-n+2}-(n-2)(k-n) 
\end{equation}
and 
\begin{equation}
\label{eq:diff-Wj}
W_{j+1}-W_j=\frac{(k-1)(k-n+1)}{k-n+j+2},
\end{equation}
yielding
\begin{equation}
W_j=(k-1)(k-n+1)\sum_{i=2}^{j+1} \frac{1}{k-n+i}-(n-2)(k-n),
\label{lem:Wj}  \quad j=1,\ldots,n-2.
\end{equation}
\end{lemma}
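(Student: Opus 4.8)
The formula~(\ref{Wj}) already displays $W_j$ as an explicit alternating binomial sum, so the only remaining task is to evaluate that sum. My plan is to isolate the part of $W_j$ that does not depend on $j$ — which I can pin down immediately using $D_j=1$ from Lemma~\ref{lem:hat-q} — and then reduce the problem to a single sum that I evaluate in closed form, from which (\ref{lem:W1}), (\ref{eq:diff-Wj}) and (\ref{lem:Wj}) all follow.

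Writing $k-1-(n-2)(k-n+l+1)=(k-1)-(n-2)(k-n+l+1)$ in (\ref{Wj}) and separating the two fractions gives $W_j=(k-1)\,a_j-(n-2)\,b_j+(n-2)$, where
\[
a_j:=(j+1){k-n+j+1\choose k-n}\sum_{l=1}^{j}{j\choose l}\frac{l(-1)^{l+1}}{(k-n+l+1)^2},\qquad
b_j:=(j+1){k-n+j+1\choose k-n}\sum_{l=1}^{j}{j\choose l}\frac{l(-1)^{l+1}}{k-n+l+1}.
\]
By Lemma~\ref{lem:hat-q} we have $D_j={k-n+j+1\choose k-n+1}\sum_{l=1}^{j}{j\choose l}\frac{(-1)^{l+1}l}{k-n+l+1}=1$, so the inner sum defining $b_j$ is $1/{k-n+j+1\choose k-n+1}$; since ${k-n+j+1\choose k-n}/{k-n+j+1\choose k-n+1}=(k-n+1)/(j+1)$, this yields $b_j=k-n+1$ for every $j$, and hence the constant-in-$j$ contribution $-(n-2)b_j+(n-2)=-(n-2)(k-n)$, which already accounts for the additive term in (\ref{lem:Wj}).

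It remains to show $a_j=(k-n+1)\sum_{i=2}^{j+1}\frac{1}{k-n+i}$, which reduces to evaluating the alternating sum inside $a_j$. The clean route: from the classical partial-fraction identity $\sum_{l=0}^{j}{j\choose l}\frac{(-1)^l}{x+l}=\frac{j!}{x(x+1)\cdots(x+j)}$ (itself a one-line induction on $j$ via ${j\choose l}={j-1\choose l}+{j-1\choose l-1}$), differentiating in $x$ gives $\sum_{l=0}^{j}{j\choose l}\frac{(-1)^l}{(x+l)^2}=\frac{j!}{\prod_{i=0}^{j}(x+i)}\sum_{i=0}^{j}\frac{1}{x+i}$, and writing $l=(x+l)-x$ then gives $\sum_{l=1}^{j}{j\choose l}\frac{l(-1)^{l+1}}{(x+l)^2}=\frac{j!}{\prod_{i=1}^{j}(x+i)}\sum_{i=1}^{j}\frac{1}{x+i}$. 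Specializing at $x=k-n+1$ and multiplying by $(j+1){k-n+j+1\choose k-n}=\frac{\prod_{i=1}^{j+1}(k-n+i)}{j!}$ collapses to $a_j=(k-n+1)\sum_{i=2}^{j+1}\frac{1}{k-n+i}$. Plugging back, $W_j=(k-1)(k-n+1)\sum_{i=2}^{j+1}\frac{1}{k-n+i}-(n-2)(k-n)$, which is (\ref{lem:Wj}); taking $j=1$ recovers (\ref{lem:W1}) and subtracting consecutive values recovers (\ref{eq:diff-Wj}). Staying closer to the paper's own toolkit, one can instead verify $W_1$ directly from (\ref{Wj}) and then establish the one-step difference (\ref{eq:diff-Wj}) by rewriting $(j+2){k-n+j+2\choose k-n}=(k-n+j+2){k-n+j+1\choose k-n}$, splitting the length-$(j+1)$ sum via the Pascal-type identity~(\ref{basic-id}), reindexing, and invoking the polynomial-vanishing identity~(\ref{poly-result}), exactly as in the computation of $D_{j+1}-D_j$, telescoping afterwards.

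The main obstacle is the evaluation of $a_j$: in either route it is the alternating binomial sum with a squared denominator that carries the content — differentiating the Beta-type identity cleanly, or, in the paper's style, the bookkeeping around the boundary terms $l=1,j+1$ when applying (\ref{basic-id}) and the identification of the single surviving term after (\ref{poly-result}). Everything else — the reduction $b_j=k-n+1$ via $D_j=1$, and the final assembly of $W_j$, $W_1$ and $W_{j+1}-W_j$ — is routine.
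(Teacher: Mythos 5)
Your proof is correct, and it takes a genuinely different route from the paper's. The paper verifies $W_1$ directly from (\ref{Wj}) and then computes the increment $W_{j+1}-W_j=(k-1)A_j-(n-2)B_j$, showing $B_j=0$ via the polynomial-vanishing identity (\ref{poly-result}) and identifying $A_j$ with a quantity already evaluated in the proof of Lemma \ref{lem:T} (cf.\ (\ref{diff-Tj}) and (\ref{claim})), before telescoping. You instead evaluate $W_j$ in closed form in a single pass: after splitting the summand of (\ref{Wj}) into a simple-pole and a double-pole part, you recognize the simple-pole sum as a rescaling of the normalization constant $D_j$ of Lemma \ref{lem:hat-q}, so that $b_j=(j+1)\frac{k-n+1}{j+1}D_j=k-n+1$ follows from $D_j=1$ with no further computation --- a nice observation the paper does not exploit, and one that explains why the additive term $-(n-2)(k-n)$ in (\ref{lem:Wj}) is independent of $j$. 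The double-pole sum you evaluate from the classical partial-fraction identity $\sum_{l=0}^{j}{j\choose l}\frac{(-1)^l}{x+l}=\frac{j!}{\prod_{i=0}^{j}(x+i)}$ and its $x$-derivative; I checked the resulting identity $\sum_{l=1}^{j}{j\choose l}\frac{l(-1)^{l+1}}{(x+l)^2}=\frac{j!}{\prod_{i=1}^{j}(x+i)}\sum_{i=1}^{j}\frac{1}{x+i}$, the specialization $x=k-n+1$, and the prefactor identity $(j+1){k-n+j+1\choose k-n}=\frac{1}{j!}\prod_{i=1}^{j+1}(k-n+i)$, and they all hold, giving $a_j=(k-n+1)\sum_{i=2}^{j+1}\frac{1}{k-n+i}$ and hence (\ref{lem:Wj}), from which (\ref{lem:W1}) and (\ref{eq:diff-Wj}) follow by specialization and differencing. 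Your approach buys a direct closed form with no induction or telescoping and reuses the already-established fact $D_j=1$; the paper's approach stays entirely within the binomial manipulations it has already set up (reusing the $T_j$ recursion machinery) at the cost of an extra induction step.
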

The proof of Lemma \ref{lem:W} is given in Appendix \ref{app:W}.

We can now conclude the proof of Proposition \ref{prop:drift}: multiplying (\ref{drift-V2}) by $b$,  where $W_j$ is given in (\ref{lem:Wj}),
and adding it to (\ref{Delta-V1-foster2}) gives (\ref{prop:delta-2}), with $\delta_j:=\gamma_j+b\beta_j$.  It is shown in Appendix \ref{app:finitess}
that $\gamma_j$ (defined in (\ref{def-gamma-j})) and $\beta_j$ (defined in (\ref{def-beta-j})) are finite for $j=1,\ldots,n-2$. 
\end{proof}
\subsection{Proof of Lemma \ref{lem:T}}
\label{app:Tj}
\begin{proof}
We start from $\Gamma_j$ given  in (\ref{Gamma-j-2})  which, from (\ref{app:Gm0}) and Lemma \ref{lem:Gm}, is given by
\begin{align}
\Gamma_j&=G_j(g_3;-1,n)\nonumber\\
&=\frac{1}{k}\sum_{l=1}^j {j\choose l}(-1)^{l+1}l F_0(g_3;n-l-1,0)\nonumber\\
&= \sum_{l=1}^j {j\choose l}(-1)^{l}l\frac{k-n+l}{(k-n+l+1)^2},
\label{value-Gamma-j-3}
\end{align}
by using (\ref{F0:g3}). 
Write $T_j(k,n)$ for $T_j$ to make explicit the dependence on $k$ and $n$. Then, from (\ref{def:Tj}),
\begin{align}
T_{j+1}(k,n)-T_j(k,n)&={k-n+j+1\choose k-n}  \left[(k-n+j+2)\Gamma_{j+1}-(j+1)\Gamma_j\right]\nonumber\\
&={k-n+j+1\choose k-n} \Biggl[(k-n+j+2) \sum_{l=1}^{j+1}{j+1\choose l} (-1)^l l \frac{k-n+l}{(k-n+l+1)^2}\nonumber\\
&\qquad
-(j+1)\sum_{l=1}^j {j\choose l} (-1)^l l \frac{k-n+l}{(k-n+l+1)^2}\Biggr] \quad \hbox{via } (\ref{value-Gamma-j-3})\nonumber\\
&={k-n+j+1\choose k-n} \Biggl[\sum_{l=1}^j (-1)^l l \frac{k-n+l}{(k-n+l+1)^2}\Bigg( (k-n+j+2) {j+1\choose l}
-  (j+1){j\choose l}\Bigg)\nonumber\\&\qquad
+ (-1)^{j+1} \frac{k-n+j+1}{k-n+j+2}\Biggr] \nonumber\\
&={k-n+j+1\choose k-n}\Biggl[\sum_{l=1}^j {j+1 \choose l} (-1)^l l \frac{k-n+l}{k-n+l+1}
+ (-1)^{j+1}(j+1) \frac{k-n+j+1}{k-n+j+2}\Biggr]\nonumber\\
&= {k-n+j+1\choose k-n} \sum_{l=1}^{j+1} {j+1 \choose l} (-1)^l l \,\frac{k-n+l}{k-n+l+1}\nonumber\\
&={k-n+j+1\choose k-n} \left[ \sum_{l=1}^{j+1} {j+1 \choose l} (-1)^l l  - \sum_{l=1}^{j+1} {j+1 \choose l} (-1)^l \frac{l}{k-n+l+1}\right].
\nonumber
\end{align}
Using (\ref{poly-result}),
\begin{equation}
\label{diff-Tj}
T_{j+1}(k,n)-T_j(k,n)= {k-n+j+1\choose k-n}\sum_{l=1}^{j+1} {j+1 \choose l} \frac{ (-1)^{l+1}l}{k-n+l+1}.
\end{equation}
We claim that 
\begin{equation}
\label{claim}
T_{j+1}(k,n)-T_j(k,n)=\frac{k-n+1}{k-n+j+2},
\end{equation}
for all $j=1,\ldots,n-2$ and $n\geq 3$.
When $j=1$,  (\ref{diff-Tj}) yields
\begin{align*}
T_2(k,n)-T_1(k,n)&=\frac{(k-n+1)(k-n+2)}{2}\left(\frac{2}{k-n+2} - \frac{2}{k-n+3}\right)
=\frac{k-n+1}{k-n+3},
\end{align*}
which shows that (\ref{claim}) is true for $j=1$ and for all $n\geq 3$.
Now, assume that $T_{j+1}(k,n)-T_j(k,n)=\frac{k-n+1}{k-n+j+2}$ for $j=1,\ldots, J-1$  and $n\geq 3$, and let us show that  $T_{J+1}(k,n)-T_{J}(k,n)=\frac{k-n+1}{k-n+J+2}$. 
From (\ref{diff-Tj}) and (\ref{basic-id}) in that order, we obtain
\begin{align*}
T&_{J+1}(k,n)-T_J(k,n)={k-n+J+1\choose k-n}\sum_{l=1}^{J} {J+1 \choose l}  \frac{(-1)^{l+1}l}{k-n+l+1}
+ {k-n+J+1\choose k-n}\frac{ (-1)^{J+2}(J+1)}{k-n+J+2}\\
&=
 {k-n+J+1\choose k-n}  \left(\frac{J+1}{J}\right) \sum_{l=1}^{J} \frac{(-1)^{l+1}}{k-n+l+1}\left[ {J\choose l}l+ {J\choose l-1}(l-1)\right]
+ {k-n+J+1\choose k-n} \frac{(-1)^{J}(J+1)}{k-n+J+2}\\
&=\frac{k-n+J+1}{J} \left(T_J(k,n)-T_{J-1}(k,n)\right)
+ {k-n+J+1\choose k-n}\left(\frac{J+1}{J}\right)\sum_{l=2}^{J} {J\choose l-1}\frac{ (-1)^{l+1} (l-1)}{k-n+l+1}\\&\qquad 
+ {k-n+J+1\choose k-n} \frac{(-1)^{J}(J+1)}{k-n+J+2}\\
&=\frac{k-n+J+1}{J} \left(T_J(k,n)-T_{J-1}(k,n)\right)
+ {k-n+J+1\choose k-n}\left(\frac{J+1}{J}\right)\sum_{l=1}^{J} {J\choose l}\frac{ (-1)^{l} l}{k-(n-1)+l+1}\\
&=\frac{k-n+J+1}{J} \left(T_J(k,n)-T_{J-1}(k,n)\right)
-\frac{k-n+1}{J} \left(T_J(k,n-1)-T_{J-1}(k,n-1)\right)\\
&=\frac{k-n+J+1}{J}\left(\frac{k-n+1}{k-n+J+1}\right)
-\frac{k-n+1}{J} \left(\frac{k-n+2}{k-n+J+2}\right)=\frac{k-n+1}{k-n+J+2},
\end{align*}
by using the induction hypothesis. This concludes the induction step and proves the validity of (\ref{claim}).

Letting $j=1$ in (\ref{value-Gamma-j-3}) and using (\ref{def:Tj}), we obtain
\begin{equation}
\label{app:T1}
T_1(k,n)=-2{k-n+2\choose k-n}\frac{k-n+1}{(k-n+2)^2}+1=-\frac{(k-n)^2 +k-n-1}{k-n+2}.
\end{equation}
The recursion (\ref{claim}) together with (\ref{app:T1}) yields
\begin{align*}
T_j(k,n)&=-\frac{(k-n)^2 +k-n-1}{k-n+2}  +(k-n+1)\sum_{i=3}^{j+1}\frac{1}{k-n+i}\nonumber\\
&=-(k-n) +(k-n+1)\sum_{i=2}^{j+1}\frac{1}{k-n+i}.
\end{align*}
This completes the proof of Lemma \ref{lem:T}.
\end{proof}
\subsection{Proof of Lemma \ref{lem:W}}
\label{app:W}
\begin{proof}
Letting $j=1$ in (\ref{Wj}) yields (\ref{lem:W1}). Assume that $j=2,\ldots,{n-2}$. We have from (\ref{Wj})
\[
W_{j+1}-W_j=(k-1)A_j -(n-2)B_j,
\]
with
\begin{align*}
A_j&=\frac{(k-n+j+2)!}{(k-n)! (j+1)!} \sum_{l=1}^{j+1}  {j+1\choose l} \frac{(-1)^{l+1} l }{(k-n+l+1)^2}
-\frac{(k-n+j+1)!}{(k-n)! j!} \sum_{l=1}^{j}  {j\choose l} \frac{(-1)^{l+1} l }{(k-n+l+1)^2},
\end{align*}
and
\begin{align*}
B_j&=\frac{(k-n+j+2)!}{(k-n)! (j+1)!} \sum_{l=1}^{j+1}  {j+1\choose l}  \frac{(-1)^{l+1} l}{k-n+l+1}
-\frac{(k-n+j+1)!}{(k-n)! j!} \sum_{l=1}^{j}  {j\choose l}  \frac{(-1)^{l+1} l}{k-n+l+1}.
\end{align*}
Let us show that $B_j=0$. By considering separately the term corresponding to $l=j+1$ in the first sum in $B_j$, we have
\begin{align*}
B_j&={k-n+j+1\choose k-n}     (-1)^{j+2}(j+1)
+\frac{(k-n+j+1)!}{(k-n)!}\sum_{l=1}^j \frac{(-1)^{l+1} l}{(j-l)!l! (k-n+l+1)}\left[\frac{k-n+j+2}{j+1-l}-1\right]\\
 &={k-n+j+1\choose k-n}     (-1)^{j+2}(j+1) +{k-n+j+1\choose k-n} \sum_{l=1}^j {j+1\choose l}(-1)^{l+1}l \\
 &=-{k-n+j+1\choose k-n} \sum_{l=1}^{j+1} {j+1\choose l}(-1)^l l\\
 &=0,
 \end{align*}
by using (\ref{poly-result}). Hence,
\[
W_{j+1}-W_j=(k-1)A_j.
\]
It remains to show that $A_j=\frac{k-n+1}{k-n+j+2}$, which will prove (\ref{eq:diff-Wj}).
By mimicking the calculation of $B_j$ we find
\[
A_j ={k-n+j+1\choose k-n}\sum_{l=1}^{j+1} {j+1\choose l}\frac{(-1)^{l+1} l}{k-n+l+1}=\frac{k-n+1}{k-n+j+2},
\]
where the second equality has been proven in Appendix \ref{app:Tj} (see  (\ref{diff-Tj}) and (\ref{claim})). This completes the proof.
\end{proof}
\subsection{Finiteness of $\gamma_j$ and $\beta_j$}
\label{app:finitess}
\begin{proof}
Recall the definition of $\gamma_j$ in (\ref{def-gamma-j}). We get
\begin{align}
|\gamma_j|&\leq j(j+1) {k-n+j+1\choose k-n} \sum_{r_1\geq 0,\ldots,r_{n-1}\geq 0} 
\frac{(|\vect{r}|)!}{r_1!\cdots  r_{n-1}!} \left(\frac{1}{k}\right)^{|\vect{r}|}
\left(\sum_{s=1}^{n-1}r^2_s +2\sum_{s=1}^{n-1}r_s+n-1\right)+n-1+2j\nonumber\\
&=j(j+1) {k-n+j+1\choose k-n} \Biggl[\sum_{s=1}^{n-1}\sum_{\vect{r}\geq 0} 
\frac{(|\vect{r}|)!}{r_1!\cdots  r_{n-1}!} \left(\frac{1}{k}\right)^{|\vect{r}|}(r^2_s +2r_s)
+(n-1)\sum_{\vect{r}\geq 0} 
\frac{(|\vect{r}|)!}{r_1!\cdots  r_{n-1}!} \left(\frac{1}{k}\right)^{|\vect{r}|}\Biggr]\nonumber\\&\quad+n-1+2j.
\label{app:B:eq1}
\end{align}
Since
\begin{align*}
\sum\limits_{r_1\geq 0,\ldots,r_{n-1}\geq 0} 
\frac{(r_1+\cdots+r_{n-1})!}{r_1!\cdots  r_{n-1}!} \left(\frac{1}{k}\right)^{\sum_{l=1}^{n-1} r_l}(r^2_s +2r_s)
\end{align*}
has the same value for $s=1,\ldots,n-1$, we may
replace $r^2_s+2r_s$ by $r_{n-1}^2+2r_{n-1}$. Using  (\ref{gen-formula}), (\ref{app:diff}), and (\ref{app:diff2})  in Appendix \ref{app:useful-formulas}, we may rewrite
(\ref{app:B:eq1}) as follows:
\begin{align*}
|\gamma_j|&\leq
(n-1)j(j+1) {k-n+j+1\choose k-n} \Bigg(\frac{k}{k-n+1} 
+\frac{2k}{(k-n+1)^2} + \frac{(k-n+3)k}{(k-n+1)^3}\Bigg) +n-1+2j,
\end{align*}
which is finite for $j=1,\ldots,n-2$ and  $k\geq n$.

Next, we address the finiteness of $\beta_j$, defined in  (\ref{def-beta-j}). To simplify notation, define $\rho_j\coloneqq j(j+1){k-n+j+1\choose k-n}$.
Then,
\begin{align}
|\beta_j|&\leq  
\rho_j\sum_{\vect{r}\geq 0}  \frac{(|\vect{r}|)!}{r_1! \cdots r_{n-1}!}
 \left(\frac{1}{k}\right)^{|\vect{r}|}\left(|\alpha_j(\vect r)| + \frac{j^2-j}{2}\right)+(n-2)j\nonumber\\
 &=\rho_j\sum_{\vect{r}\geq 0}  \frac{(|\vect{r}|)!}{r_1! \cdots r_{n-1}!}
 \left(\frac{1}{k}\right)^{|\vect{r}|}|\alpha_j(\vect r)| +\frac{\rho_j k}{k-n+1}\left(\frac{j^2-j}{2}\right) +(n-2)j,
\label{bound-beta-j}
 \end{align}
by using (\ref{gen-formula}) with $J=n-1$, $L=0$ and $z_i=1/k$. We are left with  proving that  the sum in (\ref{bound-beta-j}) is finite. By noting that $\alpha_j(\vect r)$, defined in
(\ref{def-alpha-j}), is a finite sum composed of the terms $r_i$ and $r_i r_l$, $i\not=l$, we conclude from (\ref{app:diff}) and (\ref{app:diff-cross}) applied with $J=n-1$, 
$L=0$, and $z_i=1/k$ that this sum is finite. This proves the finiteness of $\beta_j$.
\end{proof}
\section{Proof of Proposition 8.1}
\label{app:expYproof}
\begin{proof}
The proof uses Theorem 1 in \cite{Tweedie1983}. We will apply this theorem with the {\em  finite}  set $A_M=\{\vect x\in S: |\vect x|\leq M\}$ ($M<\infty$ will be defined later on)
and the Lyapunov function $V(\vect x)$ used in the proof of Proposition \ref{prop:drift} (cf. (\ref{def:V})). To this end, we need to check that the following conditions are fulfilled: there exist constants $M$, $c$, and $d$ such that
\begin{align}
\label{hyp}
\sum_{\vect y\in A^c_M} q(\vect x, \vect y) V(\vect y) -  V(\vect x)&\leq - c |\vect x| -d, \quad \forall \vect x\in A^c_M, \\
c|\vect x|+d&\geq 0,  \quad \forall \vect x\in A^c_M,
\label{hyp2}\\
V(\vect x)&\geq c|\vect x|+d, \quad \forall \vect x\in A^c_M,
\label{hyp3}
\end{align}
and 
\begin{align}
\label{hyp4}
\sup_{\vect x\in A_M}\sum_{\vect y\in A^c_M} q(\vect x, \vect y)V(\vect y)<\infty.
\end{align}
If these conditions hold, 
then\footnote{Theorem 1 in \cite{Tweedie1983} requires that $f(x)\geq 0$ for all $x$ in the  state-space. A glance at the proof of this
theorem shows that it is still true if $f(x)\geq 0$ for all $x\in A^c_M$ and $\int_{A_M} f(x)\pi(dx)$ is finite, thereby justifying (\ref{hyp2}).}  Theorem 1 in \cite{Tweedie1983} will imply that $\E[c|\vect{Y}|+d]<\infty$,  which in turn implies that $\E[|\vect{Y}|]<\infty$.

We have shown in Proposition \ref{prop:drift} that if
$b=-\frac{2(1-\alpha)}{n-2}$ with $0<\alpha<\frac{1}{n-1}$
 (and by setting $m=k-n\geq 1$), then for $j=1,\ldots,n-2$,
\begin{align}
\Delta_V(\vect x)=\left\{\begin{array}{ll}
\frac{-2m\alpha }{k} |\vect x|+\frac{n-1}{k}(1+\alpha (m+1))), \qquad \mbox{for $\vect x\in S-S^\star$,}\\
-\frac{2}{k}\left(\frac{(m+1)^2-\alpha(m^2 +mn+n-1)}{n-2}\sum_{i=2}^{j+1} \frac{1}{m+i}+m \alpha \right)|\vect x|+ \frac{\xi_j}{k},  \qquad\mbox{for $\vect x\in S^\star_j$,}
                                             \end{array} 
                                      \right.
\label{exp:drift}
\end{align}
where
\[
\xi_j:=2j \left( \frac{(m+1)^2-\alpha(m^2 +mn+n-1)}{n-2}\sum_{i=2}^{j+1} \frac{1}{m+i}+m\alpha \right)+\delta_j.
\]
In this case we know from (\ref{def:V-2}) that $V(\vect x)\geq 0$ for all $\vect{x}\in S$, which implies
that 
\begin{equation}
\label{prop:inq0}
\sum_{\vect y\in A_M^c} q(\vect x, \vect y) V(\vect y) -  V(\vect x)  \leq \sum_{\vect y\in S} q(\vect x, \vect y) V(\vect y) -  V(\vect x)  =\Delta_V(\vect x), \quad \forall \vect{x}\in S.
\end{equation}
The finiteness of the set $A_M$, the finiteness of  $V(\vect x)$ for any $\vect{x}\in S$ (see (\ref{exp:drift})), and (\ref{prop:inq0}) show that (\ref{hyp4}) is satisfied. 
On the other hand,  (\ref{exp:drift}) and (\ref{prop:inq0}) show that  (\ref{hyp}) holds if (for instance)
\begin{align}
c&= \frac{2}{k}\min\Bigg\{ m\alpha , \frac{(m+1)^2-\alpha(m^2 +mn+n-1)}{n-2}  \sum_{i=2}^{j+1} \frac{1}{m+i}+m \alpha,~ j=1,\ldots,n-2 \Bigg\}\label{def-c0}\\
d&= -\frac{1}{k}\max\left\{(n-1)(1+\alpha (m+1)), \xi_1,\ldots,\xi_{n-2}\right\}.\label{def-d}
\end{align}
We have already observed  in (\ref{proof:inq}) that $(m+1)^2-\alpha(m^2 +mn+n-1)>0$  for $m\geq 1$ and $\alpha<\frac{1}{n-1}$. Hence, 
\begin{equation}
c=\frac{2m\alpha}{k}=\frac{2(k-n)\alpha}{k}>0,
\label{def-c}
\end{equation} 
when $0<\alpha <\frac{1}{n-1}$. 

We are left with proving that one can find $M$ such that (\ref{hyp2}) and (\ref{hyp3}) hold for $c$ and $d$ given above.
The finiteness of $\delta_j$ (see Proposition \ref{prop:drift}) yields
the finiteness of $\xi_j$, which in turn yields the finiteness of $d$. Since $c>0$ and $d$ is finite, there exits $M_1$ such that $c|\vect x|+d \geq 0$
 for all $\vect x\in A^c_M$ with $M>M_1$. This shows that (\ref{hyp2}) is true for any $M>M_1$.
It remains to show that condition (\ref{hyp3}) holds.  We have
\begin{equation}
\label{exp:eq1}
V(\vect x)-c|\vect x|-d= \sum_{i=1}^{n-1} x_i^2 -\frac{2(1-\alpha)}{n-2}\sum_{1\leq i<l\leq n-1} x_i x_l - \frac{2(k-n)\alpha}{k}|\vect x|-d.
\end{equation}
Differentiating (\ref{exp:eq1}) w.r.t. $x_i$ and equating the result to zero, gives (Hint: $\frac{d}{dx_i}\sum_{1\leq i<l\leq n-1} x_i x_l =|\vect x|-x_i$ for $i=1,\ldots,n-1$)
\begin{equation}
\label{exp:eq2}
x_i(n-1-\alpha)-(1-\alpha)|\vect x| =\frac{(n-2)(k-n)\alpha}{k}, \quad i=1,\ldots,n-1.
\end{equation}
Summing up both sides of (\ref{exp:eq2}) for $i=1,\ldots,n-1$ yields
\begin{equation}
\label{exp:eq3}
|\vect x|=\frac{(k-n)(n-1)}{k},
\end{equation}
which, with the help of (\ref{exp:eq2}), yields
\begin{equation}
\label{exp:eq4}
x_i=\frac{k-n}{k}.
\end{equation}
This shows that $V(\vect x)-c|\vect x|-d$ has a unique local extremum, at point $\vect x_0=\left(\frac{k-n}{k}, \ldots, \frac{k-n}{k}\right)$.
The Hessian matrix 
\begin{align*}{\bf M}\coloneqq \left[\frac{\partial}{\partial x_i x_j} (V(\vect x)-c|\vect x|-d)\right]_{i,j}\end{align*}
is given by 
${\bf M}=2 {\bf N}$, where the matrix ${\bf N}$ has all its entries equal to  $-\frac{1-\alpha}{n-2}$ except the diagonal entries that are all equal to $1$.
Let us show that {\bf M}, or equivalently {\bf N}, is a positive definite matrix, which will ensure that $\vect x_0$ is a global minimum.
For any $\vect v=(v_1~\ldots~v_{n-1})\not=0$, we have
\begin{equation}
\label{hessian}
(n-2)\vect v\, {\bf N}\,\vect v^T=(n-1-\alpha)\sum_{i=1}^{n-1} v^2_i-(1-\alpha)|\vect v|^2.
\end{equation}
Take $\alpha=0$. Then, using a proof by induction on $n$, we obtain
\begin{equation}
\label{hessian2}
(n-2)\vect v\, {\bf N}\,\vect v^T=\sum_{1\leq i<l\leq n-1} (v_i-v_l)^2>0,
\end{equation}
for all $\vect v\not=0$. Because the mapping $\alpha \to \vect v\, {\bf N}\,\vect v^T$ in (\ref{hessian}) is continuous, (\ref{hessian2}) shows that there exists
$\alpha_0>0$ such that $\vect v{\bf N}\,\vect v^T>0$ for all $\alpha\in [0,\alpha_0)$, $\vect v\not=0$.
This shows that $V(\vect x)-c|\vect x|-d$ has a global minimum, located at 
$\vect x_0=\left(\frac{k-n}{k}, \ldots, \frac{k-n}{k}\right)$.
Hence, 
\begin{equation}
V(\vect x)-c|\vect x|-d\geq V(\vect x_0)-c|\vect x_0|-d =-\alpha (n-1)\left(\frac{k-n}{k}\right)^2 - d,
\end{equation}
after elementary algebra. From (\ref{def-d}) we get 
\[
-d\geq \frac{(n-1)(1+\alpha(m+1))}{k},
\]
which yields
\begin{equation}
V(\vect x_0)-c|\vect x_0|-d\geq -\alpha (n-1)\left(\frac{k-n}{k}\right)^2 +\frac{(n-1)(1+\alpha (m+1))}{k}.
\label{exp:eq3}
\end{equation}
When $\alpha =0$ the r.h.s. of (\ref{exp:eq3}) is strictly positive. Since it is a continuous function of $\alpha$,  there exists $\alpha_0>0$, such 
that $V(\vect x_0)-c|\vect x_0|-d>0$ for all $\alpha \in (0,\min(\alpha_0,1/(n-1))$.  This proves that
$V(\vect x)\geq c|\vect x|+d$ for all $\alpha \in (0,\min(\alpha_0,1/(n-1))$.

In summary,  conditions (\ref{hyp})-(\ref{hyp4}) are satisfied for  $b=-\frac{2(1-\alpha)}{n-2}$ with $\alpha \in (0,\min(\alpha_0,1/(n-1))$ and any $M>\max (M_1,n-1)$ 
(we need $M>n-1$ so that $(1,\ldots,1)\in A_M$). This concludes the proof.
\end{proof}

\end{document}